\newcommand{\C}{{\mathbb{C}}}
\newcommand{\Q}{{\mathbb{Q}}}
\newcommand{\Z}{{\mathbb{Z}}}
\newcommand{\half}{\frac12}
\newcommand{\N}{{\mathbb{N}}}
\definecolor{yellow}{rgb}{1,.7,0}
\definecolor{pkured}{rgb}{0.55,0,0}
\newcommand{\be}{\begin{equation*}}
	\newcommand{\ee}{\end{equation*}}
\newcommand{\beq}{\begin{equation}}
	\newcommand{\eeq}{\end{equation}}
\numberwithin{equation}{section}
\newtheorem{cor}{Corollary}[section]
\newtheorem{lem}[cor]{Lemma}
\newtheorem{prop}[cor]{Proposition}
\newtheorem{thm}[cor]{Theorem}
\newtheorem{defn}[cor]{Definition}
\newtheorem{ex}[cor]{Example}
\newtheorem{rmk}[cor]{Remark}
\numberwithin{figure}{section}
\newcounter{x}
\newcounter{y}
\newcounter{z}
\author{Zhiyong Wang}
\email{zwangco@connect.ust.hk}
\address{Zhiyong Wang, School of Mathematics and Statistics, Wuhan University, Wuhan}
\author{Chenglang Yang}
\email{yangcl@pku.edu.cn}
\address{Chenglang Yang, Institute for Math and AI, Wuhan University, Wuhan}
\title[Correlation Function of Self-Conjugate Partitions]
{Correlation Function of Self-Conjugate Partitions:
$q$-Difference Equation and Quasimodularity}
\begin{document}
	\maketitle

\begin{abstract}
	In this paper, we introduce the $\omega$-transform on the fermionic operators and apply it to study the uniform measure for the self-conjugate partitions.
	We first derive the $q$-difference equation which is satisfied by the $n$-point correlation function related to the uniform measure.
	As applications,
	we give explicit formulas for the one-point and two-point functions via Eisenstein series.
	Motivated by this, we prove the quasimodularity of the general $n$-point function.
	We also derive the limit shape of self-conjugate partitions under the Gibbs uniform measure and compare it to the leading asymptotics of the one-point function.
\end{abstract}

\section{Introduction}

The integer partitions are intensively studied by mathematicians,
including their relation to combinatorics,
representation theory,
number theory,
random geometry,
and mathematical physics
(see, for examples, \cite{And98,Mac,NO06,O05} and reference therein).
In 2000,
Bloch and Okounkov \cite{BO} studied the characters of the infinite wedge representation which are exhibited as certain correlation functions on the set of all integer partitions.
They derived $q$-difference equations for their correlation functions,
and obtained explicit formulas for correlation functions in terms of the theta functions and their derivatives.
Their result reveals a deep connection between correlation functions of partitions and quasimodularity (see also \cite{Zhu96}).
Special cases of their correlation functions were also studied by Dijkgraaf \cite{Dij95} earlier from the viewpoint of mirror symmetry for elliptic curves.
The explicit formulas and quasimodularity for Bloch-Okounkov's correlation functions were also proved by Zagier using a purely combinatorial method later \cite{Z16}.
Moreover,
Bloch and Okounkov's result and its generalizations have great applications in many areas including the limit behavior of random partitions \cite{O01,O05},
Gromov-Witten theory of elliptic curve \cite{Eng21,OP06},
intersection numbers on Hilbert schemes of points \cite{LQW},
and moduli spaces of Abelian differentials \cite{CMZ18,EO01,GM20}, etc.

In this paper,
we initiate the study of correlation functions of the self-conjugate partitions.
More precisely,
denote the set of all self-conjugate partitions as $\mathscr{P}^s$.
We are mainly interested in the following $n$-point function
\begin{align}\label{eqn:def G main}
	G(t_1,t_2,...,t_n)
	=\frac{1}{\sum\limits_{\lambda\in\mathscr{P}^s}q^{|\lambda|}}
	\cdot \sum\limits_{\lambda\in\mathscr{P}^s}\prod\limits_{j=1}^{n}\sum\limits_{i=1}^{\infty} t_j^{\lambda_i-i+\frac{1}{2}}q^{|\lambda|}.
\end{align}
From a probabilistic viewpoint, the $n$-point function above $G(t_1,t_2,...,t_n)$ can be regarded as a kind of moment generating function of the Gibbs uniform measure on the set of self-conjugate partitions.
It is well-known that all integer partitions label the basis of the charge zero infinite wedge space.
Thus, the correlation functions studied by Bloch and Okounkov can be represented as a trace on the infinite wedge space (see \cite{BO,O01}).
Then, the standard trace properties can be directly applied in their study.
For another example,
Wang \cite{W04} used a similar method to study the correlation functions of strict partitions,
and in this case, the set of strict partitions labels a basis of the twisted Fock space
(see also \cite{TW07}).
Thus, a difficulty in studying the correlation functions of self-conjugate
partitions arises because they only form a subset of a basis.
The main method to conquer the difficulty in this paper is the $\omega$-transform on the fermionic operators and the fermionic Fock space,
which is introduced in Section \ref{sec:main result}.
We will use the $\omega$-transform to study the $n$-point function $G(t_1,t_2,...,t_n)$ of the self-conjugate partitions defined in equation \eqref{eqn:def G main}.
We first derive the following $q$-difference equation.

\begin{thm}\label{main} The $n$-point function $G(t_1,t_2,...,t_n)$ satisfies the following $q$-difference equation
		\begin{align}\label{eqn:q-diff main}
			\begin{split}
			G(&q^{-2}t_1,t_2,\ldots,t_n)\\
			&=\sum\limits_{P^-,P^+ \subseteq \{2\ldots,n\}, \atop P^-\cap P^+=\emptyset}(-1)^{|P^-|-1}G\Big(t_1 \prod\limits_{i\in P^-}t_i^{-1}
			\cdot\prod_{j\in P^+}t_j,\cdots,\hat{t_i},\cdots,\hat{t_j},\cdots\Big),
			\end{split}
		\end{align}
	where the notation $\hat{\cdot}$ means that the corresponding term should be omitted.
	Parallel formula for $G(t_1,\ldots,q^{-2}t_j,\ldots,t_n)$ is achieved by exchanging the variables $t_1,t_2,\dots,t_n$.
\end{thm}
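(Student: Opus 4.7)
My plan is to realize the $n$-point function $G(t_1,\ldots,t_n)$ as a matrix element on the fermionic Fock space and then derive \eqref{eqn:q-diff main} as the consequence of commuting $q^H$ and $\E(t_1)$ past a self-conjugate state $\V$ built via the $\omega$-transform of Section~\ref{sec:main result}. Introducing the content operator
$$\E(t) \;:=\; \sum_{k \in \Z + \tfrac{1}{2}} t^k E_{kk} \;+\; \text{(regularization)},$$
which satisfies $\langle v_\lambda | \E(t) | v_\lambda \rangle = \sum_{i \geq 1} t^{\lambda_i - i + 1/2}$, I expect the $\omega$-transform to supply a state $\V$ isolating the self-conjugate partitions, so that
$$G(t_1,\ldots,t_n) \cdot \sum_{\lambda \in \mathscr{P}^s} q^{|\lambda|} \;=\; \langle V |\, \E(t_1)\E(t_2)\cdots \E(t_n)\, q^H\, \V.$$
Unlike the Bloch--Okounkov setting, I cannot invoke trace cyclicity because the sum is restricted to a subset of basis vectors; instead, the key is that $\V$ is (up to shifts) an eigenstate of the $\omega$-transformed Hamiltonian.

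The first step is the commutation between $q^H$ and $\E(t_1)$ through the $\omega$-twisted state. On the unrestricted Fock space $q^H$ commutes with $\E(t)$, so all of the interesting shift originates from pulling $q^H$ past $\V$. Because $\V$ is built symmetrically under the $k \leftrightarrow -k$ involution intrinsic to the $\omega$-transform (mirroring the self-conjugation $\lambda = \lambda'$), each $q^H$ that crosses $\V$ contributes twice to the shift of the leftmost $\E$: once directly, once through the $\omega$-mirror. This produces the doubled shift $t_1 \mapsto q^{-2} t_1$, which is the source of the exponent $-2$ in the theorem.

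The second step is a contraction-type identity for $\E(t_1)\prod_{j \geq 2}\E(t_j)$ expressed as a sum over pairings of the index $1$ with disjoint subsets of $\{2,\ldots,n\}$. Using Wick's theorem for the $\omega$-transformed fermion bilinears, each $j \in \{2,\ldots,n\}$ ends up in one of three states: uncontracted (contributing $t_j$ inside a reduced correlator), contracted in a \emph{minus} channel that fuses $\E(t_1)\E(t_j)$ into an $\E$ with variable $t_1 t_j^{-1}$, or contracted in a \emph{plus} channel that fuses them into $\E(t_1 t_j)$. The two channels arise from the two ways a creation operator and an annihilation operator from different $\E$'s can be contracted after the $\omega$-twist, and the disjointness $P^- \cap P^+ = \emptyset$ reflects that a given pair of operators supports at most one non-trivial contraction. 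The sign $(-1)^{|P^-|-1}$ records the Fermi parity of the contractions together with a global sign produced by the $\omega$-twist of the vacuum.

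The main obstacle will be this second step: tightly controlling the contraction calculus under the $\omega$-twist. The divergent constant terms in $\E(t)$, the normal-ordering conventions after $\omega$-conjugation, and the verification that any would-be \emph{mixed} double contraction of a single $\E(t_j)$ into both $P^-$ and $P^+$ actually cancels — all of these require careful bookkeeping. Once the operator identity is secured, re-identifying each reduced matrix element with a lower-variable $G$ yields the right-hand side of \eqref{eqn:q-diff main}, and the analogous formula for any other $t_j$ follows from the manifest symmetry of the construction under permutations of $(\E(t_1),\ldots,\E(t_n))$.
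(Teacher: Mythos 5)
There is a genuine gap: your proposal is a strategy outline in which the two load-bearing steps are asserted rather than established. First, you never construct the state $\V$ nor prove the property you need of it. The natural candidate is $\V=\sum_{\lambda\in\mathscr{P}^s}|\lambda\rangle=\prod_{m\geq0}\big(1+(-1)^m\psi_{m+\frac12}\psi^*_{-m-\frac12}\big)|0\rangle$, but this is \emph{not} an eigenstate of $H$, and the phrase ``eigenstate of the $\omega$-transformed Hamiltonian (up to shifts)'' is not a usable statement. What your scheme actually requires is a reflection identity of the form $\psi^*_k\,q^H\V=\pm q^{2k}\psi_{-k}\,q^H\V$ (and its mirror), which is where the factor $q^{-2}$ would really come from; this is provable but is precisely the content you would have to supply. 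The paper avoids the boundary-state formulation entirely: it keeps the sum $\sum_{\lambda}\langle\lambda|q^H\prod_jT(t_j)|\lambda^t\rangle$, splits off a single mode $\psi_k\psi_k^*$ from $T(t_1)$, inserts a resolution of the identity, and applies the $\omega$-transform of Lemmas \ref{omega} and \ref{omega1} \emph{twice} to cycle that single mode around the matrix element; the two passes through $q^H$ each contribute $q^{k}$, producing the shift $q^{2k}$ and hence $t_1\mapsto q^2t_1$.

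Second, the combinatorial structure $\sum_{P^-,P^+}(-1)^{|P^-|-1}$ does not come from Wick contractions of fermion bilinears, and framing it that way is likely to fail: a genuine Wick expansion of $\prod_j\E(t_j)$ between two Bogoliubov-type states produces a sum over pairings of \emph{all} indices (a Pfaffian-type structure), not a sum over two disjoint subsets attached to the distinguished index $1$. In the paper's proof the two ``channels'' arise from two separate single-mode commutation passes: moving $\psi_{-k}$ through $\prod_{j\geq2}T(t_j)$ via $[T(t),\psi_k]=t^k\psi_k$ produces the subsets $P^-$ with factors $t_i^{-k}$, and moving $\psi_k^*$ back through the surviving $T(t_j)$'s via $[T(t),\psi_k^*]=-t^k\psi_k^*$ produces the disjoint subsets $P^+$ with factors $t_j^{k}$; resumming over $k$ fuses $t_1^kq^{2k}\prod_{i\in P^-}t_i^{-k}\prod_{j\in P^+}t_j^{k}\,\psi_k^*\psi_k$ into $\tilde T\big(q^2t_1\prod t_i^{-1}\prod t_j\big)=-T\big(q^2t_1\prod t_i^{-1}\prod t_j\big)$, which supplies the final sign making the coefficient $(-1)^{|P^-|-1}$. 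You flag this step yourself as ``the main obstacle,'' and indeed without carrying out this calculus (including the meromorphic-continuation identity $\tilde T(t)=-T(t)$ and the regularization of the constant terms) the identity \eqref{eqn:q-diff main} is not established.
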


Following the spirit of Bloch and Okounkov \cite{BO},
the $q$-difference equation \eqref{eqn:q-diff main}, together with the analysis of singularities, uniquely specifies all these $n$-point functions.
We use this method to obtain explicit formulas for the one-point and two-point functions,
which helps us to establish the quasimodularity for these cases.
\begin{cor}
	\label{cor:main one}
	The one-point function $G(t)$ is given by
	\begin{align}\label{eqn:G1 main}
		G(t)&=q^{1/4}\prod_{m=1}^{\infty}\frac{(1-q^{2m})^2}{(1+q^{2m-1})^2}\cdot\frac{\Theta_3(t;q)}{\Theta_1(t;q)},
	\end{align}
	where $\Theta_1(t;q):=\sum_{n\in \Z}(-1)^nq^{(n+\frac{1}{2})^2}t^{n+\frac{1}{2}}$ and
	$\Theta_3(t;q):=\sum_{n\in \Z}q^{n^2}t^{n}$ are the classic theta functions.
	Moreover,
	by virtue of the Eisenstein series,
	we have
	\begin{align}\label{eqn:G as Gl}
		G(t)=\frac{1}{2\pi i z}\exp\Bigg(\sum\limits_{\ell\in 2\Z_+} 2\left(G_\ell(\tau)- G^{(1,1)}_\ell(\tau)\right)\frac{(2\pi i z)^\ell}{\ell!}\Bigg),
	\end{align}
	where $q=e^{\pi i \tau}, t=e^{2\pi i z}$,
	$G_\ell(\tau)$ is the standard weight $\ell$ Eisenstein series for $\Gamma(1)=SL_2(\mathbb{Z})$,
	and $G^{(1,1)}_\ell(\tau)$ is the weight $\ell$ Eisenstein series for the congruence subgroup $\Gamma(2)$ with index vector $(1,1) \in \mathbb{Z}_2\times\mathbb{Z}_2$.
	
	The two-point function $G(t_1,t_2)$ is given by
	\begin{align}\label{eqn:G2 main}
		\begin{split}
		G(t_1,t_2)&=q^{1/4}\prod_{m=1}^{\infty}\frac{(1-q^{2m})^2}{(1+q^{2m-1})^2}
		\cdot\bigg[\frac{\Theta^{'}_3(t_1t_2;q)}{\Theta_1(t_1t_2;q)}-\\
		&\qquad\frac{\Theta^{'}_1(t_1;q)}{\Theta_1(t_1;q)}\cdot\frac{\Theta_3(t_1/t_2;q)}{\Theta_1(t_1/t_2;q)}-\frac{\Theta^{'}_1(t_2;q)}{\Theta_1(t_2;q)}\cdot\frac{\Theta_3(t_2/t_1;q)}{\Theta_1(t_2/t_1;q)}\bigg],
		\end{split}
	\end{align}
	where $\Theta^{'}_1(t;q):=t\frac{\partial}{\partial t}\Theta_1(t;q)$ and $\Theta^{'}_3(t;q)$ is defined similarly.
	See Section \ref{sec:app} for more details.
\end{cor}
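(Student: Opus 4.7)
I plan to follow the Bloch--Okounkov template: use Theorem~\ref{main} as a $q$-difference equation, combine it with an analysis of singularities to pin the solution down up to a scalar, and then read off quasimodularity from the resulting theta-function formula.

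\textbf{One-point function.} Specializing Theorem~\ref{main} to $n=1$ leaves a single admissible pair $(P^-,P^+)=(\emptyset,\emptyset)$ with sign $(-1)^{-1}=-1$, producing the quasi-periodicity
\[
G(q^{-2}t) = -G(t).
\]
From \eqref{eqn:def G main}, the only source of a pole is the tail $\sum_{i>\ell(\lambda)}t^{-i+1/2}$ where $\lambda_i=0$, which gives a simple pole at $t=1$; combined with the quasi-periodicity this yields simple poles at every $t=q^{2k}$, $k\in\Z$. By classical elliptic function theory, any meromorphic function with exactly this quasi-periodicity and pole pattern is unique up to scalar. I then verify directly, using $\Theta_1(q^{-2}t;q)=-tq^{-1}\Theta_1(t;q)$ and $\Theta_3(q^{-2}t;q)=tq^{-1}\Theta_3(t;q)$, that the right-hand side of \eqref{eqn:G1 main} satisfies the same equation and has only these poles, and fix the scalar by matching the residue at $t=1$ (or the constant term of $G$).

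\textbf{Two-point function.} For $n=2$, Theorem~\ref{main} collapses to
\[
G(q^{-2}t_1,t_2) + G(t_1,t_2) = G(t_1/t_2) - G(t_1t_2),
\]
plus the symmetric relation in $t_2$. Substituting the closed form \eqref{eqn:G1 main} on the right reduces the problem to producing a meromorphic function of $(t_1,t_2)$ which solves this inhomogeneous system and whose poles lie only on the diagonals $t_1=q^{2k}t_2^{\pm1}$ and on the coordinate lines $t_j=q^{2k}$. I verify that the combination in \eqref{eqn:G2 main}, built from the one-point block $\Theta_3/\Theta_1$ and the logarithmic derivatives $\Theta'_1/\Theta_1$ and $\Theta'_3/\Theta_1$, obeys both $q$-difference equations by direct transformation of the $\Theta$'s, and then match residues along each pole locus against those of $G(t_1,t_2)$ read off from the definition. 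The difference is entire and doubly quasi-periodic in each variable, hence constant, and is set to zero by a normalisation check (e.g.\ a $t_1\to 0$ or $t_1\to\infty$ limit).

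\textbf{Quasimodularity.} With $q=e^{\pi i\tau}$ and $t=e^{2\pi iz}$, the simple zero of $\Theta_1(t;q)$ at $t=1$ gives $G(t)$ a simple pole at $z=0$, so $2\pi iz\cdot G(t)$ is analytic at $z=0$. Taking logarithms in \eqref{eqn:G1 main} and expanding with the Jacobi triple products
\[
\Theta_3(t;q)=\prod_{m\ge1}(1-q^{2m})(1+q^{2m-1}t)(1+q^{2m-1}t^{-1})
\]
and the analogous product for $\Theta_1$, each factor $\log(1\mp q^{a}t^{\pm1})$ contributes a Lambert-type $z$-series whose coefficients reassemble, by classical identities, into $2G_\ell(\tau)$ from the $\Theta_1$ factors (integer lattice, $\Gamma(1)$) and $2G^{(1,1)}_\ell(\tau)$ from the $\Theta_3$ factors (half-integer shift $2m-1$, giving the $\Gamma(2)$ index $(1,1)$). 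Their difference gives the sum inside the exponential in \eqref{eqn:G as Gl}; the $q^{1/4}\prod(1-q^{2m})^2/(1+q^{2m-1})^2$ prefactor reconciles with the $z$-independent pieces, and quasimodularity follows because each $G_\ell$, $G^{(1,1)}_\ell$ is quasimodular. The hardest step is the singularity analysis for the two-point function: unlike the one-variable case, where quasi-periodicity plus a single pole per fundamental domain forces uniqueness, here one must simultaneously match residues on the diagonal and anti-diagonal pole families against the inhomogeneous term $G(t_1/t_2)-G(t_1t_2)$, and confirming that the three-term combination in \eqref{eqn:G2 main} provides exactly the right bookkeeping---with no further elliptic correction---is the technical heart of the corollary.
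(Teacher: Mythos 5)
Your proposal follows essentially the same route as the paper: derive $G(q^{-2}t)=-G(t)$ (resp.\ the inhomogeneous two-variable relation $G(q^{-2}t_1,t_2)+G(t_1,t_2)=G(t_1/t_2)-G(t_1t_2)$), identify the singular loci and residues, conclude by uniqueness of anti-periodic meromorphic functions with prescribed poles, and then obtain quasimodularity by logarithmically expanding the triple products into Lambert series that reassemble into $G_\ell$ and $G^{(1,1)}_\ell$. The one step you gloss over that the paper treats at length is showing that $G$ is genuinely meromorphic (not merely a formal series in $\mathbb{C}(t^{1/2})[\![q]\!]$) on an annulus wide enough that its $q^{2\mathbb{Z}}$-translates cover $\mathbb{C}^*$; this rests on the convergence estimate using $\|\mathfrak{S}_{\pm}(\lambda)\|=|\lambda|/2$ for self-conjugate $\lambda$, which yields exactly the annulus $|q|^{2-\epsilon}<|t|<|q|^{-2+\epsilon}$ required before the continuation and elliptic-function uniqueness arguments can be invoked.
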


The quasimodularity of Bloch and Okounkov's correlation functions on the set of all integer partitions \cite{BO} directly follows from their analysis on characters of the infinite wedge representation,
as well as their explicit formulas in terms of theta function.
For special cases, see also \cite{Dij95} and \cite{KZ95}.
A purely combinatorial proof of the quasimodularity was obtained by Zagier \cite{Z16},
who also pointed out that this quasimodularity should hold for more functions on the set of all integer partitions.
For more details and generalizations, see also \cite{CMZ18,EOP,GM20,HIL,I21,TW07} and references therein.
Motivated by the results above,
and Corollary \ref{cor:main one} for explicit formulas of the one-point and two-point functions,
we prove the following theorem about the quasimodularity for the general $n$-point function $G(t_1,t_2,\cdots,t_n)$ of self-conjugate partitions studied in this paper.
\begin{thm}\label{thm:main quasimod}
	Let $t_i=e^{2\pi i z_i}, i=1,2,\cdots,n$,
	and $q=e^{\pi i \tau}$.
	Expand the $n$-point function $G(t_1,t_2,\cdots,t_n)$ for the self-conjugate partitions as
	\begin{align}
		G(t_1,t_2,\cdots,t_n)
		=\sum\limits_{\ell_1,\ell_2,\cdots,\ell_n\geq0}\langle Q_{\ell_1}Q_{\ell_2}\cdots Q_{\ell_n} \rangle^s_q
		\cdot \prod_{j=1}^n (2\pi i z_j)^{\ell_j-1}.
	\end{align}
	Then, for any non-negative integers $\ell_1,...,\ell_n$,
	$$\langle Q_{\ell_1}Q_{\ell_2}\cdots Q_{\ell_n} \rangle^s_q$$
	is a quasimodular form of weight $\sum_{j=1}^n \ell_j$ for the congruence subgroup $\Gamma(2)$.
\end{thm}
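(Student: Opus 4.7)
The plan is to adapt Zagier's combinatorial proof \cite{Z16} of quasimodularity for the Bloch--Okounkov correlators to the self-conjugate setting; the new inputs are the restriction of the sum to $\mathscr{P}^s$ and the resulting drop of level from $\mathrm{SL}_2(\mathbb{Z})$ to $\Gamma(2)$, consistent with the level-two ingredients $\Theta_1$, $\Theta_3$, and $G^{(1,1)}_\ell$ already appearing in Corollary \ref{cor:main one}. To begin, one unpacks \eqref{eqn:def G main}: writing $t_j = e^{2\pi i z_j}$ and Taylor-expanding each inner sum $\sum_i t_j^{\lambda_i - i + 1/2}$, after the usual regularization, one identifies $\langle Q_{\ell_1}\cdots Q_{\ell_n}\rangle^s_q$ as the averaged correlator of shifted symmetric functions $Q_{\ell}(\lambda) = \sum_i (\lambda_i - i + \tfrac12)^{\ell-1}/(\ell-1)!$ evaluated on $\mathscr{P}^s$. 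The theorem thereby reduces to the following: every polynomial in the $Q_\ell$, averaged against the Gibbs uniform measure on self-conjugate partitions, has quasimodular $q$-expansion of weight $\sum_j \ell_j$ for $\Gamma(2)$.

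Next, I would pass to the bijective picture identifying self-conjugate partitions with strict partitions into distinct odd parts: the principal hook lengths $h_k(\lambda) = 2(\lambda_k - k)+1$ for $1 \le k \le d(\lambda)$ are distinct positive odd integers summing to $|\lambda|$. Self-conjugacy forces the content multiset $\{\lambda_i - i + \tfrac12\}_{i \ge 1}$ to be symmetric under the reflection $c \mapsto -c$ (modulo a ``vacuum'' tail of large negative half-integers), so each $Q_\ell(\lambda)$ reduces to an explicit symmetric function in the $h_k/2$. The normalized generating series then becomes a normalized sum over strict partitions into distinct odd parts, whose partition function $\sum_{\lambda\in\mathscr{P}^s} q^{|\lambda|} = \prod_{k\ge 1}(1+q^{2k-1})$ is already a level-two object.

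With this reformulation I would run Zagier's inductive machinery on $n$ and on the total weight. The strategy is to decompose the $n$-fold sum over hook tuples into (i) contributions with all indices distinct, which resum to $(n-1)$-fold correlators multiplied by an atomic one-variable generating series, and (ii) ``diagonal'' contributions where two indices coincide or are paired by the reflection $c \mapsto -c$, which produce strictly lower-order correlators. The atomic one-variable pieces and their derivatives lie, by Corollary \ref{cor:main one}, in a finite-dimensional subspace of the quasimodular algebra for $\Gamma(2)$ that is stable under the Ramanujan derivation $q\frac{d}{dq}$; combined with the inductive hypothesis this closes the proof and yields the correct weight count $\sum_j \ell_j$.

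The main obstacle, I expect, lies in the last step in controlling the interplay of the two theta functions $\Theta_1$ and $\Theta_3$ forced by the self-conjugacy reflection: the cross-terms $\Theta_3(t_it_j;q)/\Theta_1(t_it_j;q)$ and $\Theta_3(t_i/t_j;q)/\Theta_1(t_i/t_j;q)$ already visible in \eqref{eqn:G2 main} must be tracked through the recursion, and every Laurent coefficient around $z_i = z_j = 0$ has to be verified to remain in the quasimodular algebra for $\Gamma(2)$ of the expected weight. This careful bookkeeping, rather than any single hard identity, seems to be where the self-conjugate case genuinely diverges from the framework of \cite{Z16}, and it is where I would anticipate spending the bulk of the technical work.
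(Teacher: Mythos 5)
Your reformulation via the diagonal Frobenius/hook coordinates is the right starting point, but the proposal stops exactly where the proof has to happen: the ``Zagier-style inductive machinery'' with its distinct-index versus diagonal decomposition and the tracking of $\Theta_3/\Theta_1$ cross-terms through the recursion is never actually carried out, and you yourself flag it as the main obstacle. As written, the argument that every Laurent coefficient of the recursion stays in the weight-graded quasimodular algebra for $\Gamma(2)$ is an assertion, not a proof, so there is a genuine gap at the core inductive step.

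What you are missing is that the self-conjugate case admits a decisive simplification that makes any induction unnecessary. Writing $Q_\ell(\lambda)=P_{\ell-1}(\lambda)/(\ell-1)!+\beta_\ell$ with $P_\ell(\lambda)=\sum_i\big[(m_i+\tfrac12)^\ell-(-m_i-\tfrac12)^\ell\big]$ in the diagonal Frobenius coordinates, the Gibbs measure on $\mathscr{P}^s$ is a \emph{product} of independent Bernoulli measures on the events ``$2j+1$ is a diagonal hook,'' $j\ge 0$. Hence the $q$-bracket of the full exponential generating function factorizes exactly:
\begin{equation*}
\Big\langle\exp\Big(\sum_\ell s_\ell P_{\ell-1}\Big)\Big\rangle_q^s
=\frac{1}{\prod_{j\ge0}(1+q^{2j+1})}\prod_{j\ge0}\Big(1+q^{2j+1}\exp\Big(\sum_{\ell\in2\Z_+}2s_\ell\big(j+\tfrac12\big)^{\ell-1}\Big)\Big).
\end{equation*}
Taking logarithms turns this into an explicit Lambert-type double series whose coefficient of $\prod_i s_{2\mu_i}$ is, up to an explicit constant, $\big(\tfrac{1}{\pi i}\partial_\tau\big)^{l(\mu)-1}\mathbb{G}_{2|\mu|-2l(\mu)+2}(\tau)$ with $\mathbb{G}_\ell=(1-2^{\ell-1})G_\ell+G^{(1,1)}_\ell$ quasimodular of weight $\ell$ for $\Gamma(2)$; the weight count $2|\mu|-2l(\mu)+2+2(l(\mu)-1)=2|\mu|$ and exponentiation in the graded ring of quasimodular forms then finish the proof. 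No recursion on $n$, no cross-terms, and no appeal to the one- and two-point formulas is needed. I would recommend replacing your inductive plan by this direct computation: it is the step your own bijective reformulation already puts within reach, and it turns the ``bulk of the technical work'' you anticipate into a two-line identity.
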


We consider another generating series for the $q$-bracket $\langle\cdot\rangle^s_q$ of shifted symmetric functions.
The expansion coefficients of this generating series can be explicitly represented in terms of Eisenstein series (see equations \eqref{eqn:eisenl=1} and \eqref{eqn:eisen}),
which proves Theorem \ref{thm:main quasimod}.
The cases of $n=1$ and $n=2$ of the theorem above also directly follow from the explicit formulas in Corollary \ref{cor:main one} (see Remark \ref{rek:onetwo} for more details).
In general,
the $n$-point function is related to certain moments of Gibbs uniform measure on the set of self-conjugate partitions,
thus the quasimodularity should be connected to the shape fluctuations of self-conjugate partitions (see the discussions in Appendix A.2 of \cite{EO01}).

The study of the limit behavior of large partitions has a long history.
For instance,
Erd\"os and Lehner obtained the distribution of the largest part of a large partition under the uniform measure \cite{EL}.
Vershik proved the famous limit shape theorem for several measures on partitions \cite{V96}.
For more applications of random partitions,
see also \cite{BDJ,BOO,FVY,F93,O01,O05,TW} and reference therein.
In this paper,
we derive the limit shape of self-conjugate partitions under the Gibbs uniform measure.
\begin{prop}\label{prop:limit shape}
	When $q$ goes to $1^-$,
	the limit shape of the rescaled Young diagram of self-conjugate partitions under the measure $\mathfrak{M}_q(\cdot)$ is described by the graph of the following function
	\begin{align}\label{eqn:limit shape main}
		f(x)=\frac{\sqrt{6}}{\pi}\log\big(1-\exp(-\pi x/\sqrt{6})\big).
	\end{align}
	More precisely,
	if we use the graph of the function $f_{\lambda}(x)$ to represent the Young diagram of $\lambda$
	and denote its rescaled version by
	\begin{align*}
		\tilde{f}_\lambda(x):=
		4\sqrt{6}r \cdot f_\lambda(x/4\sqrt{6}r),
	\end{align*}
	where $r=-\frac{1}{2\pi}\cdot \log q$.
	Then,
	for any fixed $x>0$ and $\epsilon>0$,
	we have the following limit
	\begin{align*}
		\lim_{q\rightarrow1^-}\ 
		\mathfrak{M}_q\big(\big\{\lambda\big|\ |\tilde{f}_{\lambda}(x)-f(x)|<\epsilon\big\}\big)=1.
	\end{align*}
\end{prop}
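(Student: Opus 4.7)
The plan is to reduce the computation of the limit shape to a law of large numbers for a Bernoulli point process, via the classical bijection between self-conjugate partitions and partitions into distinct odd parts. Specifically, if $\lambda \in \mathscr{P}^s$ has Durfee square of size $d$, then the diagonal hook lengths $h_i(\lambda) := 2\lambda_i - 2i + 1$, for $i = 1, \ldots, d$, form a set of distinct positive odd integers with $\sum_i h_i = |\lambda|$, and this map is a bijection onto the set of partitions $\mu$ into distinct odd parts. Consequently, the Gibbs uniform measure $\mathfrak{M}_q$ on $\mathscr{P}^s$ transports to the measure $\mu \mapsto q^{|\mu|} \prod_{k \ge 0}(1 + q^{2k+1})^{-1}$, which has a product structure: the indicator $X_m := \mathbf{1}(m \in \mu)$ is, for each odd $m \ge 1$, an independent Bernoulli variable with success probability $p_m = q^m/(1+q^m) = 1/(1+e^{2\pi r m})$.

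The second step is to analyze the counting function $N(u) := \sum_{m \ge u,\, m\text{ odd}} X_m$ asymptotically as $r \to 0^+$. A Riemann-sum approximation gives
\begin{equation*}
    \mathbb{E}[N(u)] = \sum_{\substack{m \ge u \\ m\text{ odd}}} \frac{1}{1+e^{2\pi r m}} = \frac{1}{4\pi r}\log\bigl(1 + e^{-2\pi r u}\bigr) + O(1),
\end{equation*}
and since $N(u)$ is a sum of independent Bernoullis, $\mathrm{Var}(N(u)) \le \mathbb{E}[N(u)] = O(1/r)$. Hence Chebyshev's inequality yields $N(u)/\mathbb{E}[N(u)] \to 1$ in probability at any fixed rescaled location $u = \xi/(2\pi r)$. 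Interpreting $N(u)$ as the level-crossing of the decreasing function $j \mapsto h_j$ (the $j$-th largest part of $\mu$), this concentration inverts to give, in probability,
\begin{equation*}
    h_{\lceil j \rceil} \;\sim\; -\frac{1}{2\pi r}\log\bigl(e^{4\pi r j} - 1\bigr)
\end{equation*}
at any fixed rescaled rank $j = x/(4\sqrt{6}\,r)$, $x > 0$.

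To conclude, I would insert the bijection formula $\lambda_j = (h_j + 2j - 1)/2$, substitute the rescaling, and simplify via $\log A - \log(A - 1) = -\log(1 - 1/A)$ with $A = e^{\pi x/\sqrt{6}}$, obtaining
\begin{equation*}
    4\sqrt{6}\,r \cdot \lambda_{\lceil x/(4\sqrt{6}\,r) \rceil} \;\longrightarrow\; -\frac{\sqrt{6}}{\pi}\log\bigl(1 - e^{-\pi x/\sqrt{6}}\bigr)
\end{equation*}
in probability, matching the stated limit shape function $f(x)$ (with the natural sign convention by which the Young diagram is identified with the graph of $f_\lambda$ so that $f_\lambda \le 0$). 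The main technical step is the inversion in the second paragraph: translating the pointwise concentration of $N(u)$ into pointwise concentration of the order statistic $h_j$ requires a short monotonicity/sandwich argument, applying the Chebyshev estimate simultaneously at $u = h(j) \pm \epsilon/r$ and invoking the strict monotonicity of the limiting inverse. This is routine for the Fermi-type density appearing here, but it is the one place where some care is needed; everything else is a direct computation once the product structure from the distinct-odd-parts bijection is in hand.
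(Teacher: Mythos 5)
Your proposal is correct and follows essentially the same route as the paper: both exploit the product (independent Bernoulli) structure of $\mathfrak{M}_q$ over the diagonal Frobenius hooks $2m_i+1$ (your distinct-odd-parts bijection is the paper's Frobenius-coordinate description), compute the mean of the associated counting function by a Riemann sum, and concentrate it via Chebyshev, the only difference being that you invert the counting function to the order statistics $h_j$ and read off $\lambda_j=(h_j+2j-1)/2$, whereas the paper keeps the counting function $g_\lambda$ and recovers the profile $f_\lambda$ by a $90^\circ$ rotation. The one point to make explicit is that your formula for $\lambda_j$ only covers ranks $j\le r(\lambda)$, i.e.\ rescaled $x\le\sqrt6\log 2/\pi$; for larger $x$ you must, as the paper does, invoke self-conjugacy and the invariance of the graph of $f$ under reflection in the line $y=-x$ to extend the limit shape past the Durfee square.
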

We also verify that the leading asymptotics of the one-point function $G(t)$ given by equation \eqref{eqn:G as Gl} is compatible with the typical self-conjugate partition described by the limit shape in equation \eqref{eqn:limit shape main},
as what Eskin and Okounkov did in the Appendix of \cite{EO01}.
See Corollary \ref{cor:aysm} for more details.

The rest of this paper is organized as follows.
In Section \ref{sec:pre},
we review the notions of partitions and fermions.
In Section \ref{sec:main result},
we introduce the $\omega$-transform on the fermionic operators and the fermionic Fock space.
By virtue of that,
we prove Theorem \ref{main},
which gives the $q$-difference equation satisfied by $n$-point function $G(t_1,t_2,...,t_n)$.
As applications,
we obtain the explicit formulas of the one-point and two-point functions in Section \ref{sec:app},
which proves Corollary \ref{cor:main one}.
In Section \ref{sec:quasimo}, we prove Theorem \ref{thm:main quasimod}, which establishes quasimodularity for the general $n$-point function.
Finally in Section \ref{sec:limit shape},
we derive the limit shape of self-conjugate partitions and prove Proposition \ref{prop:limit shape}.
We also verify the compatibility of the leading asymptotics of the one-point function and the limit shape in the same section.

\section{Preliminaries}
\label{sec:pre}
In this section,
we review the notions of partition and the fermionic Fock space.
We recommend the books \cite{And98,DJM,Mac} for interested readers.
\subsection{Partitions} A partition of a non-negative integer $n$ is a sequence 
\[\lambda=(\lambda_1,\lambda_2,\ldots,\lambda_l)\]
of positive integers satisfying the non-increasing condition
$\lambda_1\geq\lambda_2\geq\cdots\geq\lambda_l$
and
\[\sum\limits_{i=1}^{l}\lambda_i=n.\]
The length and size of this partition $\lambda$ are $l(\lambda)=l$ and $|\lambda|=n$ respectively.
For $i>l(\lambda)$,
we use the notation $\lambda_i=0$ for convenience.
Each partition is uniquely represented by its Young diagram.
The Young diagram of $\lambda$ has $\lambda_i$ boxes in the $i$-th row.
For example,
the Figure \ref{fig:yd} is the Young diagram of the partition $(8,4,4,2,1)$.
\begin{figure}[htbp]\label{fig:yd}
	\ydiagram{8,4,4,2,1}
	\caption{The Young diagram corresponding to $(8,4,4,2,1)$}
\end{figure}

The conjugation $\lambda^t$ of a partition $\lambda$ is obtained by reflection along the main diagonal of the Young diagram corresponding to this partition.
More precisely,
$\lambda^t$ is the partition of length $\lambda_1$ defined by
\begin{align*}
	\lambda^t_k:=\#\{i|\lambda_i\geq k\},
	\quad 1\leq k\leq \lambda_1.
\end{align*}
For example,
the conjugation of the partition $(8,4,4,2,1)$ in Figure \ref{fig:yd} is $(5,4,3,3,1,1,1,1)$.
The partition $\lambda$ is called self-conjugate if $\lambda=\lambda^t$.
Intuitively,
if $\lambda$ is self-conjugate,
then the Young diagram corresponding to $\lambda$ is invariant under reflection across the main diagonal. 
We denote by $\mathscr{P}$ and $\mathscr{P}^s$ the set of all partitions and the set of all self-conjugate partitions respectively.

The Frobenius notation of a partition $\lambda$ is defined by
\begin{align*}
	\lambda=(m_1,...,m_r|n_1,...,n_r),
\end{align*}
where $r$ is the length of the main diagonal of the Young diagram corresponding to $\lambda$ and
\begin{align*}
	m_i=\lambda_i-i,
	\qquad\qquad
	n_i=\lambda^t_i-i,
	\qquad 1\leq i \leq r.
\end{align*}
We call $r=r(\lambda)$ the Frobenius length and $(m_i|n_i)$ the Frobenius coordinates of this partition.
One can notice that,
a partition $\lambda$ is self-conjugate if and only if $m_i=n_i$ for all $i=1,...,r(\lambda)$.

\subsection{Uniform measure for self-conjugate partitions}
\label{sec:def m}
In this paper,
we shall study the following measure
\begin{align}\label{eqn:def measure}
	\mathfrak{M}_q(\lambda):=\frac{q^{|\lambda|}}
	{\prod_{i=1}^{\infty}(1+q^{2i-1})},
	\qquad q\in [0,1),
\end{align}
on the  set of self-conjugate partitions $\mathscr{P}^s$.
It is called the Gibbs uniform measure in \cite{FVY,V96}.
Notice that the probability of a partition under this measure $\mathfrak{M}_q(\cdot)$ only depends on the size of this partition.
Thus,
the restriction of this measure $\mathfrak{M}_q(\cdot)$ to the set $\mathscr{P}^s(n)$,
which consists of self-conjugate partitions of size $n$,
is exactly the usual uniform measure on the set $\mathscr{P}^s(n)$.

The normalization factor for the measure $\mathfrak{M}_q(\cdot)$ in equation \eqref{eqn:def measure} comes from
\begin{align}\label{eqn:Z factor}
	Z_s(q)
	:=\sum\limits_{\lambda\in\mathscr{P}^s}q^{|\lambda|}
	=\prod\limits_{i=1}^{\infty}(1+q^{2i-1}),
\end{align}
which makes $\mathfrak{M}_q(\cdot)$ a probability measure.
Moreover,
it is obvious that equation \eqref{eqn:Z factor} is an analytic function when $|q|<1$.
The equation \eqref{eqn:Z factor} follows from the fact that
a self-conjugate partition $\lambda$ is uniquely determined by its Frobenius coordinates $m_i=\lambda_i-i, i=1,...,r(\lambda)$.

For an arbitrary function $f:\mathscr{P}^s \rightarrow \mathbb{C}$,
we study the $q$-bracket of $f$ related to the measure $\mathfrak{M}_q(\cdot)$ as
$$\langle f \rangle_q^s:
=\sum\limits_{\lambda\in\mathscr{P}^s}f(\lambda) \mathfrak{M}_q(\lambda)
=\frac{\sum\limits_{\lambda\in\mathscr{P}^s}f(\lambda)q^{|\lambda|}}{\sum\limits_{\lambda\in\mathscr{P}^s}q^{|\lambda|}}
\in\mathbb{C}[\![q]\!].$$
Here $\langle f\rangle_q$ is regarded as a formal power series of $q$. 
For a large class of $f$,
this $q$-bracket is expected to exhibit interesting properties such as the analyticity and modularity (see, for examples, Section 9 in \cite{Z16}).


In this paper,
we concentrate on the study of the following $n$-point function related to the measure $\mathfrak{M}_q(\cdot)$:
\begin{align}\label{eqn:G as <>}
	\begin{split}
	G(t_1,t_2,\ldots,t_n)
	:=& \left\langle \prod\limits_{j=1}^{n}\left(\sum\limits_{i=1}^{\infty} t_j^{\lambda_i-i+\frac{1}{2}}\right)\right\rangle_{q}^s\\
	=&\frac{1}{\sum\limits_{\lambda\in\mathscr{P}^s}q^{|\lambda|}}
	\cdot \sum\limits_{\lambda\in\mathscr{P}^s}\prod\limits_{j=1}^{n}\sum\limits_{i=1}^{\infty} t_j^{\lambda_i-i+\frac{1}{2}}q^{|\lambda|}.
	\end{split}
\end{align}
For each given partition $\lambda\in\mathscr{P}^s$,
the series
\begin{align}\label{eqn:def prodt}
	\prod\limits_{j=1}^{n}\left(\sum\limits_{i=1}^{\infty} t_j^{\lambda_i-i+\frac{1}{2}}\right)
\end{align}
is a Laurent series in $t_j^{1/2}, j=1,2,...,n$.
Thus apparently,
the $n$-point function $G(t_1,t_2,\ldots,t_n)$ is an element in the ring
$$\mathbb{C}[\![t_j^{-1/2},t_j^{1/2}][\![q]\!].$$
With a more detailed analysis,
one can notice that the series \eqref{eqn:def prodt} is convergent provided $|t_j|>1, j=1,2,...,n$ and then it is actually a rational function in $t_j^{1/2}, j=1,2,...,n$.
Consequently,
one can regard the $n$-point function $G(t_1,t_2,\ldots,t_n)$ as a series in the ring
\begin{align*}
	\mathbb{C}(t_1^{1/2},\dots,t_n^{1/2})[\![q]\!].
\end{align*}
It is equivalent to say,
for each power $q^k$ for $k\in\mathbb{Z}_{\geq0}$,
the coefficient of $q^k$ in $G(t_1,t_2,\ldots,t_n)$ is a rational function, thus it is also a meromorphic function in the whole complex plane for every $t_j^{1/2}$.
The discussion above will be clearer after using the fermionic Fock space and the normal ordering (see subsection \ref{sec:q-diff} for more details).
This will be useful in deriving the $q$-difference equation for $n$-point function.

\subsection{Fermionic Fock space}
In this subsection, we recall the free fermions and the semi-infinite wedge construction of the fermionic Fock space.
We mainly follow the notations in \cite{DJM,O01}.

Let $S=\{s_1>s_2>\cdots\}$ be a subset of  $\mathbb{Z}+\frac{1}{2}$ consisting of half integers, and if both such subsets  
$$S_{+}:=S\backslash\{\mathbb{Z}_{\leq 0}-\frac{1}{2}\}, \qquad S_{-}:=\{\mathbb{Z}_{\leq 0}-\frac{1}{2}\}\backslash S$$  
are finite, then we say $S$ is admissible.
For a given admissible subset $S$, 
a vector associated with $S$ in the semi-infinite wedge space $\Lambda^\frac{\infty}{2}V$ is constructed by
$$v_S=\underline{s_1}\wedge\underline{s_2}\wedge\cdots.$$
The vector associated with the admissible subset $S=\mathbb{Z}_{\leq0}-\frac{1}{2}$ is called the vacuum vector $$|0\rangle=\underline{-\frac{1}{2}}\wedge\underline{-\frac{3}{2}}\wedge\underline{-\frac{5}{2}}\wedge\cdots.$$
The fermionic Fock space $\mathcal{F}=\Lambda^\frac{\infty}{2}V$ is the linear space generated by $v_S$ for all admissible $S$.
We equip the fermionic Fock space $\mathcal{F}$ with a standard inner product such that the basis $\{v_S\}$ is orthonormal.
We denote it by $(\cdot,\cdot)$.

The vacuum expectation value provides a better formalism for the inner product on the fermionic Fock space $\mathcal{F}$.
For a vector $|v\rangle\in\mathcal{F}$,
we use $\langle v|\in\mathcal{F}^*$ to denote the dual vector of $|v\rangle$,
then the vacuum expectation value is of the following form,
\begin{align*}
	\langle v| A |w\rangle
	:=\big(|v\rangle,A|w\rangle\big)
	=\big(A^*|v\rangle,|w\rangle\big),
\end{align*}
where $A$ is an operator acting on the fermionic Fock space $\mathcal{F}$ and $A^*$ is its adjoint.

The fermionic operators are two families of operators $\{\psi_{k}\}$ and $\{\psi^*_{k}\}$ labeled by half integers $k\in\mathbb{Z}+\frac{1}{2}$.
The actions of them on the fermionic Fock space $\mathcal{F}$ are defined as follows.
First,
the operator $\psi_{k}$ is the exterior multiplication by $\underline{k}$.
That is,
for any admissible $S$,
\[\psi_{k} \cdot v_S=\underline{k}\wedge v_S.\]
Then the operator $\psi^*_{k}$ is defined as the adjoint operator of $\psi_k$ with respect to the standard inner product $(\cdot,\cdot)$.
Equivalently,
\begin{equation*}
	\psi_k^*\cdot  v_S:=\left\{
	\begin{split}
		&(-1)^{l+1}\underline{s_1}\wedge \underline{s_2}\wedge\cdots\wedge\widehat{\underline{s_l}}\wedge\cdots, \quad &&\mathrm{if}\ s_l=k\ \mathrm{for\ some\ }l;  \\
		&0, \quad &&\mathrm{otherwise}.
	\end{split}\right.
\end{equation*}
One can directly verify that these two families of operators $\{\psi_{k}\}$ and $\{\psi^*_{k}\}$ satisfy the following anti-commutative relations
\begin{align}\label{eqn:anti comm}
	[\psi_{k_1},\psi_{k_2}]_+=0,
	\qquad [\psi^*_{k_1},\psi^*_{k_2}]_+=0,
	\qquad [\psi_{k_1},\psi^*_{k_2}]_+=\delta_{k_1,k_2}\cdot \text{id},
\end{align}
where the bracket is defined by $[\phi,\psi]_+=\phi\psi+\psi\phi$.
The normal ordering of product of fermions is defined as
\begin{align}\label{eqn:normal ordering}
	:\phi_{k_1}\phi_{k_2}:
	\ :=\phi_{k_1}\phi_{k_2}-\langle0|\phi_{k_1}\phi_{k_2}|0\rangle.
\end{align}
The $\phi_k$ denotes a fermion operator,
which can be either $\psi_k$ or $\psi_k^*$.
That is to say,
$:\phi_{k_1}\phi_{k_2}:$ and $\phi_{k_1}\phi_{k_2}$ only differ from each other by at most a constant.
In particular,
\begin{align}\label{eqn:normal ordering ex}
	:\psi_k \psi^*_k:
	=\psi_k \psi^*_k-\delta_{k<0}.
\end{align}

For a partition $\lambda$,
we associate an admissible subset as
\[\mathfrak{S}(\lambda):=\{\lambda_1-1/2>\lambda_2-3/2>\cdots>\lambda_i-i+1/2>\cdots\}\subset \mathbb{Z}+\frac{1}{2}.\]
We use the notation $|\lambda\rangle$ to represent the vector $v_{\mathfrak{S}(\lambda)}$, labeled by $\mathfrak{S}(\lambda)$, in the fermionic Fock space $\mathcal{F}$.
For example,
with respective to the empty partition $\emptyset$,
the associated vector is the vacuum vector $$|\emptyset\rangle=|0\rangle=\underline{-\frac{1}{2}}\wedge\underline{-\frac{3}{2}}\wedge\underline{-\frac{5}{2}}\wedge\cdots.$$
In terms of the vacuum vector and the fermionic operators $\{\psi_{k}, \psi^*_{k}\}$,
it is known that the vector $|\lambda\rangle$ can also be represented as
\begin{align}\label{eqn:lambda as fermions}
	|\lambda\rangle
	=\prod_{i=1}^{r}(-1)^{n_i}\psi_{m_i+\frac{1}{2}}\psi^*_{-n_i-\frac{1}{2}} \cdot |0\rangle,
\end{align}
where $\lambda=(m_1,...,m_r|n_1,...,n_r)$.
Actually,
all the vectors $v_S\in\mathcal{F}$ can be obtained from the action of fermions on the vacuum vector $|0\rangle$.
That is to say, for any admissible subset $S$,
$v_S$ is of the following form,
\begin{align}\label{eqn:v_S form}
	v_S=\phi_{k_1}\cdots\phi_{k_l}|0\rangle,
\end{align}
and vice versa.
Moreover,
a vector $v_{S}$ comes from a partition,
i.e. $v_S=|\lambda\rangle$ for some partition $\lambda$,
if and only if
\begin{align*}
	|S_{+}|=|S_{-}|.
\end{align*}
The subspace of $\mathcal{F}$ generated by $|\lambda\rangle$ is particularly interesting,
and we call it the charge zero fermionic Fock space $\mathcal{F}_0$.
From the definition,
it is apparent that $\mathcal{F}_0$ has a orthonormal basis labeled by all partitions.
Thus,
this space $\mathcal{F}_0$ is widely used in studying properties of all partitions and especially generating functions weighted by functions related partitions (see, for examples, \cite{BO,Dij95,HIL,TW07,Y23,Y24}).
This is indeed the method used in \cite{BO,O01} to study the correlation function of all integer partitions.
In Section \ref{sec:main result},
we will introduce the notion of $\omega$-transform on fermions and fermionic Fock space.
From that,
one can use $\mathcal{F}_0$ and $\omega$-transform to directly study self-conjugate partitions.

\subsection{Charge, energy and translation operators}
\label{sec:charge}
We review three canonical operators commonly used in the language of fermionic Fock space.

The charge operator $C$ and the energy operator $H$ are defined by
\[C=\sum_k:\psi_k\psi^*_k:
\qquad\text{and}
\qquad H=\sum_kk:\psi_k\psi^*_k:,\]
respectively.
It is direct to verify
\begin{align}\label{eqn:action of psipsi}
	\psi_k \psi^*_k \cdot v_S
	=\begin{cases}
		v_S,\qquad &k\in S,\\
		0, \qquad &k\notin S
	\end{cases}
\end{align}
from the definition of $v_S$.
Then,
from equation \eqref{eqn:normal ordering ex},
the actions of $C$ and $H$ are given by
\begin{align}\label{eqn:C action}
	C \cdot v_S=\big(|S_+|-|S_-|\big)v_S,
\end{align}
and
\begin{align}\label{eqn:H action}
	H \cdot v_S= \Big(\sum_{s\in S_+} s- \sum_{s\in S_-} s\Big) v_S.
\end{align}

For a vector $|v\rangle\in\mathcal{F}$,
if it is an eigenvector of $C$,
we say that $|v\rangle$ is of pure charge and its charge is exactly defined by the corresponding eigenvalue.
Similarly,
the eigenvalue of $|v\rangle$ with respective to $H$ is called its energy.
From equations \eqref{eqn:C action} and \eqref{eqn:H action},
$v_S$ is of pure charge and energy for each admissible subset $S$.
Especially,
for any partition $\lambda$,
\[C\cdot |\lambda\rangle=0,
\qquad\text{and}\qquad H \cdot |\lambda\rangle=|\lambda|\cdot |\lambda\rangle.\]

The translation operator $R$ is defined by
\[R \cdot \underline{s_1}\wedge\underline{s_2}\wedge\underline{s_3}\cdots:=\underline{s_1+1}\wedge\underline{s_2+1}\wedge\underline{s_3+1}\wedge\cdots\]
for any admissible subset $S=\{s_1>s_2>s_3>...\}$,
and then the inverse of $R$ is
\[R^{-1}
\cdot \underline{s_1}\wedge\underline{s_2}\wedge\underline{s_3}\cdots=\underline{s_1-1}\wedge\underline{s_2-1}\wedge\underline{s_3-1}\wedge\cdots.\]
It follows that the commutation relations of $R$ and the operators $\psi_k, \psi_k^*, C, H$ are given by
\begin{align}
	R^{-k}\psi_iR^{k}=&\psi_{i-k},
	\qquad\ R^{-k}\psi_i^*R^{k}=\psi_{i-k}^*, \label{eqn:PphiR}\\
	R^{-k} C R^{k}=&C+k,
	\qquad R^{-k}HR^k=H+kC+\frac{k^2}{2}.
\end{align}

The charge endows the fermionic Fock space $\mathcal{F}$ with a structure of graded space. In another words, we have the following charge decomposition
\begin{align*}
	\mathcal{F}=\bigoplus_{k\in \Z} R^k\mathcal{F}_0
\end{align*}
where $\mathcal{F}_0=\mathrm{ker} C$ is exactly the charge zero subspace introduced at the end of last subsection. We say that an operator is of charge $k$ if it sends an element of charge $i$  to an element of charge $i+k$. It is obvious that $\psi_k$, $\psi^*_k$ and $R$ have charges $+1,-1$ and $+1$, respectively.

\section{The $n$-point function and $q$-difference equations}
\label{sec:main result}
In this section,
we introduce the $\omega$-transform on the fermionic operators and the fermionic Fock space.
By virtue of that,
we derive the $q$-difference equation for the $n$-point function $G(t_1,t_2,...,t_n)$ related to the measure $\mathfrak{M}_q(\cdot)$.

\subsection{The $\omega$-transform}
	
We introduce the $\omega$-transform on fermionic operators and extend it to the whole fermionic Fock space $\mathcal{F}$ through fermionic action.
It is motivated by the method used in subsection 2.4 of the second-named author's paper \cite{Y23} and the involution,
which maps the elementary symmetric functions to the complete symmetric functions,
on the ring of symmetric functions (see, for example, Chapter I.2 in \cite{Mac}).
In this subsection,
we shall use $\phi_k$ to denote a fermion of either type, i.e., $\phi_k=\psi_k$ or $\phi_k=\psi_k^*$.  
	
\begin{defn}[$\omega$-transform on fermionic operators] For a single $\psi_k$ or $\psi_k^*$, 
	\[\omega(\psi_k):=(-1)^{k+\frac{1}{2}}\psi_{-k}^*,
	\qquad \omega(\psi_k^*):=(-1)^{k+\frac{1}{2}}\psi_{-k}.\]
	For a product of fermions $\phi_{k_1},\ldots,\phi_{k_j}$, 
	$$\omega(\phi_{k_1}\cdots\phi_{k_j}):=\omega(\phi_{k_1})\cdots\omega(\phi_{k_j}).$$
	We extend this $\omega$-transform to the linear space spanned by products of fermions.
	In particular, $\omega(0):=0$ and $\omega(\rm{id}):=\rm{id}$.
\end{defn}

From the definition of the $\omega$-transform,
we have
\begin{align}\label{eqn:w *}
	\omega(\phi_k^*)=\omega(\phi_k)^*.
\end{align}
Moreover,
the $\omega$-transform on the fermionic operators has the following properties.
\begin{lem}\label{lem:basic of w}
	We have
	\hspace*{\fill}
	\begin{enumerate}
		\item $\omega(\omega(\phi_k))=-\phi_k.$
		\item $\omega([\phi_{k_1},\phi_{k_2}]_+)=[\omega(\phi_{k_1}),\omega(\phi_{k_2})]_+.$
	\end{enumerate}
	\end{lem}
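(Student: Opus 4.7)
The plan is to verify both parts by direct case-by-case computation from the definition, carefully bookkeeping the signs that involve half-integer exponents $k+\tfrac12$.

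For part (1), I will just apply $\omega$ twice to each of the two fundamental generators. For $\psi_k$,
\[
\omega(\omega(\psi_k))=\omega\bigl((-1)^{k+\tfrac12}\psi^*_{-k}\bigr)=(-1)^{k+\tfrac12}(-1)^{-k+\tfrac12}\psi_k=(-1)^{1}\psi_k=-\psi_k,
\]
and the same argument with the roles of $\psi$ and $\psi^*$ swapped handles $\psi_k^*$. The key (trivial) identity here is that the two half-integer exponents sum to $1$, giving an overall minus sign regardless of $k$. By linearity this then extends to any element in the span of single fermions.

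For part (2), I will treat the three classes of anti-commutators separately. For $[\psi_{k_1},\psi_{k_2}]_+=0$ and $[\psi^*_{k_1},\psi^*_{k_2}]_+=0$ the left-hand side is $\omega(0)=0$, and on the right-hand side $[\omega(\psi_{k_1}),\omega(\psi_{k_2})]_+$ and $[\omega(\psi^*_{k_1}),\omega(\psi^*_{k_2})]_+$ each reduce, up to a scalar, to $[\psi^*_{-k_1},\psi^*_{-k_2}]_+$ or $[\psi_{-k_1},\psi_{-k_2}]_+$, which vanish by \eqref{eqn:anti comm}. The only case needing care is the mixed one: from $[\psi_{k_1},\psi^*_{k_2}]_+=\delta_{k_1,k_2}\,\mathrm{id}$ the left-hand side is $\delta_{k_1,k_2}\,\mathrm{id}$, while the right-hand side is
\[
[\omega(\psi_{k_1}),\omega(\psi^*_{k_2})]_+
=(-1)^{k_1+k_2+1}[\psi^*_{-k_1},\psi_{-k_2}]_+
=(-1)^{k_1+k_2+1}\delta_{k_1,k_2}\,\mathrm{id}.
\]
On the support of the delta, $k_1=k_2$, so the sign is $(-1)^{2k_1+1}$; since $k_1\in\mathbb{Z}+\tfrac12$, the exponent $2k_1+1$ is an even integer and the sign equals $+1$, matching the left-hand side.

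I do not anticipate a genuine obstacle: both statements are formal consequences of the definitions and the canonical anti-commutation relations \eqref{eqn:anti comm}. The only subtlety is the consistent treatment of the $(-1)^{k+\tfrac12}$ factors, in particular noticing that on the diagonal $k_1=k_2$ the exponent $2k_1+1$ is even, which is what makes the $\omega$-transform compatible with the fermionic algebra structure.
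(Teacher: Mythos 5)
Your proof is correct. Part (1) is identical to the paper's computation: apply $\omega$ twice and observe that the two half-integer exponents sum to $1$. For part (2), however, you take a genuinely different route. The paper's proof of (2) is a one-line formal identity: it expands $[\phi_{k_1},\phi_{k_2}]_+=\phi_{k_1}\phi_{k_2}+\phi_{k_2}\phi_{k_1}$ and invokes only the linearity and multiplicativity of $\omega$ on products of fermions, without ever touching the relations \eqref{eqn:anti comm}; the case-by-case verification that you carry out (checking $\omega(0)=0$ against $[\omega(\psi_{k_1}),\omega(\psi_{k_2})]_+$, and the sign $(-1)^{2k_1+1}=+1$ on the diagonal of the mixed anticommutator) is precisely what the paper does \emph{after} the lemma, as a separate check that $\omega$ is compatible with the anti-commutation relations since the fermions are not freely generated. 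So you have effectively merged the lemma with its follow-up consistency argument. Both readings are legitimate; the paper's version cleanly separates the formal super-algebra homomorphism property (the lemma) from the well-definedness on the actual Clifford algebra (the subsequent paragraph), while yours proves the stronger, operator-level statement directly. The one thing to be aware of is that your argument for (2) silently uses the multiplicativity step anyway (to pull the scalars $(-1)^{k+\frac12}$ out of the anticommutators), so the paper's formal identity is still the underlying mechanism.
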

\begin{proof}
	By definition, $$\omega(\omega(\phi_k))=\omega((-1)^{k+\frac{1}{2}}\phi_{-k}^*)=(-1)^{k+\frac{1}{2}}\cdot(-1)^{-k+\frac{1}{2}}\phi_k=-\phi_k.$$
	On the other hand,
	\begin{align*}
		\omega([\phi_{k_1},\phi_{k_2}]_+)&=\omega(\phi_{k_1}\phi_{k_2}+\phi_{k_2}\phi_{k_1})\\
		&=\omega(\phi_{k_1})\omega(\phi_{k_2})+\omega(\phi_{k_2})\omega(\phi_{k_1})\\
		&=[\omega(\phi_{k_1}),\omega(\phi_{k_2})]_+.
	\end{align*}
\end{proof}

\begin{rmk}
	If we view the linear space generated by products of fermions as a super Lie algebra, then the lemma above says that the $\omega$-transform on fermionic operators is a super Lie algebra endomorphism.	
\end{rmk}

Further,
we still need to verify the algebraic compatibility of the $\omega$-transform since the fermions $\{\psi_{k},\psi^*_{k}\}$ are not freely generated.
Actually,
by Lemma \ref{lem:basic of w},
we have
\begin{align*}
	&\omega(\delta_{k_1,k_2} \cdot \text{id})
	=\omega([\psi_{k_1},\psi^*_{k_2}]_+)
	=[\omega(\psi_{k_1}),\omega(\psi^*_{k_2})]_+\\
	&\qquad\qquad\qquad=(-1)^{k_1+k_2+1}[\psi^*_{-k_1},\psi_{-k_2}]_+
	=\delta_{k_1,k_2} \cdot \text{id},\\
	&\omega(0)
	=\omega([\psi_{k_1},\psi_{k_2}]_+)
	=[\omega(\psi_{k_1}),\omega(\psi_{k_2})]_+
	=0,
\end{align*}
and similarly, $\omega([\psi^*_{k_1},\psi^*_{k_2}]_+)=0$,
where we have used that $k_1,k_2$ are half-integers.
Thus,
the $\omega$-transform is compatible with the anti-commutation relations \eqref{eqn:anti comm} for fermions.

\begin{defn}[$\omega$-transform on fermionic Fock space]
	For a vector $v=\phi_{k_1}\cdots\phi_{k_j} |0\rangle\in\mathcal{F}$,
	define
	\[\omega(v):=\omega(\phi_{k_1})\cdots\omega(\phi_{k_j})
	|0\rangle\in\mathcal{F}.\]
	Similarly,
	for a vector $v^*=\langle0|\phi_{k_1}\cdots\phi_{k_j} \in\mathcal{F}^*$,
	define
	\[\omega(v^*):=\langle0|\omega(\phi_{k_1})\cdots\omega(\phi_{k_j})\in\mathcal{F}^*.\]
	From equation \eqref{eqn:w *},
	one has
	\begin{align}\label{eqn:w *F}
		\omega(v^*)=\omega(v)^*.
	\end{align}
\end{defn}

When restricting to the charge zero fermionic Fock space $\mathcal{F}_0$,
the $\omega$-transform has the following interesting properties that we are going to apply later.
\begin{lem}\label{lem:w la la^t}
	When restricting to the charge zero fermionic Fock space $\mathcal{F}_0$,
	we have
	\begin{align}
		\omega(|\lambda\rangle)=|\lambda^t\rangle.
	\end{align}
	Similarly,
	when considering the dual charge zero space $\mathcal{F}_0^*$,
	we have
	\begin{align}
		\omega(\langle\lambda|)=\langle\lambda^t|.
	\end{align}
\end{lem}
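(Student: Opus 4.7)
The plan is to start from the explicit Frobenius fermion formula \eqref{eqn:lambda as fermions}, apply $\omega$ term-by-term, and then rearrange via the anti-commutation relations \eqref{eqn:anti comm} to recognize the result as $|\lambda^t\rangle$. Concretely, for $\lambda=(m_1,\dots,m_r\,|\,n_1,\dots,n_r)$ I would use
\[
|\lambda\rangle=\prod_{i=1}^{r}(-1)^{n_i}\psi_{m_i+1/2}\psi^*_{-n_i-1/2}\,|0\rangle,
\]
note that each pair $\psi_{m_i+1/2}\psi^*_{-n_i-1/2}$ has even fermionic degree so distinct pairs commute, and then pass $\omega$ through the product by its defining action on $\mathcal{F}$.

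Next I would compute the two basic $\omega$-images:
\[
\omega(\psi_{m_i+1/2})=(-1)^{m_i+1}\psi^*_{-m_i-1/2},\qquad
\omega(\psi^*_{-n_i-1/2})=(-1)^{n_i}\psi_{n_i+1/2},
\]
substitute them in, and collect signs. Using $(-1)^{2n_i}=1$, the prefactor of the $i$-th pair becomes $(-1)^{m_i+1}$, while the operator piece becomes $\psi^*_{-m_i-1/2}\psi_{n_i+1/2}$. Since $-m_i-1/2\neq n_i+1/2$ (both $m_i,n_i\geq 0$), \eqref{eqn:anti comm} gives $\psi^*_{-m_i-1/2}\psi_{n_i+1/2}=-\psi_{n_i+1/2}\psi^*_{-m_i-1/2}$, absorbing the extra sign into $(-1)^{m_i}$. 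Hence
\[
\omega(|\lambda\rangle)=\prod_{i=1}^{r}(-1)^{m_i}\psi_{n_i+1/2}\psi^*_{-m_i-1/2}\,|0\rangle.
\]
Since $\lambda^t=(n_1,\dots,n_r\,|\,m_1,\dots,m_r)$ in Frobenius notation, the right-hand side is exactly the expression \eqref{eqn:lambda as fermions} for $|\lambda^t\rangle$, giving the first identity.

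For the dual statement, the quickest route is to appeal to \eqref{eqn:w *F}: since $\langle\lambda|=(|\lambda\rangle)^{*}$, one has $\omega(\langle\lambda|)=\omega(|\lambda\rangle)^{*}=(|\lambda^t\rangle)^{*}=\langle\lambda^t|$. Alternatively, one can repeat the same computation starting from $\langle\lambda|=\langle 0|\prod_i(-1)^{n_i}\psi_{-n_i-1/2}\psi^*_{m_i+1/2}$ reversed, which involves the same sign bookkeeping.

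The only delicate point is the sign tally: one must keep track of $(-1)^{k+1/2}$ for half-integer indices and check that the anti-commutation swap produces precisely the sign needed to match the Frobenius formula for $\lambda^t$. Well-definedness on $\mathcal{F}_0$ is not an issue because $\omega$ sends $\psi$ to $\psi^*$ and vice versa, so it preserves the balance $|S_+|=|S_-|$ that characterizes partition vectors; in particular, the product displayed above lies in $\mathcal{F}_0$ and indeed represents a partition.
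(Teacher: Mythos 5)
Your proof is correct and follows essentially the same route as the paper's: apply $\omega$ to the Frobenius-coordinate fermion formula \eqref{eqn:lambda as fermions}, track the signs $(-1)^{k+1/2}$, use the vanishing anticommutator to swap $\psi^*_{-m_i-1/2}$ and $\psi_{n_i+1/2}$, and identify the result with $|\lambda^t\rangle$ via $\lambda^t=(n_1,\dots,n_r\,|\,m_1,\dots,m_r)$; the dual identity is likewise obtained from \eqref{eqn:w *F}. The sign bookkeeping checks out and matches the paper's computation exactly.
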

\begin{proof}
	We suppose
	$\lambda=(m_1,...,m_r|n_1,...,n_r)$.
	Then from the fermionic representation \eqref{eqn:lambda as fermions} of $|\lambda\rangle$,
	we have
	\begin{align*}
		\omega (|\lambda\rangle)&=\omega\left(\prod_{i=1}^{r}(-1)^{n_i}\psi_{m_i+\frac{1}{2}}\psi^*_{-n_i-\frac{1}{2}}
		\cdot |0\rangle\right) \\
		&=\prod_{i=1}^{r}(-1)^{n_i+m_i+1-n_i}\psi^*_{-m_{i}-\frac{1}{2}}\psi_{n_i+\frac{1}{2}}
		\cdot|0\rangle\\
		&=\prod_{i=1}^{r}(-1)^{m_i}\psi_{n_i+\frac{1}{2}}\psi^*_{-m_{i}-\frac{1}{2}}
		\cdot|0\rangle=|\lambda^t\rangle.
	\end{align*}
	By taking the dual of the equation above,
	and the property of $\omega$-transform in equation \eqref{eqn:w *F},
	we have $\omega(\langle\lambda|)=\langle\lambda^t|$.
\end{proof}

\begin{lem}\label{omega}
	The $\omega$-transform keeps the inner product.
	That is to say,
	\begin{align}\label{eqn:w ()}
		\big(\omega(v_{S_1}),\omega(v_{S_2})\big)=(v_{S_1},v_{S_2}).
	\end{align}
	In particular,
	for any two partitions $\lambda,\mu$, and a family of fermions $\phi_{k_j}$,
	we have
	\begin{align}\label{eqn:<>=<w>}
		\langle\mu|\phi_{k_1}\cdots\phi_{k_j}|\lambda^t\rangle=\langle\mu^t|\omega(\phi_{k_1})\cdots\omega(\phi_{k_j})|\lambda\rangle.
	\end{align}
\end{lem}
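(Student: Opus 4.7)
The strategy is to reduce the identity \eqref{eqn:w ()} to the statement that the $\omega$-transform permutes the orthonormal basis $\{v_S\}$ up to an overall sign. Since the inner product on $\mathcal{F}$ is characterized by declaring this basis orthonormal, it suffices to verify that $(\omega(v_{S_1}), \omega(v_{S_2})) = \delta_{S_1,S_2}$ for every pair of admissible subsets. Once we show $\omega(v_S) = \epsilon_S \cdot v_{\widetilde{S}}$ for some sign $\epsilon_S \in \{\pm 1\}$ and some admissible $\widetilde{S}$, with $S \mapsto \widetilde{S}$ a bijection, the identity follows at once: when $S_1 = S_2$ the two signs square to $1$, and when $S_1 \neq S_2$ we have $\widetilde{S}_1 \neq \widetilde{S}_2$.

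To carry this out, I fix an admissible $S$ and put $v_S$ into its canonical fermionic form via \eqref{eqn:v_S form}, namely
\[
v_S = \eta_S \prod_{s\in S_+}\psi_{s}\cdot\prod_{s'\in S_-}\psi^*_{s'}\,|0\rangle
\]
for an explicit sign $\eta_S$ determined by the chosen ordering of the finite sets $S_\pm$. Applying the $\omega$-transform entrywise and using $\omega(\psi_s) = (-1)^{s+1/2}\psi^*_{-s}$ and $\omega(\psi^*_{s'}) = (-1)^{s'+1/2}\psi_{-s'}$, the image is
\[
\omega(v_S) = \eta_S \prod_{s\in S_+}(-1)^{s+1/2}\psi^*_{-s}\cdot\prod_{s'\in S_-}(-1)^{s'+1/2}\psi_{-s'}\,|0\rangle.
\]
Let $\widetilde{S}$ be the admissible subset with $\widetilde{S}_+ = \{-s' : s'\in S_-\}$ and $\widetilde{S}_- = \{-s : s\in S_+\}$. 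Using the anti-commutation relations \eqref{eqn:anti comm} to reorder this product identifies it with $\epsilon_S\cdot v_{\widetilde{S}}$ for a definite sign $\epsilon_S$. Since the assignment $S\mapsto \widetilde{S}$ is evidently involutive and thus bijective on admissible subsets, this establishes the first claim \eqref{eqn:w ()}.

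For the second identity \eqref{eqn:<>=<w>}, I write $|\lambda^t\rangle$ in fermionic form via \eqref{eqn:lambda as fermions}, so that the multiplicative definition of $\omega$ on $\mathcal{F}$ yields $\omega(\phi_{k_1}\cdots\phi_{k_j}|\lambda^t\rangle) = \omega(\phi_{k_1})\cdots\omega(\phi_{k_j})\,\omega(|\lambda^t\rangle)$. Combining this with the inner product preservation \eqref{eqn:w ()} together with Lemma \ref{lem:w la la^t} (which gives $\omega(|\mu\rangle) = |\mu^t\rangle$ and $\omega(|\lambda^t\rangle) = |(\lambda^t)^t\rangle = |\lambda\rangle$) gives
\[
\langle\mu|\phi_{k_1}\cdots\phi_{k_j}|\lambda^t\rangle = \bigl(\omega|\mu\rangle,\omega(\phi_{k_1}\cdots\phi_{k_j}|\lambda^t\rangle)\bigr) = \langle\mu^t|\omega(\phi_{k_1})\cdots\omega(\phi_{k_j})|\lambda\rangle,
\]
as required. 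The principal technical obstacle is the sign bookkeeping: the factors $(-1)^{s+1/2}$, $(-1)^{s'+1/2}$ together with the reordering signs from anti-commutation must combine into a single sign $\epsilon_S\in\{\pm 1\}$ rather than some other phase, and one must also confirm that $v_S\mapsto\omega(v_S)$ is well defined independently of the fermionic representative of $v_S$. The latter is precisely the algebraic compatibility with \eqref{eqn:anti comm} already checked in the excerpt, and the former reduces to a direct (if tedious) count that does not affect the conclusion.
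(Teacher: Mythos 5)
Your proof is correct and follows essentially the same route as the paper: the paper likewise deduces \eqref{eqn:w ()} from the fact that $\omega$ sends each basis vector $v_S$ to (a sign times) another basis vector, and then obtains \eqref{eqn:<>=<w>} by combining \eqref{eqn:w ()} with Lemma \ref{lem:w la la^t}. Your write-up is merely more explicit than the paper's, which phrases the first step as a chain of equivalences $(v_{S_1},v_{S_2})=1 \Leftrightarrow v_{S_1}=v_{S_2} \Leftrightarrow \omega(v_{S_1})=\omega(v_{S_2}) \Leftrightarrow (\omega(v_{S_1}),\omega(v_{S_2}))=1$ and leaves the sign bookkeeping implicit.
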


\begin{proof}
	For the first equation \eqref{eqn:w ()},
	we first recall that any vector $v_{S}$ is of the form $\phi_{k_1}\cdots\phi_{k_1}|0\rangle$,
	so is $w(v_S)$.
	Thus,
	from the definition that $\{v_S\}_{S\ \text{is admissible}}$ forms a orthonormal basis of the fermionic Fock space $\mathcal{F}$,
	we have
	\begin{align*}
		(v_{S_1},v_{S_2})=1 &\Leftrightarrow v_{S_1}=v_{S_2}\\
		&\Leftrightarrow\omega(v_{S_1})=\omega(v_{S_2})\\
		&\Leftrightarrow\big(\omega(v_{S_1}),\omega(v_{S_2})\big)=1.
	\end{align*}
	The second equation \eqref{eqn:<>=<w>} directly follows from equation \eqref{eqn:w ()} and Lemma \ref{lem:w la la^t}.
\end{proof}

\begin{rmk}
	The equation \eqref{eqn:w ()} is a generalization of the equation (2.10) in the second-named author's paper \cite{Y23}.
\end{rmk}

Since each $\phi_k= \psi_k$ or $\psi^*_k$ is of charge $\pm1$ and the $\omega$-transform does not keep the charge,
we sometimes need to deal with the charge $\pm 1$ Fock spaces but not only $\mathcal{F}_0$ when deriving the $q$-difference equation for the $n$-point functions $G(t_1,t_2,...,t_n)$ in the next subsection.
The following lemma will also be useful.
\begin{lem}\label{omega1}
	When restricting to the charge $\pm 1$ fermionic Fock space,
	we have
	\begin{align}\label{eqn:w c1}
		\omega(R|\lambda\rangle)=-R^{-1}|\lambda^t\rangle,
	\end{align}
	and
	\begin{align}\label{eqn:w c-1}
		\qquad\omega(R^{-1}|\lambda\rangle)=R|\lambda^t\rangle.
	\end{align}
\end{lem}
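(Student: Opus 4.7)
The strategy is a direct fermionic computation in the spirit of Lemma \ref{lem:w la la^t}: express $R^{\pm 1}|\lambda\rangle$ as a string of fermions acting on the vacuum, apply the $\omega$-transform term-by-term, and then recognize the result as the standard presentation of $\mp R^{\mp 1}|\lambda^t\rangle$.

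First I would start from the fermionic presentation \eqref{eqn:lambda as fermions}, namely $|\lambda\rangle = \prod_{i=1}^{r}(-1)^{n_i}\psi_{m_i+\frac12}\psi^*_{-n_i-\frac12}|0\rangle$, and push $R^{\pm 1}$ past the fermion string using the commutation relations in \eqref{eqn:PphiR}, which give $\psi_k R = R\psi_{k-1}$ and $\psi^*_k R = R\psi^*_{k-1}$ (and the analogues for $R^{-1}$). This rewrites $R|\lambda\rangle$ and $R^{-1}|\lambda\rangle$ as shifted fermion products acting on $R^{\pm 1}|0\rangle$. Since $R|0\rangle = \psi_{1/2}|0\rangle$ and $R^{-1}|0\rangle = \psi^*_{-1/2}|0\rangle$ (both can be read off directly from the semi-infinite wedge realization of $|0\rangle$ and the definition of $R$), we land inside the image of the polynomial algebra of fermions acting on the vacuum, where the $\omega$-transform is defined.

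Next I would apply $\omega$ using the definition $\omega(\psi_k)=(-1)^{k+1/2}\psi^*_{-k}$ and $\omega(\psi^*_k)=(-1)^{k+1/2}\psi_{-k}$ together with the multiplicativity on fermion monomials. This converts $\omega(R|\lambda\rangle)$ into a product of fermions of the opposite type, acting on $\psi^*_{-1/2}|0\rangle = R^{-1}|0\rangle$ (and symmetrically, $\omega(R^{-1}|\lambda\rangle)$ ends up acting on $R|0\rangle$). I would then move $R^{\mp 1}$ back to the left using \eqref{eqn:PphiR} in the reverse direction, swap each pair $\psi^*_{-m_i-\frac12}\psi_{n_i+\frac12}$ via the anti-commutation relations \eqref{eqn:anti comm} (the modes never coincide so these produce a clean sign of $-1$ each), and compare the result with the formula \eqref{eqn:lambda as fermions} applied to the conjugate partition $\lambda^t = (n_1,\ldots,n_r \mid m_1,\ldots,m_r)$.

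The only real obstacle is sign bookkeeping: the $\omega$-transform contributes a factor $(-1)^{k+1/2}$ for every fermion, the $R^{\pm 1}$-shifts change these exponents, and anti-commuting the pairs into standard order gives additional signs. The expected outcome is that all of these signs collapse to an overall $-1$ in the first identity and to $+1$ in the second, yielding \eqref{eqn:w c1} and \eqref{eqn:w c-1}. As a consistency check one can apply Lemma \ref{lem:basic of w}(1) (the involutivity up to sign of $\omega$) to the first identity and recover the second, so it suffices to carry out the computation for one of the two equations and deduce the other.
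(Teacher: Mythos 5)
Your proposal is correct and follows essentially the same route as the paper's own proof: push $R^{\pm 1}$ through the fermion string of \eqref{eqn:lambda as fermions} via \eqref{eqn:PphiR}, use $R|0\rangle=\psi_{1/2}|0\rangle$ and $R^{-1}|0\rangle=\psi^*_{-1/2}|0\rangle$, apply $\omega$ monomial-by-monomial, pull $R^{\mp 1}$ back out, and anti-commute each pair to match $|\lambda^t\rangle$. Your closing observation that the second identity follows from the first via Lemma \ref{lem:basic of w}(1) (using that $R|\lambda\rangle$ is a product of an odd number, $2r+1$, of fermions acting on $|0\rangle$) is a valid shortcut the paper does not exploit, but it is a minor refinement rather than a different argument.
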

\begin{proof}
	From equation \eqref{eqn:lambda as fermions} and the commutation relation \eqref{eqn:PphiR},
	we first have
	\begin{align*}
		\omega(R|\lambda\rangle)&=\omega\Big(R\prod\limits_{i=1}^{r}(-1)^{n_i}\psi_{m_i+\frac{1}{2}}\psi_{-n_i-\frac{1}{2}}^*
		\cdot|0\rangle\Big)\\
		&=\omega\Big(\prod\limits_{i=1}^{r}(-1)^{n_i}\psi_{m_i+\frac{3}{2}}\psi_{-n_i+\frac{1}{2}}^*\cdot
		R |0\rangle\Big).
	\end{align*}
	Then,
	from $R|0\rangle=\psi_{\frac{1}{2}}|0\rangle, R^{-1}|0\rangle=\psi_{-\frac{1}{2}}^*|0\rangle$,
	and the definition of $\omega$-transform on the fermionic operators,
	the formula above is equal to
	\begin{align*}
		\begin{split}
		&\prod\limits_{i=1}^{r}(-1)^{n_i+m_i+2-n_i+1}\psi_{-m_i-\frac{3}{2}}^*\psi_{n_i-\frac{1}{2}}\cdot(-1)\cdot R^{-1}
		|0\rangle\\
		&
		\qquad\qquad=-R^{-1}\prod\limits_{i=1}^{r}(-1)^{m_i}\psi_{n_i+\frac{1}{2}}\psi_{-m_i-\frac{1}{2}}^*
		\cdot |0\rangle=-R^{-1}|\lambda^t\rangle,
		\end{split}
	\end{align*}
	which proves equation \eqref{eqn:w c1}.
	The second equation \eqref{eqn:w c-1} can be proved similarly.
\end{proof}

\subsection{The $q$-difference equation for the $n$-point function}
\label{sec:q-diff}
In this subsection,
we apply $\omega$-transform to deduce the $q$-difference equation for the $n$-point function $G(t_1,\cdots,t_n)$.

Recall that the $n$-point function $G(t_1,\cdots,t_n)$ is defined by
\begin{align*}
	G(t_1,t_2,\ldots,t_n)
	=\frac{1}{\sum\limits_{\lambda\in\mathscr{P}^s}q^{|\lambda|}}
	\cdot\sum\limits_{\lambda\in\mathscr{P}^s}\prod\limits_{j=1}^{n}\sum\limits_{i=1}^{\infty} t_j^{\lambda_i-i+\frac{1}{2}}q^{|\lambda|}.
\end{align*}
It is obvious that $G(t_1,t_2,\ldots,t_n)$ is symmetric with respective to all variables $t_j, 1\leq j\leq n$.
Define the function $f_n:\mathscr{P}\rightarrow \mathbb{C}[\![t_j^{-1/2},t_j^{1/2}]$ as
\begin{align*}
	f_n(\lambda):=
	\prod\limits_{j=1}^{n}\sum\limits_{i=1}^{\infty} t_j^{\lambda_i-i+\frac{1}{2}}.
\end{align*}
The $n$-point function $G(t_1,t_2,...,t_n)$ can be represented as the $q$-bracket of the function $f_n$ as in equation \eqref{eqn:G as <>}.
From the discussion at the end of subsection \ref{sec:def m},
the images of $f_n(\cdot)$ could be regarded as elements in the ring $\mathbb{C}(t_1^{1/2},\dots,t_n^{1/2})$.
Following Okounkov \cite{O01},
we introduce the following auxiliary operator
\[T(t):=\sum\limits_{k \in \Z+\frac{1}{2}}t^k\psi_k\psi_k^*.\]
From equation \eqref{eqn:action of psipsi},
we have
\begin{align*}
	\prod_{j=1}^nT(t_j)\cdot |\lambda\rangle
	=f_n(\lambda) \cdot |\lambda\rangle.
\end{align*}
Thus,
the $n$-point function $G(t_1,t_2,...,t_n)$ can be represented in terms of the vacuum expectation values as the following form
\begin{align}\label{eqn:G as vev}
	G(t_1,t_2,\ldots,t_n)
	=\frac{1}{\sum\limits_{\lambda\in\mathscr{P}^s} \langle\lambda|q^{H}|\lambda^t\rangle}
	\cdot\sum\limits_{\lambda\in\mathscr{P}^s}\langle\lambda| q^{H}\prod\limits_{j=1}^{n}T(t_j) |\lambda^t\rangle.	
\end{align}
Moreover,
since for any partition $\lambda$,
\begin{align*}
	\langle\lambda| q^{H} |\lambda^t\rangle
	=q^{|\lambda^t|} \cdot \langle\lambda|\lambda^t\rangle
\end{align*}
and
\begin{align*}
	\langle\lambda| q^{H}\prod\limits_{j=1}^{n}T(t_j) |\lambda^t\rangle
	=q^{|\lambda^t|}f_n(\lambda^t) \cdot \langle\lambda|\lambda^t\rangle
\end{align*}
vanish unless $\lambda=\lambda^t$, i.e. $\lambda$ is self-conjugate,
we can extend the summation $\sum_{\lambda\in\mathscr{P}^s}$ in equation \eqref{eqn:G as vev} to the summation over the set of all integer partitions.

Let the normal ordering of $T(t)$ be 
\[:T(t):=\sum\limits_{k \in \Z+\frac{1}{2}}t^k:\psi_k\psi_k^*:.\]
Then from equation \eqref{eqn:normal ordering ex},
we have
\begin{align}\label{normalordering}
	T(t)=:T(t):+\sum\limits_{k=-\infty}^{1/2}t^k=:T(t):+\frac{1}{t^{1/2}-t^{-1/2}},
\end{align}
where we have used $|t|>1$ in the second equality.
Actually,
for any given partition $\lambda$,
the vector $|\lambda\rangle\in\mathcal{F}$ is an eigenvector of $:T(t):$,
whose eigenvalue is a polynomial in $t^{1/2}$ and $t^{-1/2}$.
Thus,
the action of $T(t)$ on a given partition could be regarded as multiplying a rational function in $t^{\half}$,
which is exactly compatible with our explanation at the end of subsection \ref{sec:def m}.
As a consequence,
under such consideration,
which regards the $n$-point function $G(t_1,t_2,\ldots,t_n)$ as a series in the ring $\mathbb{C}(t_1^{1/2},\dots,t_n^{1/2})[\![q]\!]$,
the equation \eqref{normalordering} holds for all $t\in\mathbb{C}$ and could be regarded as the meromorphic continuation of $T(t)$.

\begin{thm}[=Theorem \ref{main}] The $n$-point function satisfies the following $q$-difference equation
	\begin{equation}\label{eqn:q-diff}
		\begin{aligned}
			&G(q^{-2}t_1,t_2,\ldots,t_n)\\
			&\qquad\quad=\sum\limits_{P^-,P^+ \subseteq \{2\ldots,n\},\atop P^-\cap P^+=\emptyset}(-1)^{|P^-|-1}G(t_1\prod\limits_{i\in P^-}t_i^{-1}\cdot \prod_{j\in P^+}t_j,\cdots,\hat{t_i},\cdots,\hat{t_j},\cdots),
		\end{aligned}
	\end{equation}
	where the notation $\hat{\cdot}$ means that the corresponding term should be omitted.
	Parallel formula for $G(t_1,\ldots,q^{-2}t_j,\ldots,t_n)$ can be achieved by exchanging the variables $t_1,...,t_n$.
\end{thm}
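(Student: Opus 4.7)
The plan is to realize $G$ as a fermionic trace on $\mathcal{F}_0$, and then derive the $q$-difference equation by cyclic manipulations and Wick-type expansions inside the trace, using the $\omega$-transform to handle the operator $\Omega$ that encodes the conjugation $\lambda\mapsto\lambda^t$.

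Since $q^H\prod_j T(t_j)$ is diagonal in the partition basis, only self-conjugate $\lambda$ contribute to $\sum_\lambda\langle\lambda|\cdots|\lambda^t\rangle$, so from \eqref{eqn:G as vev} I would rewrite
\begin{equation*}
G(t_1,\ldots,t_n)\,Z_s(q)\;=\;\mathrm{Tr}_{\mathcal{F}_0}\bigl(q^H\,T(t_1)\cdots T(t_n)\,\Omega\bigr),
\end{equation*}
where $\Omega:\mathcal{F}_0\to\mathcal{F}_0$, $\Omega|\lambda\rangle=|\lambda^t\rangle$, is identified with the restriction of the $\omega$-transform to $\mathcal{F}_0$ via Lemma~\ref{lem:w la la^t}.

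The proof then relies on two key operator identities. The first is the $\omega$-twisted cyclic identity $T(t)\,\Omega=-\Omega\,T(t^{-1})$ on $\mathcal{F}_0$, obtained by applying Lemma~\ref{lem:basic of w} to show $\omega(T(t))=-T(t^{-1})$ (regularized via $\sum_{k\in\Z+\frac12}z^k=0$). The second is the action of $T$ on fermion fields, $[T(t),\,\psi(z)]=\psi(tz)$ and $[T(t),\,\psi^*(z)]=-\psi^*(z/t)$, obtained by direct computation from the anti-commutation relations \eqref{eqn:anti comm}. Combined with the dilation $q^H\psi(z)q^{-H}=\psi(qz)$ and $q^H\psi^*(w)q^{-H}=\psi^*(qw)$, these identities would let me convert the shift $t_1\mapsto q^{-2}t_1$ in $T(q^{-2}t_1)$ into a combination of cyclically shifted $T$'s plus fermion insertions---with the factor $q^{-2}$ (instead of $q^{-1}$ in the Bloch-Okounkov all-partitions case) arising from the double passage induced by the $\omega$-twisted trace. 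After expanding by anti-commutation, each $T(t_k)$ ($k=2,\ldots,n$) contributes three possibilities: no pairing ($k\notin P^-\cup P^+$), a pairing yielding $t_k$ in the merged first argument ($k\in P^+$), or a pairing yielding $t_k^{-1}$ ($k\in P^-$), and the surviving operator reassembles by the inverse of the trace formula into a smaller $G$ with first argument $t_1\prod_{i\in P^-}t_i^{-1}\prod_{j\in P^+}t_j$.

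The main obstacle is the bookkeeping of signs and regularized divergent constants. The $\omega$-image of a fermion field carries the alternating sign $(-1)^{k+\frac12}$ and does not reduce to a simple $z\mapsto cz$ reparametrization of $\psi^*$; together with the normal ordering constant $\frac{1}{t^{1/2}-t^{-1/2}}$ in \eqref{normalordering} and the formally divergent sums $\sum_{k\in\Z+\frac12}z^k$ from the anti-commutations, several competing divergent contributions must be balanced precisely. In particular, the term $(P^-,P^+)=(\emptyset,\emptyset)$ on the right-hand side of \eqref{eqn:q-diff main}, which contributes $-G(t_1,\ldots,t_n)$, emerges from exactly such a cancellation; verifying that the balance produces the sign $(-1)^{|P^-|-1}$ and the stated merged first argument is the main computational hurdle.
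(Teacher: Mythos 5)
Your proposal is correct and follows essentially the same route as the paper: your $\Omega$-twisted trace with twisted cyclicity is precisely the paper's procedure of splitting $T(t_1)$ into fermion modes, inserting the resolution of identity \eqref{eqn:Nmu}, and applying the $\omega$-transform twice (Lemmas \ref{omega} and \ref{omega1}), with the $q^{2}$-shift coming from the two passages of a single mode through $q^{H}$ and the $(\emptyset,\emptyset)$ term from the regularized identity $\tilde{T}(t)=-T(t)$. The sign and normal-ordering bookkeeping you defer is exactly what the paper carries out in this way.
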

\begin{proof}
	The strategy of our proof is to use the expression \eqref{eqn:G as vev} of the $n$-point function $G(t_1,t_2,\ldots,t_n)$.
	We first split the operator $T(t_1)$ in the right hand side of equation \eqref{eqn:G as vev} and then recover it by applying the $\omega$-transform twice. 
	This will produce the $q$-difference equation \eqref{eqn:q-diff} as desired.
	
	Let $Z_s(q)=\sum_{\lambda\in\mathscr{P}^s} \langle\lambda|q^{H}|\lambda^t\rangle$ be the normalization factor. Recall that in the right hand side of equation \eqref{eqn:G as vev}, since all $T(t_i)$ commute with each other, we can move $T(t_1)$ to the last place for convenience and then split it there.
	First,
	\begin{align}\label{eqn:ZsG}
		\begin{split}
		Z_s(q) \cdot G(t_1,t_2,\ldots,t_n)
		&=\sum\limits_{\lambda\in\mathscr{P}}\langle \lambda | q^{H}\prod\limits_{j=1}^{n}T(t_j)| \lambda^t \rangle\\
		&=\sum\limits_{k\in\Z+\frac{1}{2}}t_1^k\sum\limits_{\lambda\in\mathscr{P}}\langle \lambda | q^{H}\prod\limits_{j=2}^{n}T(t_j)\cdot \psi_k\psi^*_k| \lambda^t \rangle.     
		\end{split}
	\end{align}
	Then, 
	the procedure of splitting $T(t_1)$ is to insert the operator
	\begin{align}\label{eqn:Nmu}
		\sum_{N\in \Z}\sum_{\mu\in\mathscr{P}}R^N|\mu\rangle\cdot \langle\mu|R^{-N}
	\end{align}
	to the middle of fermions $\psi_k$ and $\psi^*_k$ in the second line of equation \eqref{eqn:ZsG},
	since the operator \eqref{eqn:Nmu} is the identity operator on the fermionic Fock space $\mathcal{F}$. 
	Remark that we cannot only use the $N=0$ part of the operator \eqref{eqn:Nmu} since a single fermion $\psi_k$ or $\psi_k^*$ is not an operator on the charge zero fermionic Fock space $\mathcal{F}_0$, but on the total space $\mathcal{F}$.
	After doing that, only $N=-1$ part of the operator \eqref{eqn:Nmu} survives since $R^{-N}\psi^*_k$ is of charge $-N-1$
	(see subsection \ref{sec:charge}).
	The result is, 
	\begin{align}\label{eqn:mG def}
		\begin{split}
			Z_s(q) \cdot G&(t_1,t_2,\ldots,t_n)\\
			&=\sum\limits_{k\in\Z+\frac{1}{2}}t_1^k\sum\limits_{\lambda,\mu\in\mathscr{P}}\langle \lambda | q^{H}\prod\limits_{j=2}^{n}T(t_j)\cdot \psi_kR^{-1}|\mu\rangle\cdot\langle\mu|R\psi_k^*|\lambda^t\rangle.
		\end{split}
	\end{align}
	For the last part of the equation above,
	we apply the $\omega$-transform and Lemmas \ref{omega}, \ref{omega1} to $\langle\mu|R\psi_k^*|\lambda^t\rangle$.
	The result is
	\begin{align*}
		\langle\mu|R\psi_k^*|\lambda^t\rangle&=\langle\mu|\psi_{k+1}^*R|\lambda^t\rangle=\langle\mu^t|\psi_{-k-1}R^{-1}|\lambda\rangle\cdot(-1)^{k+\frac{1}{2}}\\
		&=\langle\mu^t|R^{-1}\psi_{-k}|\lambda\rangle\cdot(-1)^{k+\frac{1}{2}}.
	\end{align*}
	Then by substituting the result above to equation \eqref{eqn:mG def} and taking summation over $\lambda$,
	\begin{align}\label{ZsA}
		Z_s(q) \cdot G(t_1,t_2,\ldots,t_n)=\sum\limits_{k\in\Z+\frac{1}{2}}t_1^k\cdot \sum\limits_{\mu\in\mathscr{P}}(-1)^{k+\frac{1}{2}}A_\mu,
	\end{align}
	where
	\begin{align}\label{eqn:def Amu}
		A_\mu:=\langle \mu^t | R^{-1}\psi_{-k}q^{H}\prod\limits_{j=2}^{n}T(t_j)
		\cdot\psi_kR^{-1}|\mu\rangle.
	\end{align}
	In consideration of our target, we should move $\psi_{-k}$ in equation \eqref{eqn:def Amu} to the right hand side of $\psi_{k}$, then repeat the splitting procedure and $\omega$-transform again.
	After applying commutation relations
	\begin{align}
		[T(t),\psi_k^*]=-t^k\psi_k^*,
		\qquad\text{and}\qquad[T(t),\psi_k]=t^k\psi_k,
		\label{Tpsicomm} 
	\end{align} 
	which can be deduced from equation \eqref{eqn:anti comm}, we have
	\begin{align*}
		A_\mu=-q^{k}\langle \mu^t | R^{-1}q^{H}\cdot\bigg(\sum\limits_{P \subseteq \{2,\ldots,n\}}(-1)^{|P|}
		\prod\limits_{i\in P}t_i^{-k}
		\cdot\prod_{i \notin P}T(t_i)\bigg)
		\psi_k\psi_{-k}R^{-1}|\mu\rangle.
	\end{align*}
	For convenience, let 
	$$\mathbb{T}=\sum\limits_{P \subseteq \{2,\ldots,n\}}(-1)^{|P|}
	\prod\limits_{i\in P}t_i^{-k}
	\cdot\prod_{i \notin P}T(t_i),$$
	then we run the splitting procedure and $\omega$-transform again to obtain (here we omit the computation details and only write down the results)
	\begin{align*}
		A_\mu&=-q^{k}\langle \mu^t | R^{-1}q^{H}\cdot\mathbb{T}
		\psi_k\psi_{-k}R^{-1}|\mu\rangle\\
		&=-\sum\limits_{\lambda\in\mathscr{P}}q^{k}\langle \mu^t | R^{-1}q^{H}\cdot\mathbb{T}
		\psi_k|\lambda\rangle\langle\lambda|\psi_{-k}R^{-1}|\mu\rangle\\
		&=-\sum\limits_{\lambda\in\mathscr{P}}q^{k}\langle \mu^t | R^{-1}q^{H}\cdot\mathbb{T}
		\psi_k|\lambda\rangle\langle\lambda^t|\psi_k^*R|\mu^t\rangle\cdot(-1)^{-k+\frac{1}{2}}.
	\end{align*}
	Inserting the equation above back to the equation \eqref{ZsA} and taking summation over $\mu^t$, we have 
	\begin{align*}
		\begin{split}
		Z_s(q) \cdot G(t_1,t_2,\ldots,t_n)
		=\sum\limits_{k\in\Z+\frac{1}{2}}t_1^k&\sum\limits_{P \subseteq \{2,\ldots,n\}}(-1)^{|P|}
		\prod\limits_{i\in P}t_i^{-k}\\
		&\cdot\sum\limits_{\lambda\in\mathscr{P}}\langle \lambda^t|\psi_{k}^*q^{k}q^{H}
		\prod_{i \notin P}T(t_i) \cdot \psi_k|\lambda\rangle.
		\end{split}
	\end{align*}
	Again by commutation relations \eqref{Tpsicomm},
	the equation above is reduced to 
	\begin{align}\label{eqn:mG as B}
		Z_s(q) \cdot G(t_1,t_2,\ldots,t_n)
		=\sum\limits_{P \subseteq \{2\ldots,n\}}(-1)^{|P|}B_\lambda,
	\end{align}
	where
	$$B_\lambda=\sum\limits_{\lambda\in\mathscr{P}}\langle \lambda^t|q^{H}\bigg(\sum\limits_{P' \subseteq P^c}\tilde{T}\Big(q^2t_1
	\prod\limits_{i\in P}t_i^{-1}
	\cdot\prod_{j\in P'}t_j\Big)\prod_{j \in P^c\backslash P'}T(t_j)\bigg)|\lambda\rangle.$$
	Here we have used an another auxiliary operator $\tilde{T}(t)$ defined by
	$$\tilde{T}(t):=\sum\limits_{k \in \Z+\frac{1}{2}}t^k\psi_k^*\psi_k.$$
	
	Note that from the definition of normal ordering and equation \eqref{eqn:normal ordering ex},
	there are relations
	\begin{align}
		T(t)=&:T(t):+\frac{1}{t^{1/2}-t^{-1/2}},\qquad|t|>1, \label{eqn:T and :T:}\\
		\tilde{T}(t)=&-:T(t):-\frac{1}{t^{1/2}-t^{-1/2}},\qquad|t|<1
	\end{align}
	as Laurent series in $\mathbb{C}[\![t^{-1/2},t^{1/2}]$.
	Then, in the sense of meromorphic continuations of $T(t)$ and $\tilde{T}(t)$ (see discussion at the end of subsection \ref{sec:def m}),
	i.e., regarding the dependence of them on $t^{1/2}$ as meromorphic functions in $\mathbb{C}$,
	we have
	$\tilde{T}(t)=-T(t).$
	Consequently,
	equation \eqref{eqn:mG as B} gives
	\begin{align}\label{eqn:mG last}
		Z_s(q) \cdot G(t_1,t_2,\ldots,t_n)=\sum\limits_{P \subseteq \{2\ldots,n\}}(-1)^{|P|-1}C_\lambda,
	\end{align}
	where \[C_\lambda:=\sum\limits_{\lambda\in\mathscr{P}}\langle \lambda^t|q^{H}\bigg(\sum\limits_{P' \subseteq P^c}T\Big(q^2t_1\prod\limits_{i\in P}t_i^{-1}
	\cdot \prod_{j\in P'}t_j\Big)
	\cdot \prod_{j \in P^c\backslash P'}T(t_j)\bigg)|\lambda\rangle.\]
	Dividing both sides of the equation \eqref{eqn:mG last} by $Z_s(q)$ and reorganizing the indices, we then obtain the following $q$-difference equation for the $n$-point function
	\begin{align*}
		G(t_1&,t_2,\ldots,t_n)\\
		&=\sum\limits_{P^-,P^+ \subseteq \{2\ldots,n\}, \atop P^-\cap P^+=\emptyset}(-1)^{|P^-|-1}G(q^2t_1
		\prod\limits_{i\in P^-}t_i^{-1}
		\cdot \prod_{j\in P^+}t_j,\cdots,\hat{t_i},\cdots,\hat{t_j},\cdots),
	\end{align*}
	which is equivalent to the equation \eqref{eqn:q-diff} presented in the statement of this theorem.
\end{proof}

\section{Applications of the $q$-difference equation} 
\label{sec:app}
In this section, we derive closed formulas of the one-point function $G(t)$ and the two-point function $G(t_1,t_2)$ using Theorem \ref{main}.
These explicit formulas only involve theta functions $\Theta_1(t;q), \Theta_3(t;q)$, and then inherit the quasimodularity of these functions.

From now on,
we always assume $0<|q|<1$. 

\subsection{An explicit formula for the one-point function}
In this subsection,
we derive the explicit formula for the one-point function presented in Corollary \ref{cor:main one}.
According to Theorem \ref{main}, the one-point function 
\begin{align*}
	G(t)&=\left\langle \sum\limits_{i=1}^{\infty} t^{\lambda_i-i+\frac{1}{2}} \right\rangle^s_{q}
\end{align*}
satisfies the following $q$-difference equation
\begin{align}\label{oneqdiff}
	G(q^2t)=-G(t).
\end{align}
To obtain the explicit formula for the one-point function $G(t)$,
we need to analyze the singularity of $G(t)$ and solve the $q$-difference equation \eqref{oneqdiff}.

We first review some known properties of the theta functions $\Theta_1(t;q)$ and $\Theta_3(t;q)$. 
They are defined by
\begin{align*}
	\Theta_1(t;q)&:=\sum\limits_{n\in \Z}(-1)^nq^{(n+\frac{1}{2})^2}t^{n+\frac{1}{2}},
\end{align*}
and
\begin{align*}
	\Theta_3(t;q)&:=\sum\limits_{n\in \Z}q^{n^2}t^{n}.
\end{align*}
Note that we have
$\Theta_1(t;q)
=t^{\frac{1}{2}}q^{\frac{1}{4}}\Theta_3(-tq;q)$.
We use the following notations for derivatives
\begin{align*}
	\Theta_1^{'}(t;q):=t\frac{\partial}{\partial t}\Theta_1(t;q), \qquad\Theta_3^{'}(t;q):=t\frac{\partial}{\partial t}\Theta_3(t;q). 
\end{align*}

\begin{lem}\label{lem:theta q-diff}
	The theta functions $\Theta_1(t;q)$ and $\Theta_3(t;q)$ satisfy the following $q$-difference equations,
	\begin{align}
		\Theta_1(q^{-2}t;q)&=-q^{-1}t \cdot \Theta_1(t;q),\\
		\Theta_3(q^{-2}t;q)&=q^{-1}t \cdot \Theta_3(t;q).
	\end{align}
\end{lem}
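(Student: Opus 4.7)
The plan is to prove each identity by direct manipulation of the defining series. I would substitute $q^2 t$ for $t$ in the definition, absorb the extra factor $q^{2(n+1/2)}$ (or $q^{2n}$) that comes from the shifted argument into the $q$-exponent already present by completing the square, and then shift the summation index to recover the original theta series.

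Concretely, for $\Theta_1$ I would start from
\[
\Theta_1(q^2 t;q) = \sum_{n \in \Z} (-1)^n q^{(n+1/2)^2}(q^2 t)^{n+1/2} = \sum_{n \in \Z} (-1)^n q^{(n+1/2)^2 + 2n + 1} t^{n+1/2},
\]
then use the elementary identity $(n+1/2)^2 + 2(n+1/2) = (n+3/2)^2 - 1$ to pull out a global factor of $q^{-1}$, and finally reindex $n \mapsto n - 1$. The shift turns $(n+3/2)^2$ back into $(n+1/2)^2$, replaces $(-1)^n$ by $-(-1)^n$, and pulls a factor of $t^{\pm 1}$ out of $t^{n+1/2}$; collecting these pieces recovers $\Theta_1(t;q)$ multiplied by an explicit monomial in $q$ and $t$, which is the stated $q$-difference relation.

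For $\Theta_3$ the recipe is identical in structure but cleaner: expand $(q^2 t)^n = q^{2n} t^n$, use $n^2 + 2n = (n+1)^2 - 1$ to repackage the exponent as $q^{-1} q^{(n+1)^2}$, and reindex $n \mapsto n - 1$ to recover $\Theta_3(t;q)$, with no alternating sign to track.

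There is essentially no conceptual obstacle here; these are the classical quasi-periodicity relations for Jacobi theta functions, and each proof reduces to a single reindexing of a convergent theta series. The only care required is to track the cancellation of $q$-powers between the factor produced by the shifted argument and the original $q^{(n+1/2)^2}$ (or $q^{n^2}$), and, in the $\Theta_1$ case, to account correctly for the sign $(-1)^{n-1} = -(-1)^n$ generated by the index shift.
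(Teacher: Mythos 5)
Your plan --- substitute $q^2t$ into the defining series, complete the square in the exponent, and reindex --- is exactly the intended argument; the paper's own proof is the single sentence ``this directly follows from the definitions,'' so in method you and the paper coincide. The problem is that you stop just short of the one step where anything can go wrong: you never write down the monomial that actually comes out of the reindexing, and it is not the one in the statement. Carrying your own recipe through: $\Theta_1(q^2t;q)=q^{-1}\sum_{n\in\Z}(-1)^nq^{(n+3/2)^2}t^{n+1/2}$, and the shift $n\mapsto n-1$ produces $(-1)^{n-1}=-(-1)^n$ together with $t^{n-1/2}=t^{-1}\cdot t^{n+1/2}$, so the factor pulled out is $t^{-1}$, not $t$, and the end result is
\begin{equation*}
\Theta_1(q^2t;q)=-\frac{1}{qt}\,\Theta_1(t;q),\qquad
\Theta_3(q^2t;q)=\frac{1}{qt}\,\Theta_3(t;q),
\end{equation*}
i.e.\ the quasi-period factor is the reciprocal of the one printed in the lemma. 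You can confirm this with the triple-product formulas (e.g.\ $\Theta_3(q^2t;q)=\frac{1+q^{-1}t^{-1}}{1+qt}\Theta_3(t;q)=(qt)^{-1}\Theta_3(t;q)$), or by checking a leading coefficient in $q$.

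So the issue is not your approach but your unexamined claim that ``collecting these pieces'' lands on the stated relation --- it lands on the reciprocal. The statement as printed appears to be a typo in the paper: the subsequent Lemma on the ratios $\Theta_3/\Theta_1$ and on the logarithmic derivatives $\Theta_1'/\Theta_1$, $\Theta_3'/\Theta_1$ is consistent with the reciprocal factors above (only the quotient $c_3/c_1=-1$ and the logarithmic derivative of the prefactor enter there, which is why the rest of the paper is unaffected), but not with $-qt$ and $qt$ literally. For your write-up, do the final bookkeeping explicitly and record the corrected factor; as it stands, your proof sketch would ``verify'' an identity that a direct low-order expansion in $q$ refutes.
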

\begin{proof}
	This directly follows from the definitions of theta functions.
\end{proof}

\begin{cor}\label{cor:theta' q-diff}
	We have
	\begin{align}
		\Theta_1^{'}(q^{-2}t;q)&=-q^{-1}t \cdot \big(\Theta_1(t;q)+\Theta_1^{'}(t;q)\big),\\
		\Theta_3^{'}(q^{-2}t;q)&=q^{-1}t \cdot \big(\Theta_3(t;q)+\Theta_3^{'}(t;q)\big).
	\end{align}
\end{cor}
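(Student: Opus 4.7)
The plan is to obtain the two identities of Corollary \ref{cor:theta' q-diff} directly from Lemma \ref{lem:theta q-diff} by applying the logarithmic derivative operator $D := t\frac{\partial}{\partial t}$ to both sides of each $q$-difference equation. By the very definition of the derivatives in the statement, $\Theta_1'(t;q) = D\,\Theta_1(t;q)$ and $\Theta_3'(t;q) = D\,\Theta_3(t;q)$, so the operator $D$ is the natural one to use.

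First I would compute the action of $D$ on a function of the form $t \mapsto f(q^2 t)$. By the chain rule,
\begin{equation*}
D\bigl[f(q^2 t)\bigr] = t\cdot q^2\, f'(q^2 t) = (q^2 t)\, f'(q^2 t) = (D f)(q^2 t),
\end{equation*}
so $D$ commutes with the dilation $t \mapsto q^2 t$. Applying this with $f = \Theta_1(\cdot\,;q)$ converts the left-hand side of $\Theta_1(q^2t;q) = -qt\,\Theta_1(t;q)$ into $\Theta_1'(q^2t;q)$. Next I would apply $D$ to the right-hand side via the Leibniz rule, using $D(t) = t$:
\begin{equation*}
D\bigl[-qt\cdot \Theta_1(t;q)\bigr] = -qt\,\Theta_1(t;q) - qt\,\Theta_1'(t;q) = -qt\bigl(\Theta_1(t;q) + \Theta_1'(t;q)\bigr).
\end{equation*}
Equating the two expressions yields the first identity of the corollary, and the same procedure applied to $\Theta_3(q^2 t;q) = qt\,\Theta_3(t;q)$ yields the second.

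There is no real obstacle here — the argument is a one-line chain rule plus Leibniz rule — but I would take care to flag the commutation $D\circ(\cdot\mapsto q^2\cdot) = (\cdot\mapsto q^2\cdot)\circ D$ explicitly, since that is the only nontrivial bookkeeping. Alternatively, one could give a direct termwise proof by differentiating the defining series $\Theta_1(t;q)=\sum_n (-1)^n q^{(n+1/2)^2} t^{n+1/2}$ and $\Theta_3(t;q)=\sum_n q^{n^2} t^n$ and matching coefficients, but the operator-based proof via Lemma \ref{lem:theta q-diff} is shorter and makes the parallelism between the two identities transparent.
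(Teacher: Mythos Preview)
Your proof is correct and is precisely the intended argument: the paper states this result as an immediate corollary of Lemma \ref{lem:theta q-diff} without writing out a proof, and applying $D=t\frac{\partial}{\partial t}$ to both identities of that lemma, using the commutation $D\circ(t\mapsto q^2t)=(t\mapsto q^2t)\circ D$ and the Leibniz rule, is exactly how one obtains it.
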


\begin{lem}\label{lem:theta inf prod}
	The theta functions $\Theta_1(t;q)$ and $\Theta_3(t;q)$ have the following infinite product decomposition,
	\begin{align}
		\Theta_1(t;q)&=q^{1/4}(t^{1/2}-t^{-1/2})
		\cdot \prod_{m=1}^{\infty}(1-q^{2m})(1-q^{2m}t)(1-q^{2m}/t),\\
		\Theta_3(t;q)&=\prod_{m=1}^{\infty}(1-q^{2m})(1+q^{2m-1}t)(1+q^{2m-1}/t).
	\end{align}
	As a result,
	both of them could be regarded as meromorphic functions for the variable $t$ in the whole complex plane $\mathbb{C}$. Moreover, 
	\begin{align}
		\Theta_1(t^{-1};q)=-\Theta_1(t;q), \qquad\Theta_3(t^{-1};q)=\Theta_3(t;q).
	\end{align}
\end{lem}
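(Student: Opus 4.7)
The plan is to derive both infinite-product formulas from the classical Jacobi triple product identity
\[
\sum_{n\in\mathbb{Z}} z^{n} q^{n^{2}} \;=\; \prod_{m=1}^{\infty}(1-q^{2m})(1+q^{2m-1}z)(1+q^{2m-1}z^{-1}),
\]
valid for $|q|<1$ and $z\in\mathbb{C}^{*}$, which I would treat as known (a self-contained route is sketched at the end). Specialising to $z=t$ immediately yields the product formula for $\Theta_{3}(t;q)$.

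For $\Theta_{1}(t;q)$, I would first pull out the prefactor $q^{1/4}t^{1/2}$ using $q^{(n+1/2)^{2}}=q^{1/4}q^{n^{2}+n}$, so that
\[
\Theta_{1}(t;q)\;=\;q^{1/4}t^{1/2}\sum_{n\in\mathbb{Z}}(-qt)^{n}q^{n^{2}},
\]
and then apply the Jacobi triple product with $z=-qt$. The resulting three factors come out as $\prod_{m\geq 1}(1-q^{2m})(1-q^{2m}t)(1-q^{2m-2}t^{-1})$; a single index shift on the last factor extracts $(1-t^{-1})$, which together with $q^{1/4}t^{1/2}$ combines into $q^{1/4}(t^{1/2}-t^{-1/2})$ and leaves precisely the product claimed in the lemma.

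Meromorphicity on $\mathbb{C}^{*}$ is then automatic: each factor is a polynomial in $t$ or in $t^{-1}$, and since $|q|<1$ the tail sums $\sum_{m}|q|^{2m-1}$ and $\sum_{m}|q|^{2m}$ are finite, so the infinite products converge absolutely and locally uniformly on $\mathbb{C}^{*}$, and $\Theta_{1}$, $\Theta_{3}$ extend to meromorphic functions of $t$ on $\mathbb{C}^{*}$ (in fact holomorphic there, with controlled zeros). For the two symmetries, the fastest route is to read them directly from the defining series: substituting $n\mapsto -n$ in $\Theta_{3}$ preserves $q^{n^{2}}t^{n}\mapsto q^{n^{2}}t^{-n}$, giving $\Theta_{3}(t^{-1};q)=\Theta_{3}(t;q)$, while substituting $n\mapsto -n-1$ in $\Theta_{1}$ preserves the exponent $(n+1/2)^{2}$ but sends $(-1)^{n}$ to $-(-1)^{n}$, giving $\Theta_{1}(t^{-1};q)=-\Theta_{1}(t;q)$. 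Both symmetries are of course also manifest from the products, since the three infinite-product factors are invariant under $t\mapsto t^{-1}$ and $(t^{1/2}-t^{-1/2})$ changes sign.

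The only substantive input is the Jacobi triple product identity itself. The cleanest self-contained argument is to form the ratio $F(z)$ of the two sides, verify by telescoping that numerator and denominator both pick up the factor $(qz)^{-1}$ under $z\mapsto q^{2}z$ (so $F(q^{2}z)=F(z)$), conclude that $F$ is a $q^{2}$-invariant holomorphic function on $\mathbb{C}^{*}$ and hence constant via the Laurent expansion, and finally pin the constant down to $1$ by matching the $z^{0}$-coefficients on both sides. In the fermionic language of Section \ref{sec:main result}, both sides can alternatively be recognised as $\mathrm{tr}_{\mathcal{F}}(z^{C}q^{H})$, giving a conceptual proof consistent with the framework already set up in the paper.
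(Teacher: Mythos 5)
Your proposal is correct and takes essentially the same route as the paper, which simply invokes the well-known Jacobi triple product formula; you supply the specialization details (the $z=-qt$ substitution and index shift for $\Theta_1$, $z=t$ for $\Theta_3$) and the series-level symmetry checks, all of which are accurate. The extra sketch of a proof of the triple product identity itself is a correct bonus but not needed to match the paper's argument.
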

\begin{proof}
	This is exactly the well-known triple product formula.
\end{proof}

\begin{prop}\label{prop:1-point}
	The one-point function $G(t)$ admits a meromorphic continuation to $t\in\mathbb{C}$
	and more explicitly,
	\begin{align}\label{eqn:one point formula}
		G(t)&=q^{1/4}\prod_{m=1}^{\infty}\frac{(1-q^{2m})^2}{(1+q^{2m-1})^2}\cdot\frac{\Theta_3(t;q)}{\Theta_1(t;q)}.
	\end{align}
\end{prop}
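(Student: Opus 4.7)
The plan is to establish \eqref{eqn:one point formula} by showing that $G(t)$ and the proposed right-hand side both satisfy the same $q$-difference equation and have matching principal parts at $t=1$, then invoke an elliptic-function uniqueness argument. Setting $n=1$ (with $P^-=P^+=\emptyset$) in Theorem \ref{main} reduces it to $G(q^{-2}t)=-G(t)$, which is \eqref{oneqdiff}. Denote by $\widetilde G(t)$ the proposed right-hand side; by Lemma \ref{lem:theta q-diff},
\begin{align*}
\widetilde G(q^2 t) = q^{1/4}\prod_{m\geq 1}\frac{(1-q^{2m})^2}{(1+q^{2m-1})^2}\cdot\frac{qt\cdot\Theta_3(t;q)}{-qt\cdot\Theta_1(t;q)} = -\widetilde G(t),
\end{align*}
so $\widetilde G$ satisfies the same functional equation as $G$.

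Next I would pin down the principal part of $G(t)$ at $t=1$. Using the fermionic representation \eqref{eqn:G as vev} together with the normal-ordering decomposition \eqref{normalordering}, one can separate
\begin{align*}
G(t) = \frac{1}{t^{1/2}-t^{-1/2}} + \frac{1}{Z_s(q)}\sum_{\lambda\in\mathscr{P}^s} q^{|\lambda|}\,\langle\lambda|:T(t):|\lambda\rangle,
\end{align*}
using that $\langle\lambda|:T(t):|\lambda^t\rangle$ vanishes unless $\lambda=\lambda^t$ and that $\sum_{\lambda\in\mathscr{P}^s}q^{|\lambda|}=Z_s(q)$. Each matrix element $\langle\lambda|:T(t):|\lambda\rangle$ is a Laurent polynomial in $t^{\pm 1/2}$, because $:T(t):$ acts diagonally on the partition basis and only finitely many modes $:\psi_k\psi_k^*:$ act nontrivially on any given $|\lambda\rangle$. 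Absolute convergence of the remaining series in an annulus around $|t|=1$ then shows that $G(t)-1/(t^{1/2}-t^{-1/2})$ is analytic near $t=1$, so $G(t)$ has a simple pole at $t=1$ with residue exactly $1$. The $q$-difference equation $G(q^2 t)=-G(t)$ then extends $G(t)$ meromorphically to $\mathbb{C}^*$, giving simple poles precisely at $t=q^{2k}$, $k\in\mathbb{Z}$, with residues determined by the residue at $t=1$.

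The right-hand side $\widetilde G(t)$ has an identical pole structure: by Lemma \ref{lem:theta inf prod}, $\Theta_1(t;q)$ has simple zeros exactly at $t=q^{2k}$, whereas $\Theta_3(t;q)$ vanishes only at $t=-q^{2k-1}$ and is in particular nonzero at every $t=q^{2k}$. A direct residue computation at $t=1$, using that $t^{1/2}-t^{-1/2}$ has a simple zero there with derivative $1$, together with the triple-product formula for $\Theta_1$ and the value $\Theta_3(1;q)=\prod_{m\geq 1}(1-q^{2m})(1+q^{2m-1})^2$, yields
\begin{align*}
\mathrm{Res}_{t=1}\widetilde G(t) = q^{1/4}\prod_{m\geq 1}\frac{(1-q^{2m})^2}{(1+q^{2m-1})^2}\cdot\frac{\prod_{m\geq 1}(1-q^{2m})(1+q^{2m-1})^2}{q^{1/4}\prod_{m\geq 1}(1-q^{2m})^3} = 1,
\end{align*}
matching $\mathrm{Res}_{t=1}G(t)=1$; the prefactor in the statement of the proposition is chosen exactly to force this agreement.

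The proof then concludes with an elliptic-function argument. Set $D(t):=G(t)-\widetilde G(t)$. Then $D$ is meromorphic on $\mathbb{C}^*$, satisfies $D(q^2 t)=-D(t)$, and has vanishing residue at $t=1$; by the functional equation, its residues vanish at every $t=q^{2k}$, so $D$ is in fact holomorphic on $\mathbb{C}^*$. Since $D(q^4 t)=D(t)$, it descends to a holomorphic function on the compact elliptic curve $\mathbb{C}^*/q^{4\mathbb{Z}}$ and is therefore constant, after which $D(q^2 t)=-D(t)$ forces $D\equiv 0$. This establishes \eqref{eqn:one point formula}. The main obstacle is rigorously justifying the meromorphic continuation of $G(t)$ to $\mathbb{C}^*$ with exactly the claimed singular set: the fermionic/normal-ordering representation isolates the only primary pole at $t=1$ with residue $1$, and the $q$-difference equation then organizes this into the full elliptic-type pole pattern shared by $\Theta_3/\Theta_1$.
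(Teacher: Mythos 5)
Your proposal is correct and follows essentially the same strategy as the paper: both sides satisfy $G(q^{2}t)=-G(t)$, the normal-ordering decomposition $T(t)=\,:T(t):+\,(t^{1/2}-t^{-1/2})^{-1}$ isolates a single simple pole at $t=1$ in a neighbourhood of $|t|=1$, the difference equation propagates this to poles exactly at $t=q^{2k}$, and a Liouville-type argument on $\mathbb{C}^{*}/q^{2\mathbb{Z}}$ finishes the proof. The only structural deviation is the last step: you subtract ($D=G-\tilde G$ is holomorphic and anti-periodic, hence zero) where the paper divides ($G/\tilde G$ is periodic and holomorphic, hence the constant $1$); your variant is if anything slightly cleaner, since the quotient is a priori threatened by the zeros of $\Theta_3$ at $t=-q^{2m-1}$, whereas the difference only requires matching residues, which you verify. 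One point you identify as ``the main obstacle'' but do not actually discharge is the convergence of $\sum_{\lambda\in\mathscr{P}^s}q^{|\lambda|}\langle\lambda|:T(t):|\lambda\rangle$: for the continuation argument you need holomorphy on an annulus at least as wide as a full fundamental annulus for $t\mapsto q^{2}t$, i.e. on $|q|^{2-\epsilon}<|t|<|q|^{-2+\epsilon}$, not merely ``an annulus around $|t|=1$.'' This is where the paper uses the self-conjugacy constraint $\|\mathfrak{S}_{+}(\lambda)\|=\|\mathfrak{S}_{-}(\lambda)\|=|\lambda|/2$ to bound $|\sigma_\lambda(t)|$ and obtain the doubled annulus (for general partitions one would only get $|q|<|t|<|q|^{-1}$, which would not suffice for a step-$q^{2}$ difference equation); supplying that estimate would complete your argument.
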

\begin{proof}
First,
we consider the following one-point function with normal ordering
\begin{align*}
	:G(t):
	:=\frac{1}{\sum\limits_{\lambda\in\mathscr{P}^s} \langle\lambda|q^{H}|\lambda^t\rangle}
	\cdot\sum\limits_{\lambda\in\mathscr{P}^s}\langle\lambda| q^{H}:T(t): |\lambda^t\rangle.
\end{align*}
We shall show that $:G(t):$ is absolutely convergent in the following region
\begin{align*}
	\Delta_{\epsilon,1}:=\big\{t\in \mathbb{C}\big|
	\ |q|^{2-\epsilon}<|t|<|q|^{-2+\epsilon}\big\},
\end{align*}
thus it can be regarded as a holomorphic function.

In fact, set $\sigma_\lambda(t)=\sum_{k\in\mathfrak{S}_+(\lambda)}t^k-\sum_{k\in\mathfrak{S}_-(\lambda)}t^k$,
then from equation \eqref{eqn:action of psipsi},
we have
\[:T(t):\cdot |\lambda\rangle
=\sigma_\lambda(t) \cdot |\lambda\rangle\]
for any vector $|\lambda\rangle\in\mathcal{F}_0$.
We need to estimate $\sigma_\lambda(t)$ and this will split into two cases.
Let $\epsilon$ be a small positive number, then
\begin{align}\label{eqn:t>1 sigma}
	|\sigma_\lambda(t)|
	\leq|t|^{||\mathfrak{S}_+(\lambda)||}+|\mathfrak{S}_-(\lambda)|
	\leq|q|^{(-2+\epsilon)||\mathfrak{S}_+(\lambda)||}+|\mathfrak{S}_-(\lambda)|
\end{align}
for $1\leq|t|< |q|^{-2+\epsilon}$,
and
\begin{align}\label{eqn:t<1 sigma}
	|\sigma_\lambda(t)|
	\leq|t|^{-||\mathfrak{S}_-(\lambda)||}+|\mathfrak{S}_+(\lambda)|
	\leq|q|^{(-2+\epsilon)||\mathfrak{S}_-(\lambda)||}+|\mathfrak{S}_+(\lambda)|
\end{align}
for $|q|^{2-\epsilon}<|t|<1$,
where  $$||\mathfrak{S}_+(\lambda)||:=\sum\limits_{k\in\mathfrak{S}_+(\lambda)}k, \qquad||\mathfrak{S}_-(\lambda)||:=-\sum\limits_{k\in\mathfrak{S}_-(\lambda)}k,$$
and we have used the following facts.
For any partition $\lambda$,
the charge zero condition implies $$|\mathfrak{S}_+(\lambda)|=|\mathfrak{S}_-(\lambda)|
\leq |\lambda|.$$
Further, for any self-conjugate partition $\lambda$,
the self-conjugate condition $m_i=n_i, 1\leq i \leq r(\lambda)$ implies	\[||\mathfrak{S}_+(\lambda)||=||\mathfrak{S}_-(\lambda)||=\frac{|\lambda|}{2}.\] 
Therefore,
for both cases in equations \eqref{eqn:t>1 sigma} and \eqref{eqn:t<1 sigma},
we have the following estimate
\begin{align*}
	\left|\sum\limits_{\lambda\in\mathscr{P}^s}\langle\lambda| q^{H}:T(t): |\lambda^t\rangle\right|
	=\left|\sum\limits_{\lambda\in\mathscr{P}^s}q^{|\lambda|}\sigma_\lambda(t)\right|
	\leq\sum\limits_{\lambda\in\mathscr{P}^s}(q^{\epsilon|\lambda|/2}+|\lambda|q^{|\lambda|}),
\end{align*}
which implies that
\begin{align*}
	\sum\limits_{\lambda\in\mathscr{P}^s}\langle\lambda| q^{H}:T(t): |\lambda^t\rangle
\end{align*}
is absolutely convergent in the region $\Delta_{\epsilon,1}$ since the partition function $Z_s(q)=\sum_{\lambda\in\mathscr{P}^s}q^{|\lambda|}$ in equation \eqref{eqn:Z factor} is absolutely convergent when $|q|<1$.
So $:G(t):$ is a holomorphic function in the region $\triangle_{\epsilon,1}$ as well.
Recall that 
\begin{align*}
	T(t)=\frac{1}{t^{1/2}-t^{-1/2}}+:T(t):,
\end{align*}
so it is natural to consider
\begin{align*} 
	G(t)&=\frac{1}{t^{1/2}-t^{-1/2}}+:G(t):
\end{align*}
as a meromorphic function in the same region $\triangle_{\epsilon,1}$ with the unique simple pole at $t=1$. 
As a consequence,
$G(t)$ admits a meromorphic continuation to the whole complex plane $\mathbb{C}$ by applying the $q$-difference equation \eqref{oneqdiff}, with singularities only at $t=0$ and $t=q^{2m}$, $m\in \mathbb{Z}$.

On the other hand,
denote by $\tilde{G}(t)$ the right hand side of equation \eqref{eqn:one point formula}.
By Lemma \ref{lem:theta q-diff}, $\tilde{G}(t)$ satisfies
\begin{align*}
	\tilde{G}(q^2t)=-\tilde{G}(t),
\end{align*}
which is exactly equivalent to the $q$-difference equation \eqref{oneqdiff} for $G(t)$.
Moreover,
from the infinite product formulas for $\Theta_1(t;q)$ and $\Theta_3(t;q)$ in Lemma \ref{lem:theta inf prod},
$\tilde{G}(t)$ is also a meromorphic function in $\mathbb{C}$ and more explicitly,
\begin{align}\label{eqn:inf prod for tG}
	\begin{split}
	\tilde{G}(t)
	=\frac{1}{t^{1/2}-t^{-1/2}}
	\cdot &\prod_{m=1}^{\infty}\frac{(1-q^{2m})^2}{(1-q^{2m}/t)(1-q^{2m}t)}\\
	&\qquad\qquad\cdot \prod_{m=1}^{\infty}\frac{(1+q^{2m-1}/t)(1+q^{2m-1}t)}{(1+q^{2m-1})^2}.
	\end{split}
\end{align}
The product formula above shows that,
$\tilde{G}(t)$ shares the same singularities of $G(t)$ at $t=q^{2m}$, $m\in \mathbb{Z}$.

Combining all above,
we consider the function
\[G(t)\big/\tilde{G}(t),\]
which is a holomorphic function for $t\in\mathbb{C}\backslash \{0\}$ and 
satisfies the periodic condition
\begin{align*}
	G(q^{-2}t)\big/\tilde{G}(q^{-2}t)
	=G(t)\big/\tilde{G}(t).
\end{align*}
Thus,
it must be a constant function.
Since  $\lim\limits_{t\rightarrow1}G(t)\big/\tilde{G}(t)=1$, we have $G(t)=\tilde{G}(t)$ as desired.
\end{proof}

\begin{ex}
	We expand the one-point function $G(t)$ with respect to $q$ and list the first few of leading terms:
	\begin{align*}
		G(t)
		=&\frac{\sqrt{t}}{t-1}+\frac{ (t-1)}{\sqrt{t}}q
		-\frac{ (t-1)}{\sqrt{t}}q^2
		+\frac{ (t-1) (t+1)^2}{t^{3/2}}q^3
		-\frac{ (t-1)}{\sqrt{t}}q^4\\
		&+\frac{ (t-1) (1 + t^2) (t^2+t+1)}{t^{5/2}}q^5
		-\frac{ (t-1) (t+1)^2}{t^{3/2}}q^6\\
		&+\frac{  (t-1) (t+1)^2 (t^2+1) (t^2-t+1)}{t^{7/2}}q^7
		-\frac{ (t-1)}{\sqrt{t}}q^8\\
		&+\frac{(t-1)(t^8+t^7+t^6+2t^5+3t^4+2t^3+t^2+t+1)}{t^{9/2}}q^9
		+O\left(q^{10}\right).
	\end{align*}
\end{ex}

\subsection{Quasimodularity for the one-point function}	
The closed formula \eqref{eqn:one point formula} for the one-point function $G(t)$ involving theta functions implies that $G(t)$ has certain quasimodularity.
Below,
we give a precise statement and prove the equation \eqref{eqn:G as Gl} in Corollary \ref{cor:main one}.

We first review a few well-known facts on the Eisenstein series.
We refer the readers to \cite{DS,KZ95}.
\begin{defn}\label{def:Gl Gl}
	The $G_\ell(\tau), \ell\in2\mathbb{Z}_{>0}$ is standard Eisenstein series for $\Gamma(1)=SL_2(\mathbb{Z})$ defined by
	\begin{align*}
		G_\ell(\tau):=&-\frac{B_\ell}{2\ell}+\sum\limits_{n=1}^{\infty}\sum\limits_{d|n\atop d>0}(\frac{n}{d})^{\ell-1}e^{2\pi i n\tau},
	\end{align*}
	where $B_\ell$ is the $\ell$-th Bernoulli number.
	The $G^{(1,1)}_\ell(\tau), \ell\in2\mathbb{Z}_{>0}$ is the Eisenstein series for the congruence subgroup $\Gamma(2)$ with index vector $(1,1) \in \mathbb{Z}_2\times\mathbb{Z}_2$ defined by
	\begin{align*}
		G^{(1,1)}_\ell(\tau):=\sum\limits_{n=1}^{\infty}\sum\limits_{d|n,2\nmid d\atop d>0}(-1)^{n/d}(\frac{n}{d})^{\ell-1}e^{\pi i n\tau}.
	\end{align*}
\end{defn}

For even $\ell>2$, $G_\ell(\tau)$ and $G^{(1,1)}_\ell(\tau)$ are modular forms of weight $\ell$ for $\Gamma(1)$ and $\Gamma(2)$ respectively,
while $G_2(\tau)$ and $G^{(1,1)}_2(\tau)$ are quasimodular forms of weight $2$ for $\Gamma(1)$ and $\Gamma(2)$ respectively.
For our application, we regard $G_\ell(\tau), \ell\in2\mathbb{Z}_{>0}$ as a (quasi)modular form for $\Gamma(2)$ as well.

\begin{prop}\label{1dmodular}
	Let $q=e^{\pi i \tau}$ and $t=e^{2\pi i z}$, then
	the one-point function is given by
	\[G(t)=\frac{1}{2\pi i z}\exp\Bigg(\sum\limits_{\ell\in 2\Z_+} 2\left(G_\ell(\tau)- G^{(1,1)}_\ell(\tau)\right)\frac{(2\pi i z)^\ell}{\ell!}\Bigg).\]
\end{prop}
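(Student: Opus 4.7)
The plan is to start from the closed formula
\[
G(t)=q^{1/4}\prod_{m=1}^{\infty}\frac{(1-q^{2m})^2}{(1+q^{2m-1})^2}\cdot\frac{\Theta_3(t;q)}{\Theta_1(t;q)}
\]
already established in Proposition~\ref{prop:1-point}, substitute the triple product expansions from Lemma~\ref{lem:theta inf prod}, and simplify to the form
\[
(2\pi i z)\,G(t)=\frac{\pi z}{\sin(\pi z)}\prod_{m\geq 1}\frac{(1-q^{2m})^2}{(1+q^{2m-1})^2}\cdot\frac{(1+q^{2m-1}t)(1+q^{2m-1}/t)}{(1-q^{2m}t)(1-q^{2m}/t)},
\]
using $t^{1/2}-t^{-1/2}=2i\sin(\pi z)$. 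It then suffices to show that the logarithm of the right-hand side equals the Eisenstein generating series in the statement.

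Next I would take $\log$ of this identity and split the result into three contributions. For the prefactor, I use the classical expansion
\[
\log\frac{\pi z}{\sin(\pi z)}=\sum_{n\geq 1}\frac{\zeta(2n)}{n}z^{2n}=-\sum_{\ell\in 2\Z_+}\frac{B_\ell}{\ell}\cdot\frac{(2\pi iz)^{\ell}}{\ell!},
\]
which follows from the product formula for $\sin$ together with the Euler identity relating $\zeta(2n)$ and $B_{2n}$. For the other two infinite products, I expand $-\log(1-x)=\sum_k x^k/k$ and $\log(1+x)=\sum_k(-1)^{k+1}x^k/k$ and sum over $m$ using the geometric identities $\sum_{m\geq 1}q^{2mk}=q^{2k}/(1-q^{2k})$ and $\sum_{m\geq 1}q^{(2m-1)k}=q^{k}/(1-q^{2k})$. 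The crucial combinatorial step is the expansion
\[
t^{k}+t^{-k}-2=2\sum_{\ell\in 2\Z_+}\frac{k^{\ell}(2\pi iz)^{\ell}}{\ell!},
\]
which converts each $k$-series into a Lambert-type series indexed by $\ell$.

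Performing the bookkeeping, the $\prod_m(1-q^{2m})^{2}/[(1-q^{2m}t)(1-q^{2m}/t)]$ piece contributes
\[
2\sum_{\ell\in 2\Z_+}\frac{(2\pi iz)^{\ell}}{\ell!}\sum_{k\geq 1}k^{\ell-1}\frac{q^{2k}}{1-q^{2k}}=2\sum_{\ell\in 2\Z_+}\frac{(2\pi iz)^{\ell}}{\ell!}\Big(G_\ell(\tau)+\frac{B_\ell}{2\ell}\Big),
\]
using Definition~\ref{def:Gl Gl}, while the $\prod_m(1+q^{2m-1}t)(1+q^{2m-1}/t)/(1+q^{2m-1})^{2}$ piece contributes
\[
-2\sum_{\ell\in 2\Z_+}\frac{(2\pi iz)^{\ell}}{\ell!}\sum_{k\geq 1}(-1)^{k}k^{\ell-1}\frac{q^{k}}{1-q^{2k}}=-2\sum_{\ell\in 2\Z_+}\frac{(2\pi iz)^{\ell}}{\ell!}G^{(1,1)}_\ell(\tau),
\]
after recognizing that the inner double sum, indexed by $n=dk$ with $d$ odd, matches the Lambert expansion of $G^{(1,1)}_\ell(\tau)$.

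Adding the three contributions, the $\pm B_\ell/\ell$ terms cancel exactly, yielding
\[
\log\bigl((2\pi iz)G(t)\bigr)=\sum_{\ell\in 2\Z_+}2\bigl(G_\ell(\tau)-G^{(1,1)}_\ell(\tau)\bigr)\frac{(2\pi iz)^{\ell}}{\ell!},
\]
which is the desired identity after exponentiating. The main obstacle is the careful parity/sign tracking in the final paragraph: one must verify that the constant term $-B_\ell/\ell$ from $\log(\pi z/\sin(\pi z))$ is precisely cancelled by $+B_\ell/\ell$ arising from the $-B_\ell/(2\ell)$ summand in $G_\ell$, and that re-indexing $n=dk$ with $d$ odd faithfully reproduces the character $(-1)^{n/d}$ appearing in $G^{(1,1)}_\ell$. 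No analytic subtlety is needed beyond the absolute convergence of the triple products for $|q|<1$, which is already provided by Lemma~\ref{lem:theta inf prod}.
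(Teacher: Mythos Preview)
Your proof is correct and essentially identical to the paper's: both take logarithms of the infinite-product formula for $G(t)$, expand each factor via $t^{k}+t^{-k}-2=2\sum_{\ell\text{ even}}k^{\ell}(2\pi iz)^{\ell}/\ell!$, identify the resulting Lambert series with $G_\ell(\tau)+B_\ell/(2\ell)$ and $G^{(1,1)}_\ell(\tau)$, and invoke the Bernoulli-number identity (the paper's equation \eqref{eqn:Bernoulli}, equivalently your expansion of $\log(\pi z/\sin\pi z)$) to absorb the constant terms. The only difference is cosmetic ordering---you multiply by $2\pi iz$ before taking logarithms and track three pieces, whereas the paper computes $\log G(t)-\log\frac{1}{t^{1/2}-t^{-1/2}}$ first and applies the Bernoulli identity at the end.
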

\begin{proof}
	First,
	applying the expansions
	\begin{align*}
		\log(1-q^{2m})=&-\sum\limits_{n=1}^{\infty}\frac{q^{2mn}}{n},\\
		\log(1+q^{2m-1})=&\sum\limits_{n=1}^{\infty}(-1)^{n-1}\frac{q^{(2m-1)n}}{n}
	\end{align*}
	to the formula \eqref{eqn:one point formula} for $G(t)$ in terms of theta functions and using the infinite product formulas in Lemma \ref{lem:theta inf prod},
	we have
	\begin{align}\label{eqn:logG 1}
	\begin{split}
		&\log G(t)-\log\frac{1}{t^{1/2}-t^{-1/2}}\\
		&\qquad=\sum\limits_{m=1}^{\infty}\sum\limits_{n=1}^{\infty}\frac{1}{n}(t^n+t^{-n}-2)(q^{2mn}-(-1)^nq^{(2m-1)n})\\
		&\qquad=\sum\limits_{\ell\in 2\Z_+}\sum\limits_{m=1}^{\infty}\sum\limits_{n=1}^{\infty}2n^{\ell-1}
		\big(e^{2mn\pi i \tau}-(-1)^ne^{(2m-1)n\pi i \tau}\big)\frac{(2\pi iz)^\ell}{\ell!}.
	\end{split}
	\end{align}
	
Denote the coefficient of $z^\ell/\ell!$ in the last line of the equation above by
\begin{align}\label{eqn:def mbG}
	\begin{split}
	H_\ell(\tau)
	&:=\sum\limits_{m=1}^{\infty}\sum\limits_{n=1}^{\infty} n^{\ell-1}\big(e^{2mn\pi i \tau}-(-1)^ne^{(2m-1)n\pi i \tau}\big)\\
	&=\sum\limits_{n=1}^{\infty}\sum\limits_{d|n\atop d>0}\left(\frac{n}{d}\right)^{\ell-1}e^{2\pi i n\tau}-\sum\limits_{n=1}^{\infty}\sum\limits_{d|n,2\nmid d\atop d>0}(-1)^{n/d}\left(\frac{n}{d}\right)^{\ell-1}e^{\pi i n\tau}.
	\end{split}
\end{align}
Then the formula \eqref{eqn:logG 1} can be rewritten as
\begin{align}\label{eqn:G as mbG}
	G(t)=\frac{1}{e^{\pi i z}-e^{-\pi i z}}
	\exp\Bigg(\sum\limits_{\ell\in 2\Z_+} 2H_\ell(\tau)\frac{(2\pi iz)^\ell}{\ell!}\Bigg).
\end{align}
From the Definition \ref{def:Gl Gl} for $G_\ell(\tau)$ and $G^{(1,1)}_\ell(\tau)$,
the function $H_\ell(\tau)$ defined in equation \eqref{eqn:def mbG} is also given by
\[H_\ell(\tau)= G_\ell(\tau)+\frac{B_\ell}{2\ell}-G^{(1,1)}_\ell(\tau).\]
Moreover,
combining the following identity for Bernoulli numbers
\begin{align}\label{eqn:Bernoulli}
	\exp\Bigg(\sum\limits_{\ell\in 2\Z_+}\frac{B_\ell}{\ell}\cdot\frac{(2\pi i z)^\ell}{\ell!}\Bigg)
	=\frac{e^{\pi i z}-e^{-\pi i z}}{2\pi i z},
\end{align}
formula \eqref{eqn:G as mbG} is reduced to
\[G(t)=\frac{1}{2\pi i z}\exp\Bigg(\sum\limits_{\ell\in 2\Z_+} 2\left(G_\ell(\tau)- G^{(1,1)}_\ell(\tau)\right)\frac{(2\pi i z)^\ell}{\ell!}\Bigg).\]
\end{proof}

\subsection{An explicit formula for the two-point function}
In this subsection,
we derive an explicit formula for the two-point function
\begin{align}\label{eqn:two point}
	G(t_1,t_2)&=\left\langle \sum\limits_{i=1}^{\infty} t_1^{\lambda_i-i+\frac{1}{2}}
	\cdot \sum\limits_{i=1}^{\infty} t_2^{\lambda_i-i+\frac{1}{2}} \right\rangle^s_{q},
\end{align}
which proves the equation \eqref{eqn:G2 main} in Corollary \ref{cor:main one}.

Recall that,
Theorem \ref{main} gives the following $q$-difference equation for the two-point function $G(t_1,t_2)$:
\begin{align}\label{eqn:q-diff two}
	G(q^{-2}t_1,t_2)=-G(t_1,t_2)+G(t_1/t_2)-G(t_1t_2).
\end{align}

The following $q$-difference equations for quotients of theta functions and their derivatives are useful to analyze the two-point function $G(t_1,t_2)$.
\begin{lem}\label{lem:theta q-diff2}
	We have
	\begin{align*}
		\frac{\Theta_3(q^{-2}t_1/t_2;q)}{\Theta_1(q^{-2}t_1/t_2;q)}&=-\frac{\Theta_3(t_1/t_2;q)}{\Theta_1(t_1/t_2;q)},\\
		\frac{\Theta^{'}_3(q^{-2}t_1t_2;q)}{\Theta_1(q^{-2}t_1t_2;q)}&=-\frac{\Theta^{'}_3(t_1t_2;q)}{\Theta_1(t_1t_2;q)}-\frac{\Theta^3(t_1t_2;q)}{\Theta_1(t_1t_2;q)},\\
		\frac{\Theta^{'}_1(q^{-2}t_1;q)}{\Theta_1(q^{-2}t_1;q)}&=\frac{\Theta^{'}_1(t_1;q)}{\Theta_1(t_1;q)}+1.
	\end{align*}
\end{lem}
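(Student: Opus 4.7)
\medskip

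The plan is to derive each of the three identities by the same mechanism: use Lemma \ref{lem:theta q-diff} and Corollary \ref{cor:theta' q-diff}, which give the transformation rules under $t\mapsto q^{2}t$, and then substitute $t\mapsto q^{-2}t$ to obtain the inverse transformation rules. Once we have explicit formulas for $\Theta_i(q^{-2}t;q)$ and $\Theta_i'(q^{-2}t;q)$ in terms of $\Theta_i(t;q)$ and $\Theta_i'(t;q)$, each ratio in the lemma reduces to algebraic simplification.

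Concretely, I first translate the four rules from Lemma \ref{lem:theta q-diff} and Corollary \ref{cor:theta' q-diff}. Replacing $t$ with $q^{-2}t$ in $\Theta_1(q^{2}t;q)=-qt\,\Theta_1(t;q)$ and $\Theta_3(q^{2}t;q)=qt\,\Theta_3(t;q)$ yields expressions of the form $\Theta_i(q^{-2}t;q)=\pm q\,t^{-1}\Theta_i(t;q)$. Doing the same for the two formulas in Corollary \ref{cor:theta' q-diff} yields expressions for $\Theta_i'(q^{-2}t;q)$ as linear combinations of $\Theta_i(t;q)$ and $\Theta_i'(t;q)$ with coefficients that are scalar multiples of $q t^{-1}$.

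The first identity of Lemma \ref{lem:theta q-diff2} is then immediate: substituting $t=t_1/t_2$ into the two inverse rules for $\Theta_1$ and $\Theta_3$, the prefactors $\pm q t^{-1}$ cancel in the ratio, leaving only the relative sign. For the second identity I apply the derivative rule at $t=t_1t_2$ to rewrite $\Theta_3'(q^{-2}t;q)$ as a combination of $\Theta_3(t;q)$ and $\Theta_3'(t;q)$, divide by the inverse rule for $\Theta_1(q^{-2}t;q)$, and split the resulting fraction to recognize the two pieces on the right-hand side. For the third identity I apply the derivative rule to $\Theta_1$ and divide by the inverse rule for $\Theta_1$ itself; the constant $\pm 1$ arises because the $\Theta_1$-term produced by the derivative rule exactly reproduces $\Theta_1(q^{-2}t;q)$ up to sign.

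I expect no substantive obstacle: the argument is a bookkeeping of $q$- and $t$-prefactors and a clean cancellation after substitution. The one place to be careful is the derivative rule, since $\Theta_i'(t;q):=t\,\partial_t\Theta_i(t;q)$ is the logarithmic derivative and one must verify that applying $t\,\partial_t$ to the identities in Lemma \ref{lem:theta q-diff} correctly produces the extra additive term in Corollary \ref{cor:theta' q-diff} when reparametrized by $t\mapsto q^{-2}t$; this is the origin of the $\pm 1$ terms appearing in the second and third identities and should be checked with care to get the signs right.
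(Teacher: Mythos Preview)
Your proposal is correct and is exactly the approach the paper takes: the paper's proof is the single sentence ``These equations can be directly proved by using Lemma \ref{lem:theta q-diff} and Corollary \ref{cor:theta' q-diff},'' and your plan of inverting those transformation rules via $t\mapsto q^{-2}t$ and then forming the relevant ratios is precisely what that sentence means in practice. Your cautionary remark about tracking the signs coming from the logarithmic-derivative convention $\Theta_i'(t;q)=t\,\partial_t\Theta_i(t;q)$ is well placed, since that is the only point where a slip can occur.
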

\begin{proof}
	These equations can be directly proved by using Lemma \ref{lem:theta q-diff} and Corollary \ref{cor:theta' q-diff}.
\end{proof}

\begin{prop}\label{thm:two-point formula}
	The two-point function $G(t_1,t_2)$ admits a meromorphic continuation to $(t_1,t_2)\in\mathbb{C}^2$
	and more explicitly,
	it is given by
	\begin{align}\label{eqn:G2}
		\begin{aligned}
			G(t_1,t_2)&=q^{1/4}\prod_{m=1}^{\infty}\frac{(1-q^{2m})^2}{(1+q^{2m-1})^2}
			\cdot\bigg[\frac{\Theta^{'}_3(t_1t_2;q)}{\Theta_1(t_1t_2;q)}\\
			&\qquad-\frac{\Theta^{'}_1(t_1;q)}{\Theta_1(t_1;q)}\cdot\frac{\Theta_3(t_1/t_2;q)}{\Theta_1(t_1/t_2;q)}-\frac{\Theta^{'}_1(t_2;q)}{\Theta_1(t_2;q)}\cdot\frac{\Theta_3(t_2/t_1;q)}{\Theta_1(t_2/t_1;q)}\bigg].
		\end{aligned}
	\end{align}
\end{prop}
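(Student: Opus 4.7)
The plan is to mimic the proof of Proposition \ref{prop:1-point}: establish a meromorphic continuation of $G(t_1,t_2)$ to $(\mathbb{C}^*)^2$, verify that the candidate right-hand side $\tilde G(t_1,t_2)$ of \eqref{eqn:G2} satisfies the same $q$-difference equation with the same pole structure, and then force the difference $G - \tilde G$ to vanish.

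First I would define the normal-ordered two-point function $:\!G(t_1,t_2)\!:$ by replacing both $T(t_j)$ by $:\!T(t_j)\!:$ in the vacuum expression \eqref{eqn:G as vev}. The eigenvalue bounds used in the one-point proof (in particular $\|\mathfrak{S}_\pm(\lambda)\|=|\lambda|/2$ for self-conjugate $\lambda$) extend term-by-term to show absolute convergence of $:\!G(t_1,t_2)\!:$ on a polyannulus of the form $\{|q|^{2-\epsilon}<|t_i|<|q|^{-2+\epsilon}\}$. The remainder $G - :\!G\!:$ is obtained by Wick reduction of $T(t_1)T(t_2)$: the anti-commutator contractions produce a c-number piece whose poles concentrate along $t_1=1$, $t_2=1$, $t_1 t_2 = 1$ and $t_1=t_2$, plus a one-body piece expressible in terms of the already-known meromorphic one-point function. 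Consequently $G(t_1,t_2)$ is meromorphic on this polyannulus, and iterating the $q$-difference equation \eqref{eqn:q-diff two}, whose right-hand side is meromorphic on $\mathbb{C}^*$ by Proposition \ref{prop:1-point}, extends $G$ meromorphically to all of $(\mathbb{C}^*)^2$.

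Next I would check directly that $\tilde G$ satisfies the same equation. Under $t_1\mapsto q^{-2}t_1$, Lemma \ref{lem:theta q-diff2} transforms each of the three summands of $\tilde G$ (using $\Theta_i(q^2 u)=\pm q u\,\Theta_i(u)$ for the term at argument $t_2/t_1$). Collecting the shifts gives
\[
\tilde G(q^{-2}t_1,t_2) + \tilde G(t_1,t_2) = C\Bigl[\tfrac{\Theta_3(t_1/t_2;q)}{\Theta_1(t_1/t_2;q)} - \tfrac{\Theta_3(t_1t_2;q)}{\Theta_1(t_1t_2;q)}\Bigr],
\]
where $C = q^{1/4}\prod_{m\ge1}(1-q^{2m})^2/(1+q^{2m-1})^2$. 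By the one-point formula \eqref{eqn:one point formula}, the right-hand side equals $G(t_1/t_2) - G(t_1 t_2)$, so $\tilde G$ satisfies \eqref{eqn:q-diff two}; the symmetric equation in $t_2$ follows from the manifest $t_1\leftrightarrow t_2$ symmetry of $\tilde G$.

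Finally, set $H := G - \tilde G$. The pole loci of $\tilde G$ (read off from the triple product for $\Theta_1$) are exactly $\{t_1=q^{2m}\}, \{t_2=q^{2m}\}, \{t_1 t_2=q^{2m}\}, \{t_1/t_2=q^{2m}\}$, which coincide with the potential pole loci of $G$ identified in the first step. I would match the residues at the base poles $t_1=1$ and $t_1=t_2$ (for generic $t_2$); since $H$ satisfies the homogeneous equations $H(q^{-2}t_1,t_2)=-H(t_1,t_2)$ and $H(t_1,q^{-2}t_2)=-H(t_1,t_2)$, this propagates residue matching to all other poles, so $H$ is holomorphic on $(\mathbb{C}^*)^2$. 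For fixed generic $t_2$, the function $H(\cdot,t_2)^2$ is holomorphic on $\mathbb{C}^*$ and $q^2$-periodic, hence descends to a holomorphic function on the compact elliptic curve $\mathbb{C}^*/q^{2\mathbb{Z}}$ and is therefore constant in $t_1$; continuity then forces $H(\cdot,t_2)$ itself to be independent of $t_1$, and the quasi-periodicity $H(q^{-2}t_1,t_2)=-H(t_1,t_2)$ forces this constant to vanish, whence $H\equiv 0$.

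The main obstacle I anticipate is the residue bookkeeping in the last step, specifically at the "cross" pole $t_1=t_2$ which has no analogue in the one-point case and requires an explicit Wick computation of the fermionic bilinear $T(t_1)T(t_2)$ to reconcile the singular part of $G$ with that of $\Theta_3(t_1/t_2)/\Theta_1(t_1/t_2)$ appearing in $\tilde G$.
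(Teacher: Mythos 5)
Your overall strategy is the same as the paper's: normal-order to get analyticity on a base region, use the $q$-difference equation \eqref{eqn:q-diff two} to continue and to locate the singular loci, verify that $\tilde G$ satisfies the same equation via Lemma \ref{lem:theta q-diff2}, match principal parts, and kill the holomorphic anti-periodic difference by a Liouville-type argument (your squaring trick on $\mathbb{C}^*/q^{2\mathbb{Z}}$ is a clean way to phrase the paper's final step, and your propagation of residue matching through the homogeneous relation $H(q^{-2}t_1,t_2)=-H(t_1,t_2)$ is a legitimate streamlining of the paper's explicit computation of all principal parts). However, two concrete points need repair. First, the symmetric polyannulus $\{|q|^{2-\epsilon}<|t_i|<|q|^{-2+\epsilon}\}$ is too large: since $\|\mathfrak{S}_\pm(\lambda)\|=|\lambda|/2$ for self-conjugate $\lambda$, the dominant term of $q^{|\lambda|}\sigma_\lambda(t_1)\sigma_\lambda(t_2)$ behaves like $q^{|\lambda|}|t_1t_2|^{|\lambda|/2}$, which diverges as soon as $|t_1t_2|>|q|^{-2}$; your region allows $|t_1t_2|$ up to $|q|^{-4+2\epsilon}$. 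One must restrict to a region with $|t_1t_2|<|q|^{-2+\epsilon}$ (the paper takes the asymmetric region $q^{2-2\epsilon}<|t_1|<q^{-2+2\epsilon}$, $1<|t_2|<q^{-\epsilon}$), which is harmless for the rest of the argument but changes which loci you see in the base region.

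Second, the ``Wick reduction'' picture is off. The operators $T(t_1)$ and $T(t_2)$ are simultaneously diagonal on the partition basis (each $\psi_k\psi_k^*$ acts diagonally), so there are no anticommutator cross-contractions: the decomposition is simply
\begin{equation*}
G(t_1,t_2)=\frac{1}{(t_1^{1/2}-t_1^{-1/2})(t_2^{1/2}-t_2^{-1/2})}+\frac{:G(t_1):}{t_2^{1/2}-t_2^{-1/2}}+\frac{:G(t_2):}{t_1^{1/2}-t_1^{-1/2}}+:G(t_1,t_2):,
\end{equation*}
whose only singularities in the base region lie on $\{t_1=1\}$ and $\{t_2=1\}$. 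There is \emph{no} pole of $G$ or of $\tilde G$ along $t_1=t_2$ or $t_1t_2=1$ (in $\tilde G$ the residues of the two $\Theta_1'/\Theta_1$ terms cancel at $t_1=t_2$ by the oddness of $\Theta_3/\Theta_1$ under $u\mapsto u^{-1}$, and $\Theta_3'(u)/\Theta_1(u)$ is regular at $u=1$ because $\Theta_3'(1)=0$). The genuine poles along $t_1t_2=q^{2m}$ and $t_1/t_2=q^{2m}$ occur only for $m\in\mathbb{Z}\setminus\{0\}$ and are injected by the inhomogeneous one-point terms $G(t_1t_2)-G(t_1/t_2)$ when the $q$-difference equation is iterated, with residues $(-1)^m m$ up to sign; these cancel in $H=G-\tilde G$ because both sides satisfy the same inhomogeneous equation and are regular at the $m=0$ representatives of those orbits. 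So the ``main obstacle'' you anticipate at $t_1=t_2$ is vacuous, while the matching you must actually perform is the residue $G(t_2)$ along $t_1=1$ (and, for the orbit $\{t_2=q^{2m}\}$, the symmetric statement via the $t_2$-equation).
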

\begin{proof}
Denote the two-point function with normal ordering as
\begin{align*}
	:G(t_1,t_2):=\frac{1}{\sum\limits_{\lambda\in\mathscr{P}^s} \langle\lambda|q^{H}|\lambda^t\rangle}
	\cdot \sum\limits_{\lambda\in\mathscr{P}^s}\langle\lambda| q^{H}:T(t_1): :T(t_2): |\lambda^t\rangle.
\end{align*}
Similar to the one-point function case,
we first show the analyticity of $:G(t_1,t_2):$ in the following region
\begin{align*}
	\Delta_{\epsilon,2}:=\big\{(t_1,t_2)\in\mathbb{C}^2\big|
	\ q^{2-2\epsilon}<|t_1|< q^{-2+2\epsilon},
	1<|t_2|< q^{-\epsilon}\big\},
\end{align*}
where $\epsilon$ is a small positive number. We only need to estimate 
\begin{align*}
	q^H:T(t_1): :T(t_2): |\lambda\rangle
	= q^{|\lambda|}\sigma_\lambda(t_1) \sigma_\lambda (t_2)
	\cdot |\lambda\rangle,
\end{align*}
where $\sigma_\lambda(t):=\sum_{k\in\mathfrak{S}_(\lambda) +(\lambda)}t^k-\sum_{k\in\mathfrak{S}_-(\lambda)}t^k$.
This will split into two cases. For $|t_1|\geq1$,
we have
\begin{align}\label{eqn:estone}
	\begin{split}
		|q^{|\lambda|}\sigma_\lambda(t_1) \sigma_\lambda (t_2)|
		\leq& q^{|\lambda|}\big(|t_1|^{\|\mathfrak{S}_+(\lambda)\|} +|\mathfrak{S}_-(\lambda)|\big)
		\cdot\big(|t_2|^{\|\mathfrak{S}_+(\lambda)\|} +|\mathfrak{S}_-(\lambda)|\big)\\
		\leq&(|q|^{\epsilon|\lambda|}+|\lambda|q^{|\lambda|})
		\cdot(|q|^{-\epsilon|\lambda|/2}+|\lambda|)\\
		=&|q|^{\epsilon|\lambda|/2}+|\lambda||q|^{(1-\epsilon/2)|\lambda|}+|\lambda||q|^{\epsilon|\lambda|}+|\lambda|^2|q|^{|\lambda|},
	\end{split}	
\end{align}
where we have used the facts $|\mathfrak{S}_-(\lambda)|\leq|\lambda|$
and $\|\mathfrak{S}_+(\lambda)\|=|\lambda|/2$
for the self-conjugate partition $\lambda$.
Similarly,
for $|t_1|<1$,
we have
\begin{align}\label{eqn:esttwo}
	\begin{split}
		|q^{|\lambda|}\sigma_\lambda(t_1) \sigma_\lambda (t_2)|
		\leq& q^{|\lambda|}\big(|t_1|^{-\|\mathfrak{S}_-(\lambda)\|} +|\mathfrak{S}_+(\lambda)|\big)
		\cdot\big(|t_2|^{\|\mathfrak{S}_+(\lambda)\|} +|\mathfrak{S}_-(\lambda)|\big)\\
		\leq&(|q|^{\epsilon|\lambda|}+|\lambda|q^{|\lambda|})
		\cdot(|q|^{-\epsilon|\lambda|/2}+|\lambda|)\\
		=&|q|^{\epsilon|\lambda|/2}+|\lambda||q|^{(1-\epsilon/2)|\lambda|}+|\lambda||q|^{\epsilon|\lambda|}+|\lambda|^2|q|^{|\lambda|}.
	\end{split}
\end{align}
As a consequence, by combining equations \eqref{eqn:estone} and \eqref{eqn:esttwo}, 
the series
\begin{align*}
	\sum\limits_{\lambda\in\mathscr{P}^s}\langle\lambda| q^{H}:T(t_1): :T(t_2): |\lambda^t\rangle
\end{align*}
is absolutely convergent in the region $\Delta_{\epsilon,2}$,
which implies the analyticity of $:G(t_1,t_2):$ in the same region.
Recall that 
\begin{align*}
	T(t)=&\frac{1}{t^{1/2}-t^{-1/2}}+:T(t):,
\end{align*}
it follows that $G(t_1,t_2)$ could be considered as a meromorphic function in the region $\Delta_{\epsilon,2}$ since
\begin{align}\label{eqn:G2 t=1}
	\begin{split}
		G(t_1,t_2)=\frac{1}{(t_1^{1/2}-t_1^{-1/2})(t_2^{1/2}-t_2^{-1/2})}&\\
		+\frac{:G(t_1):}{t_2^{1/2}-t_2^{-1/2}}&+\frac{:G(t_2):}{t_1^{1/2}-t_1^{-1/2}}+:G(t_1,t_2):.
	\end{split}	
\end{align}
Here we can see that the only singularity of $G(t_1,t_2)$ in $\Delta_{\epsilon,2}$ is divisor $\{t_1=1\}$. 
Furthermore, in $\Delta_{\epsilon,2}$,
\begin{align}\label{eqn:m0}
	G(t_1,t_2)=\frac{G(t_2)}{t_1^{1/2}-t_1^{-1/2}}+(\text{regular on\ }\{t_1=1\}).
\end{align}

By the $q$-difference equation \eqref{eqn:q-diff two},
the two-point function $G(t_1,t_2)$ could be extended to a meromorphic function on the following larger region
\begin{align*}
	\widetilde{\Delta_{\epsilon,2}}:=\{(t_1,t_2)\in\mathbb{C}^2\ |\ 
	t_1\in\mathbb{C}\setminus \{0\},
	\quad 1<|t_2|< q^{-\epsilon}\}.
\end{align*}
As a consequence,
$G(t_1,t_2)$ is singular at
\begin{align}
	\begin{split}
		\{q^{2m}t_1=1|\ m\in \mathbb{Z}\}
		&\cup \big\{q^{2m}t_1t_2=1 |\ m\in \mathbb{Z}\setminus\{0\}\big\}\\ 
		&\cup \big\{q^{2m}t_1/t_2=1 |\ m\in \mathbb{Z}\setminus\{0\}\big\}
	\end{split}	
\end{align}
and has no other singularity in $\widetilde{\Delta_{\epsilon,2}}$.
Below,
we explain this in detail.

After multiplying by $q^{2m}$, the region $\Delta_{\epsilon,2}$ is translated to a new region denoted by
\begin{align}
	\Delta_{\epsilon,2}^m:=\{(t_1,t_2)\in\mathbb{C}^2\ |\ 
	q^{2m+2-2\epsilon}<|t_1|<q^{2m-2+2\epsilon},
	1<|t_2|< q^{-\epsilon}\}.
\end{align}
For the first case of $m=1$,
by applying the $q$-difference equation \eqref{eqn:q-diff two} for $G(t_1,t_2)$,
we have
\begin{align*}
	G(t_1,t_2)&=-G(q^{-2}t_1,t_2)-G(t_1/t_2)+G(t_1t_2)\\
	&=-\frac{G(t_2)}{(q^{-2}t_1)^{1/2}-(q^{-2}t_1)^{-1/2}}+(\text{regular on\ }\{q^{-2}t_1=1\})
\end{align*}
around $\{q^{-2}t_1=1\}$.
However,
it is worth noting that $\{q^{-2}t_1=1\}$ is not the only singular locus of $G(t_1,t_2)$ in this region $\Delta_{\epsilon,2}^1$,
since $G(t_1/t_2)$ and $G(t_1t_2)$ are singular at some other loci.
Indeed,
by the singularity of the one-point function analyzed in the proof of Proposition \ref{prop:1-point},
we have
\begin{align*}
	G(t_1,t_2)&=-G(q^{-2}t_1,t_2)-G(t_1/t_2)+G(t_1t_2)\\
	&=\frac{1}{(q^{-2}t_1/t_2)^{1/2}-(q^{-2}t_1/t_2)^{-1/2}}+(\text{regular on\ }\{q^{-2}t_1/t_2=1\})
\end{align*}
and
\begin{align*}
	G(t_1,t_2)
	=-\frac{1}{(q^{-2}t_1t_2)^{1/2}-(q^{-2}t_1t_2)^{-1/2}}+(\text{regular on\ }\{q^{-2}t_1t_2=1\})
\end{align*}
around $\{q^{-2}t_1/t_2=1\}$ and $\{q^{-2}t_1t_2=1\}$, respectively.
The other regions $\Delta_{\epsilon,2}^m$ for $m\neq0,1$ can be analyzed in a similar method.
More precisely,
for the region $(t_1,t_2)\in \Delta_{\epsilon,2}^m$ with $m>0$,
inductively using the $q$-difference equation \eqref{eqn:q-diff two} for $G(t_1,t_2)$ and the $q$-difference equation \eqref{oneqdiff} for $G(t)$, we have
\begin{align}
	\begin{split}
	G(t_1,t_2)&=-G(q^{-2}t_1,t_2)-G(t_1/t_2)+G(t_1t_2)\\
	&=(-1)^m\big(G(q^{-2m}t_1,t_2)-mG(t_1/t_2)+mG(t_1t_2)\big).
	\end{split}
\end{align}
Thus,
we obtain that,
in the region $\Delta_{\epsilon,2}^m$ for $m>0$,
$G(t_1,t_2)$ has three singular loci and more precisely,
\begin{align}
	G(t_1,t_2)
	&=\frac{(-1)^mG(t_2)}{(q^{-2m}t_1)^{1/2}-(q^{-2m}t_1)^{-1/2}}
	+(\text{regular on\ }\{q^{-2m}t_1=1\}) \label{eqn:m1}\\
	&=\frac{(-1)^{m-1}m}{(q^{-2m}t_1/t_2)^{1/2}-(q^{-2m}t_1/t_2)^{-1/2}}+(\text{regular on\ }\{q^{-2m}t_1/t_2=1\}) \label{eqn:m2}\\
	&=\frac{(-1)^{m}m}{(q^{-2m}t_1t_2)^{1/2}-(q^{-2m}t_1t_2)^{-1/2}}+(\text{regular on\ }\{q^{-2m}t_1t_2=1\}). \label{eqn:m3}
\end{align}
The singular locus in the region $\Delta_{\epsilon,2}^m$ with $m<0$ can be analyzed by the same way.
Then one has equations \eqref{eqn:m1}, \eqref{eqn:m2}, and \eqref{eqn:m3} hold for all $m\in\mathbb{Z}\setminus\{0\}$.

On the other hand,
denote by $\tilde{G}(t_1,t_2)$ the right hand side of equation \eqref{eqn:G2}. We are going to show that $\tilde{G}(t_1,t_2)$ also satisfies the $q$-difference equation \eqref{eqn:q-diff two} and has the same singularities of $G(t_1,t_2)$ in $\widetilde{\Delta_{\epsilon,2}}$.  

From the Lemma \ref{lem:theta q-diff2},
we have the following $q$-difference equation
\begin{align}
	\begin{split}
	\tilde{G}(q^{-2}t_1,t_2)&=q^{1/4}\prod_{m=1}^{\infty}\frac{(1-q^{2m})^2}{(1+q^{2m-1})^2}
	\cdot\bigg[\frac{\Theta^{'}_3(q^{-2}t_1t_2;q)}{\Theta_1(q^{-2}t_1t_2;q)}-\\
	&\qquad\frac{\Theta^{'}_1(q^{-2}t_1;q)}{\Theta_1(q^{-2}t_1;q)}\cdot\frac{\Theta_3(q^{-2}t_1/t_2;q)}{\Theta_1(q^{-2}t_1/t_2;q)}-\frac{\Theta^{'}_1(t_2;q)}{\Theta_1(t_2;q)}\cdot\frac{\Theta_3(t_2/q^{-2}t_1;q)}{\Theta_1(t_2/q^{-2}t_1;q)}\bigg]\\
	&=-\tilde{G}(t_1,t_2)+G(t_1/t_2)-G(t_1t_2)
	\end{split}
\end{align}
for the $\tilde{G}(t_1,t_2)$,
which is equivalent to the $q$-differential equation \eqref{eqn:q-diff two} satisfied by $G(t_1,t_2)$ as desired.

Now we analyze the singularities of $\tilde{G}(t_1,t_2)$ in $\widetilde{\Delta_{\epsilon,2}}$. We start from converting the formula of $\tilde{G}(t_1,t_2)$ into the following form:
\begin{align}\label{eqn:G2fixed}
	\begin{aligned}
		\tilde{G}(t_1,t_2)&=q^{1/4}\prod_{m=1}^{\infty}\frac{(1-q^{2m})^2}{(1+q^{2m-1})^2}
		\cdot\bigg[\frac{\Theta^{'}_3(t_1t_2;q)}{\Theta_3(t_1t_2;q)}\cdot\frac{\Theta_3(t_1t_2;q)}{\Theta_1(t_1t_2;q)}-\\
		&\ \qquad\frac{\Theta^{'}_1(t_1;q)}{\Theta_1(t_1;q)}\cdot\frac{\Theta_3(t_1/t_2;q)}{\Theta_1(t_1/t_2;q)}-\frac{\Theta^{'}_1(t_2;q)}{\Theta_1(t_2;q)}\cdot\frac{\Theta_3(t_2/t_1;q)}{\Theta_1(t_2/t_1;q)}\bigg],
	\end{aligned}
\end{align}
where the derivative terms can be expanded as follows,
\begin{align}\label{thetader1}
	\begin{split}
		\frac{\Theta_1^{'}(t;q)}{\Theta_1(t;q)}&=t\frac{\partial}{\partial t}\log\Theta_1(t;q)\\
		&=\frac{1}{2}\cdot\frac{t^{1/2}+t^{-1/2}}{t^{1/2}-t^{-1/2}}-\sum\limits_{m=1}^\infty\frac{tq^{2m}}{1-tq^{2m}}+\sum\limits_{m=1}^\infty\frac{t^{-1}q^{2m}}{1-t^{-1}q^{2m}},
	\end{split}
\end{align}
and
\begin{align}\label{thetader3}
	\frac{\Theta_3^{'}(t;q)}{\Theta_3(t;q)}&=t\frac{\partial}{\partial t}\log\Theta_3(t;q)=\sum\limits_{m=1}^\infty\frac{tq^{2m-1}}{1+tq^{2m-1}}-\sum\limits_{m=1}^\infty\frac{t^{-1}q^{2m-1}}{1+t^{-1}q^{2m-1}}.
\end{align}
By Lemma \ref{lem:theta inf prod}, we can see that $\tilde{G}(t_1,t_2)$ is meromorphic in $\mathbb{C}^2$ and its singularities are only contributed by factors in denominators. Especially in $\widetilde{\Delta_{\epsilon,2}}$, the singularities of $\tilde{G}(t_1,t_2)$ are only located at
\begin{align*}
	\{q^{2m}t_1=1|\ m\in \mathbb{Z}\}
	\cup \big\{q^{2m}t_1t_2=1 |\ m\in \mathbb{Z}\setminus\{0\}\big\}.
\end{align*}
Below, we confirm these singularities by computing the residues of $\tilde{G}(t_1,t_2)$ along these loci.

First, for the divisor $\{t_1=1\}$ in $\Delta_{\epsilon,2}$, we have
\begin{align}
	\begin{split}
	\lim\limits_{t_1\rightarrow 1}(t_1^{1/2}-t_1^{-1/2})\tilde{G}(t_1,t_2)&=-q^{1/4}\prod_{m=1}^{\infty}\frac{(1-q^{2m})^2}{(1+q^{2m-1})^2}
	\cdot\frac{\Theta_3(1/t_2;q)}{\Theta_1(1/t_2;q)}\\
	&=q^{1/4}\prod_{m=1}^{\infty}\frac{(1-q^{2m})^2}{(1+q^{2m-1})^2}
	\cdot\frac{\Theta_3(t_2;q)}{\Theta_1(t_2;q)}\\
	&=G(t_2),
	\end{split}
\end{align}
where we have applied the following equation (see the infinite product formulas in Lemma \ref{lem:theta inf prod})
\begin{align*}
	\frac{\Theta_3(1/t;q)}{\Theta_1(1/t;q)}=-\frac{\Theta_3(t;q)}{\Theta_1(t;q)}.
\end{align*}
Thus in the region $\Delta_{\epsilon,2}$,
\begin{align}\label{eqn:tG at 10}
		\tilde{G}(t_1,t_2)=\frac{G(t_2)}{t_1^{1/2}-t_1^{-1/2}}+(\text{regular on\ }\{t_1=1\}),
\end{align}
which matches the singularity of $G(t_1,t_2)$ shown in equation \eqref{eqn:m0}.
Generally, for the divisor $\{q^{-2m}t_1=1\}$ in the region $\Delta_{\epsilon,2}^m$ with $m\in \mathbb{Z}\setminus\{0\}$,
we have
\begin{align}
	\begin{split}
	\lim\limits_{t_1\rightarrow q^{2m}}&
	\big((q^{-2m}t_1)^{1/2}-(q^{-2m}t_1)^{-1/2}\big)\tilde{G}(t_1,t_2)\\
	=&-q^{1/4}\prod_{m=1}^{\infty}\frac{(1-q^{2m})^2}{(1+q^{2m-1})^2}\cdot\frac{\Theta_3(q^{2m}/t_2;q)}{\Theta_1(q^{2m}/t_2;q)}\\
	=&q^{1/4}\prod_{m=1}^{\infty}\frac{(1-q^{2m})^2}{(1+q^{2m-1})^2}\cdot(-1)^{m}\frac{\Theta_3(t_2;q)}{\Theta_1(t_2;q)}
	=(-1)^{m}G(t_2).
	\end{split}
\end{align}
As a result, for $(t_1,t_2)\in \Delta_{\epsilon,2}^m$ with all $m\in\mathbb{Z}\setminus\{0\}$, we have
\begin{align}\label{eqn:tG at 1}
	\tilde{G}(t_1,t_2)
	=\frac{(-1)^mG(t_2)}{(q^{-2m}t_1)^{1/2}-(q^{-2m}t_1)^{-1/2}}
	+(\text{regular on\ }\{q^{-2m}t_1=1\}),
\end{align}
which matches the singularity of $G(t_1,t_2)$ shown in equation \eqref{eqn:m1}.

We continue to deal with the divisors $\{q^{-2m}t_1t_2=1\}$ and $\{q^{-2m}t_1/t_2=1\}$ with $m\in\mathbb{Z}\setminus\{0\}$.
For the divisor $\{q^{-2m}t_1t_2=1\}$,
let $u=t_1/t_2$.
Then by using equation \eqref{thetader1}, we have
\begin{align}
	\begin{split}
	&\lim\limits_{u\rightarrow q^{2m}}
	\big((q^{-2m}u)^{1/2}-(q^{-2m}u)^{-1/2}\big)\tilde{G}(ut_2,t_2)\\
	=&q^{1/4}\prod_{m=1}^{\infty}\frac{(1-q^{2m})^2}{(1+q^{2m-1})^2}\cdot\lim\limits_{u\rightarrow q^{2m}}\big((q^{-2m}u)^{1/2}-(q^{-2m}u)^{-1/2}\big)\\
	&\qquad\cdot\left(-\frac{\Theta_1^{'}(ut_2;q)}{\Theta_1(ut_2;q)}\cdot\frac{\Theta_3(u;q)}{\Theta_1(u;q)}-\frac{\Theta_1^{'}(t_2;q)}{\Theta_1(t_2;q)}\cdot\frac{\Theta_3(1/u;q)}{\Theta_1(1/u;q)}\right)\\
	=&(-1)^{m}\left(-\frac{\Theta_1^{'}(q^{2m}t_2;q)}{\Theta_1(q^{2m}t_2;q)}+\frac{\Theta_1^{'}(t_2;q)}{\Theta_1(t_2;q)}\right)\\
	=&(-1)^{m-1}m.
	\end{split}
\end{align}
For the divisor $\{q^{-2m}t_1/t_2=1\}$,
let $\tilde{u}=t_1t_2$.
Then, similarly, by using equation \eqref{thetader3},
\begin{align}
	\begin{split}
	&\lim\limits_{\tilde{u}\rightarrow q^{2m}}
	\big((q^{-2m}\tilde{u})^{1/2}-(q^{-2m}\tilde{u})^{-1/2}\big)\tilde{G}(t_1,t_2)\\
	=&q^{1/4}\prod_{m=1}^{\infty}\frac{(1-q^{2m})^2}{(1+q^{2m-1})^2}\cdot\lim\limits_{\tilde{u}\rightarrow q^{2m}}\big((q^{-2m}\tilde{u})^{1/2}-(q^{-2m}\tilde{u})^{-1/2}\big)
	\frac{\Theta_3^{'}(\tilde{u};q)}{\Theta_1(\tilde{u};q)}\\
	=&q^{1/4}\prod_{m=1}^{\infty}\frac{(1-q^{2m})^2}{(1+q^{2m-1})^2}\cdot\lim\limits_{\tilde{u}\rightarrow q^{2m}}\big((q^{-2m}\tilde{u})^{1/2}-(q^{-2m}\tilde{u})^{-1/2}\big)
	\frac{\Theta_3^{'}(\tilde{u};q)}{\Theta_3(\tilde{u};q)}\frac{\Theta_3(\tilde{u};q)}{\Theta_1(\tilde{u};q)}\\
	=&(-1)^{m}m.
	\end{split}
\end{align}
In conclusion, for $(t_1,t_2)\in \Delta_{\epsilon,2}^m$ with $m\in\mathbb{Z}\setminus\{0\}$, we have
\begin{align}
	\tilde{G}(t_1,t_2)
	&=\frac{(-1)^{m-1}m}{(q^{-2m}t_1/t_2)^{1/2}-(q^{-2m}t_1/t_2)^{-1/2}}+(\text{regular on\ }\{q^{-2m}t_1/t_2=1\}) \label{eqn:tG at 2}\\
	&=\frac{(-1)^{m}m}{(q^{-2m}t_1t_2)^{1/2}-(q^{-2m}t_1t_2)^{-1/2}}+(\text{regular on\ }\{q^{-2m}t_1t_2=1\}), \label{eqn:tG at 3}
\end{align}
which matches the singularities of $G(t_1,t_2)$ shown in equations \eqref{eqn:m2} and \eqref{eqn:m3}, respectively.

By combining equations \eqref{eqn:tG at 10}, \eqref{eqn:tG at 1}, \eqref{eqn:tG at 2} and \eqref{eqn:tG at 3} and comparing them with equations \eqref{eqn:m0}, \eqref{eqn:m1}, \eqref{eqn:m2} and \eqref{eqn:m3},
we obtain that $\tilde{G}(t_1,t_2)$ and $G(t_1,t_2)$ have the same singularities in the region $\widetilde{\Delta_{\epsilon,2}}$.

Finally, we have shown that  both $G(t_1,t_2)$ and $\tilde{G}(t_1,t_2)$ satisfy the same $q$-differential equation \eqref{eqn:q-diff two} and have the same singularities in $\widetilde{\Delta_{\epsilon,2}}$. Let
\begin{align*}
	\mathscr{G}(t_1,t_2):=G(t_1,t_2)-\tilde{G}(t_1,t_2),
\end{align*}
then $\mathscr{G}(t_1,t_2)$ is holomorphic in $\widetilde{\Delta_{\epsilon,2}}$ and satisfies
\begin{align}\label{tworecusion}
	\mathscr{G}(t_1,t_2)=-\mathscr{G}(q^{-2}t_1,t_2).
\end{align}
Hence, when fixing a value of $t_2$, $\mathscr{G}(t_1,t_2)$ is a constant function with respect to $t_1\in \C\setminus\{0\}$. Let $\mathscr{G}(t_1,t_2)\equiv a(t_2)$, then $a(t_2)=-a(t_2)$ by equation \eqref{tworecusion}, which implies that $a(t_2)\equiv0$, that is, $\mathscr{G}(t_1,t_2)\equiv 0$. This establishes $G(t_1,t_2)=\tilde{G}(t_1,t_2)$ in $\widetilde{\Delta_{\epsilon,2}}$. Since $\tilde{G}(t_1,t_2)$ is meromorphic in $\mathbb{C}^2$, it can be regarded as the meromorphic continuation of $G(t_1,t_2)$ onto $\mathbb{C}^2$. This finishes the proof.
\end{proof}

\begin{ex}
We expand two-point function $G(t_1,t_2)$ with respect to $q$ and list the first few of leading terms:
\begin{align*}
	G(t_1,t_2)
	=&\frac{\sqrt{t_1t_2}}{(t_1-1)(t_2-1)}
	+\frac{(t_1^2-t_1+1) \left(t_2^2-t_2+1\right)-t_1t_2}{\sqrt{t_1t_2}(t_1-1)(t_2-1)}(q-q^2)\\
	&+\frac{(t_1^4-t_1^3+t_1^2-t_1+1)(t_2^4-t_2^3+t_2^2-t_2+1)}
	{t_1^{3/2}t_2^{3/2}(t_1-1)(t_2-1)}q^3\\
	&+\frac{t_1t_2(t_1^2-t_1+1) \left(t_2^2-t_2+1\right)-2t_1^2t_2^2}
	{t_1^{3/2}t_2^{3/2}(t_1-1)(t_2-1)}q^3+O(q^4).
\end{align*}
\end{ex}

\section{The quasimodularity of $n$-point function}\label{sec:quasimo}

In this section,
motivated by the result in \cite{BO,EOP,KZ95,Z16} and the explicit formulas in Corollary \ref{cor:main one},
we study the quasimodularity of the $n$-point functions $G(t_1,t_2,...,t_n)$ of the self-conjugate partitions.
We shall prove Theorem \ref{thm:main quasimod}.

The following lemma will be useful when proving the quasimodularity of the correlation function of self-conjugate partitions.
\begin{lem}
	For any positive even integer $\ell$,
	the series
	\begin{align}\label{eqn:def bbG}
		\mathbb{G}_{\ell}(\tau):=(1-2^{\ell-1})\zeta(1-\ell)/2+\sum\limits_{n=1}^{\infty}\sum\limits_{d|n,2\nmid d\atop d>0}(-1)^{n}d^{\ell-1}e^{\pi i n\tau}
	\end{align}
	is a quasimodular form of weight $\ell$ for the congruence subgroup $\Gamma(2)$.
\end{lem}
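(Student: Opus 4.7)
The strategy is to identify $\mathbb{G}_\ell(\tau)$ explicitly as a linear combination of standard Eisenstein series and then read off quasimodularity from known facts. Concretely, I expect to prove the closed-form identity
$$
\mathbb{G}_\ell(\tau) \;=\; G_\ell\!\Bigl(\tfrac{\tau+1}{2}\Bigr) - 2^{\ell-1}\,G_\ell(\tau),
$$
from which the lemma follows almost immediately.

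To derive the identity I would first reinterpret the defining double sum as a Lambert-type series. Writing $n = de$ with $d$ odd and using $(-1)^n = (-1)^e$, then reorganizing by $n$, the non-constant part of $\mathbb{G}_\ell$ becomes $\sum_{n\geq 1}(-1)^n \sigma^{\mathrm{odd}}_{\ell-1}(n)\,q^n$, where $q = e^{\pi i \tau}$ and $\sigma^{\mathrm{odd}}_{\ell-1}(n) := \sum_{d\mid n,\,2\nmid d} d^{\ell-1}$. A short arithmetic check, based on the identity $\sigma_{\ell-1}(2^a) - 2^{\ell-1}\sigma_{\ell-1}(2^{a-1}) = 1$ for $a \geq 1$ together with the multiplicativity of $\sigma_{\ell-1}$, yields
$$
\sum_{n\geq 1}\sigma^{\mathrm{odd}}_{\ell-1}(n)\,q^n \;=\; \Bigl(G_\ell(\tau/2) + \tfrac{B_\ell}{2\ell}\Bigr) - 2^{\ell-1}\Bigl(G_\ell(\tau) + \tfrac{B_\ell}{2\ell}\Bigr).
$$
Applying the substitution $\tau \mapsto \tau+1$ (so that $q\mapsto -q$ while $G_\ell(\tau)$ is unchanged by its period~$1$), and combining the result with the constant term $\tfrac12(1-2^{\ell-1})\zeta(1-\ell)$ via $\zeta(1-\ell) = -B_\ell/\ell$, the Bernoulli constants cancel and one is left with the desired closed form.

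For the quasimodularity step I use that both summands on the right-hand side are quasimodular of weight $\ell$ for $\Gamma(2)$. The term $G_\ell(\tau)$ is quasimodular for $SL_2(\mathbb{Z}) \supset \Gamma(2)$ and hence automatically for $\Gamma(2)$. For $G_\ell((\tau+1)/2)$, the substitution $\sigma\colon\tau\mapsto(\tau+1)/2$ conjugates any $\gamma = \begin{pmatrix} a & b \\ c & d \end{pmatrix}\in\Gamma(2)$ (so $b$ and $c$ are even) into a matrix $\sigma^{-1}\gamma\sigma\in SL_2(\mathbb{Z})$; the transformation law of $G_\ell$ under that matrix pulls back to a weight-$\ell$ transformation law for $G_\ell\circ\sigma$ under $\gamma$, with cocycle factor $(c\tau+d)^\ell$. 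Since the space of quasimodular forms of weight $\ell$ for $\Gamma(2)$ is linear, this completes the argument.

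I expect the main obstacle to be the arithmetic identity for $\sigma^{\mathrm{odd}}_{\ell-1}$, which requires separating odd and even $n$ and summing a finite geometric series in the even case; everything else is bookkeeping with Eisenstein series and a short conjugation argument at level $\Gamma(2)$.
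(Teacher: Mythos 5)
Your proof is correct, and both the closed-form identity and the conjugation argument check out: the constant terms match via $\zeta(1-\ell)=-B_\ell/\ell$, the coefficient of $q^n$ ($q=e^{\pi i\tau}$) in $G_\ell\bigl(\tfrac{\tau+1}{2}\bigr)-2^{\ell-1}G_\ell(\tau)$ is $(-1)^n\sigma_{\ell-1}(n)-2^{\ell-1}\sigma_{\ell-1}(n/2)\mathbf{1}_{2\mid n}$, which equals $(-1)^n\sum_{d\mid n,\,2\nmid d}d^{\ell-1}$ by exactly the multiplicativity argument you describe, and for $\gamma=\left(\begin{smallmatrix}a&b\\c&d\end{smallmatrix}\right)\in\Gamma(2)$ the relevant conjugate $\left(\begin{smallmatrix}a+c&(b+d-a-c)/2\\2c&d-c\end{smallmatrix}\right)$ does lie in $SL_2(\mathbb{Z})$ with cocycle $c\tau+d$ pulling back correctly. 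However, your route is genuinely different from the paper's. The paper never leaves level $2$: it introduces the three Eisenstein series $G^{(1,0)}_\ell$, $G^{(0,1)}_\ell$, $G^{(1,1)}_\ell$ attached to the nonzero indices of $\mathbb{Z}_2\times\mathbb{Z}_2$, records the linear relation $G_\ell=\frac{1}{2^\ell-1}\bigl(G^{(1,0)}_\ell+G^{(0,1)}_\ell+G^{(1,1)}_\ell\bigr)$, and then simply observes the identity of $q$-expansions $\mathbb{G}_\ell=(1-2^{\ell-1})G_\ell+G^{(1,1)}_\ell$, so quasimodularity is inherited from the standard facts about level-$2$ Eisenstein series cited from the literature; this decomposition also feeds directly into the quasimodularity statement of Theorem 1.3 via $G^{(1,1)}_\ell$. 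Your version expresses $\mathbb{G}_\ell$ entirely in terms of the level-$1$ series precomposed with $\tau\mapsto(\tau+1)/2$, which is more self-contained (you only need $SL_2(\mathbb{Z})$ facts plus one conjugation computation) at the cost of redoing the level-lowering by hand. One small point worth making explicit: for $\ell=2$ the series $G_2$ is only quasimodular, so "the transformation law of $G_\ell$" carries the extra additive term $\frac{c}{4\pi i}(c\tau+d)$; this term also pulls back consistently under your conjugation, so the conclusion (quasimodularity of weight $2$ for $\Gamma(2)$) still holds, but you should say a word about it rather than quoting a clean weight-$\ell$ cocycle law.
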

\begin{proof}
	In Definition \ref{def:Gl Gl}, we introduced the Eisenstein series $G_\ell(\tau)$ for $\Gamma(1)$ and $G^{(1,1)}_\ell(\tau)$ for $\Gamma(2)$. Here we need another two Standard Eisenstein series of weight $\ell$ for $\Gamma(2)$ and refer the readers to \cite{DS} (Our definition and notation are slightly different from those in \cite{DS}, and they are equivalent up to constants) :
	\begin{align*}
		&G^{(1,0)}_\ell(\tau):=\sum\limits_{n=1}^{\infty}\sum\limits_{d|n,2\nmid d\atop d>0}(\frac{n}{d})^{\ell-1}e^{\pi i n\tau},\\
		&G^{(0,1)}_\ell(\tau):=(2^{\ell}-1)\zeta(1-\ell)/2+\sum\limits_{n=1}^{\infty}\sum\limits_{d|n,2|d\atop d>0}(-1)^{n/d}(\frac{n}{d})^{\ell-1}e^{\pi i n\tau}.
	\end{align*}
	Notice that $G_\ell(\tau)$ can also be regarded as a quasimodular form for $\Gamma(2)$ and 
	\begin{align*}
		&G_\ell(\tau)=\frac{1}{2^\ell-1}\big(G^{(1,0)}_\ell(\tau)+G^{(0,1)}_\ell(\tau)+G^{(1,1)}_\ell(\tau)\big),\\
		&\mathbb{G}_{\ell}(\tau)=(1-2^{\ell-1})G_\ell(\tau)+G^{(1,1)}_\ell(\tau).
	\end{align*}
	Therefore, $\mathbb{G}_{\ell}(\tau)$ is quasimodular of weight $\ell$ for $\Gamma(2)$.
\end{proof}

Recall that the $n$-point function $G(t_1,t_2,...,t_n)$ is given by the expectation of the function 
$\prod_{j=1}^n\sum_{i=1}^{\infty} t_j^{\lambda_i-i+\frac{1}{2}}$
on $\mathscr{P}^s$. Now, let $t=e^{2\pi i z}$ and we can expand the function $\sum_{i=1}^{\infty} t^{\lambda_i-i+\frac{1}{2}}$ in the following way:
\begin{equation}\label{expanofG}
\begin{aligned}
	\sum\limits_{i=1}^{\infty} t^{\lambda_i-i+\frac{1}{2}}
	&=\sum\limits_{s\in\mathfrak{S}_+(\lambda)}e^{2\pi i zs}-\sum\limits_{s\in\mathfrak{S}_-(\lambda)}e^{2\pi i zs}+\frac{1}{2\sinh(\pi i z)}\\
	&=\sum\limits_{\ell\geq0} Q_\ell(\lambda) (2\pi i z)^{\ell-1},
\end{aligned}
\end{equation}
which defines a series of functions $Q_\ell: \mathscr{P}^s \rightarrow \Q$ for $\ell\in\mathbb{Z}_{\geq0}$
(see also equation (17) in \cite{Z16}).  
Furthermore, the explicitly formula of $Q_\ell(\lambda)$ is
\begin{align*}
	Q_\ell(\lambda)&=\frac{1}{(\ell-1)!}\sum\limits_{i=1}^{r(\lambda)}\left[\big(m_i+\frac{1}{2}\big)^{\ell-1}-\big(-n_i-\frac{1}{2}\big)^{\ell-1}\right]+\beta_\ell
\end{align*}
for $\ell>0$ and $Q_0(\lambda)=1$,
where $(m_1,...,m_{r(\lambda)}|n_1,...,n_{r(\lambda)})$ is the Frobenius notation of $\lambda$, $r(\lambda)$ is its Frobenius length, and $\beta_\ell$ is given by the series expansion  
\begin{align*}
	\sum\limits_{\ell=0}^\infty\beta_\ell x^\ell=&\frac{x/2}{\sinh(x/2)}=\sum\limits_{n=0}^\infty\frac{(1/2^{2n-1}-1)B_{2n}x^{2n}}{(2n)!},
	\qquad 0<|x|<\pi.
\end{align*}
Moreover, 
it is obvious that,
since we only consider $\lambda \in \mathscr{P}^s$, $Q_{\ell}(\lambda)$ is nonzero only for even $\ell\in\mathbb{Z}_{>0}$.

\begin{thm}[=Theorem \ref{thm:main quasimod}]
	\label{thm:mor}
	Let $t_i=e^{2\pi i z_i}, i=1,2,\cdots,n$,
	and $q=e^{\pi i \tau}$.
	The $n$-point function $G(t_1,t_2,\cdots,t_n)$ of the self-conjugate partitions has the following expansion
	\begin{align*}
		G(t_1,t_2,\cdots,t_n)
		=\sum\limits_{\ell_1,\ell_2,\cdots,\ell_n\geq0}\langle Q_{\ell_1}Q_{\ell_2}\cdots Q_{\ell_n} \rangle^s_q
		\cdot \prod_{j=1}^n (2\pi i z_j)^{\ell_j-1}.
	\end{align*}
	Then for any non-negative integers $\ell_1,...,\ell_n$,
	$\langle Q_{\ell_1}Q_{\ell_2}\cdots Q_{\ell_n} \rangle^s_q$
	is a quasimodular form of weight $\sum_{i=1}^n\ell_i$ for the congruence subgroup $\Gamma(2)$.
\end{thm}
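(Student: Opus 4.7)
The plan is to reduce to strict partitions of positive half-integers via the Frobenius parameterization, expand $\prod_j Q_{\ell_j}$ by set-partition combinatorics, and identify the resulting single-variable generating functions with the quasimodular building block $\mathbb{G}_\ell(\tau)$ whose quasimodularity has just been established.

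First, under the identification $\mu_i := m_i + \tfrac12$, a self-conjugate partition $\lambda$ with Frobenius coordinates $(m_i|m_i)_{i=1}^r$ corresponds bijectively to a strict partition $\mu = (\mu_1 > \mu_2 > \cdots > 0)$ of positive half-integers, with $|\lambda| = 2|\mu|$ and $Z_s(q) = \sum_\mu q^{2|\mu|}$. Substituting $n_i = m_i$ in the formula for $Q_\ell$ and using $(-\mu_i)^{\ell-1} = (-1)^{\ell-1}\mu_i^{\ell-1}$ gives
\begin{align*}
Q_\ell(\lambda) = \frac{2}{(\ell-1)!}\sum_i \mu_i^{\ell-1} + \beta_\ell \ \ (\ell \in 2\mathbb{Z}_+), \qquad Q_\ell(\lambda) = 0 \ \ (\ell\ \mathrm{odd}).
\end{align*}
So WLOG all $\ell_j$ are positive even integers, and distributively expanding $\prod_j Q_{\ell_j}(\lambda)$ reduces the theorem to showing that
\begin{align*}
\mathcal{J}_{\mathbf{c}} := \frac{1}{Z_s(q)}\sum_{\mu} q^{2|\mu|}\prod_{k=1}^m P_{c_k}(\mu), \qquad P_c(\mu) := \sum_i \mu_i^c,
\end{align*}
is a quasimodular form for $\Gamma(2)$ for any tuple $\mathbf{c} = (c_1,\dots,c_m)$ of odd positive integers.

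Next, I would classify the multi-index sum $\prod_k \sum_{i_k} \mu_{i_k}^{c_k}$ by the set partition $\pi$ of $[m]$ recording index coincidences, obtaining
\begin{align*}
\prod_k P_{c_k}(\mu) = \sum_{\pi \vdash [m]} \sum_{\substack{(\xi_B)_{B\in\pi}\\ \text{distinct half-ints}}}\prod_{B\in\pi} \xi_B^{c_B}, \qquad c_B := \sum_{k\in B} c_k,
\end{align*}
and use inclusion--exclusion on the partition lattice to drop the distinctness constraint. The occupation identity $Z_s(q)^{-1}\sum_\mu q^{2|\mu|}\mathbf{1}[\{\xi_B\}\subset \mu] = \prod_B q^{2\xi_B}/(1+q^{2\xi_B})$ (for distinct $\xi_B$) factors each unrestricted sum into a product of single-variable sums
\begin{align*}
\mathcal{K}_c(\tau) := \sum_{\xi > 0,\ \xi\in\mathbb{Z}+\tfrac12}\xi^c \frac{q^{2\xi}}{1+q^{2\xi}}.
\end{align*}
Writing $\xi = d/2$ for odd $d \geq 1$, expanding the geometric series, and collecting by $N = dn$, one computes
\begin{align*}
\mathcal{K}_c(\tau) = -2^{-c}\sum_{N\geq 1}\sum_{\substack{d\mid N\\ d\ \mathrm{odd}}} d^c (-1)^N q^N = -2^{-c}\Bigl(\mathbb{G}_{c+1}(\tau) - \tfrac12(1-2^c)\zeta(-c)\Bigr),
\end{align*}
so $\mathcal{K}_c$ is quasimodular of weight $c+1$ for $\Gamma(2)$ by the preceding lemma. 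Assembling the above expresses $\langle Q_{\ell_1}\cdots Q_{\ell_n}\rangle^s_q$ as an explicit polynomial in the $\mathbb{G}_\ell$, hence an element of the ring of quasimodular forms for $\Gamma(2)$.

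The main obstacle is the final weight matching: a set partition $\pi \vdash [m]$ contributes a product of weight $|\pi| + \sum_k c_k$, which is strictly less than the target $\sum_j \ell_j = m + \sum_k c_k$ unless $\pi$ is discrete. The lower-weight contributions from coarser $\pi$ must therefore cancel against the $\beta_{\ell_j}$-constants and the constant terms of the $\mathbb{G}_\ell$, placing the answer in the pure weight-$\sum_j\ell_j$ homogeneous component. This cancellation is driven by the identity $\beta_\ell = -(1-2^{1-\ell})B_\ell/\ell!$ (from $(x/2)/\sinh(x/2) = \sum_\ell \beta_\ell x^\ell$) matching $\zeta(1-\ell) = -B_\ell/\ell$ in the constant term of $\mathbb{G}_\ell$, and is already visible at $n=1$, where $\langle Q_\ell\rangle^s_q = 2(G_\ell - G^{(1,1)}_\ell)/(\ell-1)!$ is homogeneous of weight $\ell$. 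In the general case, this weight matching can be handled by careful bookkeeping of the inclusion--exclusion signs, or, following \cite{Z16}, by reformulating the claim in terms of a weight-preserving derivation on the ring of quasimodular forms so that homogeneity becomes automatic.
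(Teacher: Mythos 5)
Your setup is sound and runs parallel to the paper's up to a point: both arguments pass to the Frobenius/strict-partition picture, both identify the single-variable building block $\sum_{\xi>0,\,2\xi\ \mathrm{odd}}\xi^{c}q^{2\xi}/(1+q^{2\xi})$ with the quasimodular series $\mathbb{G}_{c+1}(\tau)$ (your computation of $\mathcal{K}_c$ is correct), and both rest on the independence of the Bernoulli events $\{\xi\in\mu\}$ under $\mathfrak{M}_q$. The divergence is that the paper never expands the product $\prod_k P_{c_k}$ directly: it forms the exponential generating function $M(\mathbf{s})=\langle\exp(\sum_\ell s_\ell P_{\ell-1})\rangle_q^s$, takes the logarithm, and reads off that the connected coefficient $M_\mu$ equals $-\frac{2^{2l(\mu)-2|\mu|}}{|\mathrm{Aut}(\mu)|}\big(\tfrac{1}{\pi i}\partial_\tau\big)^{l(\mu)-1}\mathbb{G}_{2|\mu|-2l(\mu)+2}(\tau)$. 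Because $\partial_\tau$ raises weight by exactly $2$, each $M_\mu$ is automatically homogeneous of weight $2|\mu|$, and exponentiating in the graded ring finishes the proof. In your direct expansion over set partitions this homogeneity is precisely what is not automatic, and that is where the proposal has a genuine gap.

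Concretely: you correctly flag that a coarse set partition $\pi$ contributes naive weight $|\pi|+\sum_k c_k<\sum_j\ell_j$, but the resolution you sketch is not the right mechanism. The low-weight pieces do not cancel against the $\beta_\ell$'s or the constant terms of $\mathbb{G}_\ell$; they cancel among the inclusion--exclusion corrections themselves. For $n=2$ one finds $\langle P_{c_1}P_{c_2}\rangle=\mathcal{K}_{c_1}\mathcal{K}_{c_2}+\sum_\xi \xi^{c_1+c_2}\,q^{2\xi}/(1+q^{2\xi})^2$, and the second term is pure weight $c_1+c_2+2$ only because $x/(1+x)^2=\sum_n(-1)^{n-1}n\,x^n$ has coefficients that are an exact monomial in $n$ (hence an exact power of $\partial_\tau$ applied to a $\mathbb{G}$), whereas each of $\mathcal{K}_{c_1+c_2}^{(1)}$ and $\sum_\xi\xi^{c_1+c_2}q^{4\xi}/(1+q^{2\xi})^2$ separately has mixed weight. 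In general your argument needs the identity that the alternating sum over refinements collapses the kernels $\big(q^{2\xi}/(1+q^{2\xi})\big)^j$ into combinations whose $q$-expansion coefficients are pure monomials $n^{j-1}$ rather than polynomials of degree $j-1$ with lower-order terms; this is exactly the content of the cumulant expansion $\log(1+xe^{ny})$ that the paper exploits, and it is not supplied by "careful bookkeeping of signs" without proof. Until that combinatorial identity is stated and proved, your argument only shows that $\langle Q_{\ell_1}\cdots Q_{\ell_n}\rangle_q^s$ lies in the ring of quasimodular forms for $\Gamma(2)$ of weight at most $\sum_j\ell_j$, not that it is homogeneous of that weight. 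The quickest repair is to reorganize your expansion into connected correlators, i.e., to take the logarithm first, which is the paper's route.
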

\begin{proof}
	The expansion formula of $G(t_1,t_2,\cdots,t_n)$ is directly from \eqref{expanofG}.
	Following the notation in \cite{Z16},
	we introduce the function $P_{\ell}(\cdot), \ell\in\mathbb{Z}_{\geq0}$, on the set of self-conjugate partitions as
	\begin{align}\label{eqn:def P_l}
		P_\ell(\lambda)&:=\sum\limits_{i=1}^{r(\lambda)}\left[\big(m_i+\frac{1}{2}\big)^{\ell}-\big(-n_i-\frac{1}{2}\big)^{\ell}\right].
	\end{align}
	The relation of $Q_{\ell}(\cdot)$ and $P_\ell(\cdot)$ is
	$$Q_\ell(\lambda)=\frac{P_{\ell-1}(\lambda)}{(\ell-1)!}+\beta_\ell,
	\qquad \ell\in\mathbb{Z}_{>0}.$$
	And $P_\ell(\cdot)$ is the zero function if $\ell$ is even.	 
	Motivated by equation (3.14) in \cite{EOP},
	we consider the $q$-bracket of the following generating function involving $Q_{\ell}(\cdot)$ with $\ell\in\mathbb{Z}_{>0}$,
	\begin{align}\label{eqn:def M}
		\begin{split}
		M(\mathbf{s})
		:=&\bigg\langle\exp\left(\sum\limits_{\ell=1}^\infty s_\ell(\ell-1)!Q_\ell\right)\bigg\rangle_q^s\\
		=&\exp\left(\sum\limits_{\ell=1}^\infty s_\ell(\ell-1)!\beta_\ell\right)\bigg\langle\exp\left(\sum\limits_{\ell=1}^\infty s_\ell P_{\ell-1}\right)\bigg\rangle_q^s.
		\end{split}
	\end{align}
	The coefficients of the Taylor expansion with respective to variables $(s_1,s_2,...)$ of the equation above give all possible $q$-bracket of products of some $Q_{\ell}(\cdot)$ with $\ell\in\mathbb{Z}_{>0}$.
	Moreover, notice that $Q_0(\lambda)=1$ for all self-conjugate partitions $\lambda$.
	Thus,
	this theorem is equivalent to the statement that,
	for a given sequence of positive integers $\mu=(\mu_1,...,\mu_{l(\mu)})$,
	the coefficient of $\prod_{j=1}^{l(\mu)} s_{2\mu_j}$ in the equation above is a quasimodular form of weight $2|\mu|:=2\sum_{j=1}^{l(\mu)}\mu_j$ for the congruence subgroup $\Gamma(2)$.
	Since the order of $\mu_i, 1\leq i\leq l(\mu)$ does not influence the result,
	we can assume $\mu_1\geq\cdots\geq\mu_{l(\mu)}$,
	and thus $\mu$ could be a partition.
 	
	From the definition \eqref{eqn:def P_l} of the function $P_\ell(\lambda), \ell\in\mathbb{Z}_{\geq0}$,
	the last term in the right hand side of equation \eqref{eqn:def M} can be computed as
	\begin{align}\label{eqn:expP}
		\begin{split}
			\bigg\langle&\exp\bigg(\sum\limits_{\ell=1}^\infty s_\ell P_{\ell-1}(\lambda)\bigg)\bigg\rangle_q^s
			\\
			&\ =\frac{1}{\prod_{j=0}^\infty(1+q^{2j+1})}
			\cdot\sum\limits_{\lambda\in\mathscr{P}^s}\prod_{i=1}^{r(\lambda)}q^{2m_i+1}\exp\bigg(\sum\limits_{\ell\in 2\Z_+} 2s_\ell (m_i+\frac{1}{2})^{\ell-1}\bigg).
		\end{split}			
	\end{align}
	From the one-to-one correspondence between partitions and their Frobenius coordinates,
	one can immediately find that the right hand side of equation \eqref{eqn:expP} is exactly equal to
	\begin{align*}
		\frac{1}{\prod_{j=0}^\infty(1+q^{2j+1})}\prod_{j=0}^\infty\Bigg(1+q^{2j+1}\cdot\exp\bigg(\sum\limits_{\ell\in 2\Z_+} 2s_\ell (j+\frac{1}{2})^{\ell-1}\bigg)\Bigg).
	\end{align*}
	With the computation result above, we take the logarithm of equation \eqref{eqn:def M} to obtain 
	\begin{align}\label{logQgen}
		\begin{split}
			\log M&(\mathbf{s})=\sum\limits_{\ell\in 2\Z_+} s_\ell(\ell-1)!\beta_\ell\\
			+&\sum_{j=0}^\infty\sum\limits_{n=1}^{\infty}(-1)^{n-1}\frac{q^{(2j+1)n}}{n}\cdot\Bigg[\exp\bigg(\sum\limits_{\ell\in 2\Z_+} 2ns_\ell (j+\frac{1}{2})^{\ell-1}\bigg)-1\Bigg].
		\end{split}	
	\end{align}	

	 From now on,
	 let $q=e^{\pi i \tau}$ and we assume the following expansion formula
	 \begin{align}\label{equ:logMs}
	 	\log M(\mathbf{s})
	 	=\sum_{\mu\in \mathscr{P}} M_\mu \cdot \prod_{i=1}^{l(\mu)}s_{2\mu_i}.
	 \end{align}
	 Then we have $M_\emptyset=0$ from the equation \eqref{logQgen}.
	 Below,
	 we are going to show that,
	 for any $\mu\neq\emptyset$,
	 $M_{\mu}$ is a quasimodular form of weight $2|\mu|=2\sum_{j=1}^{l(\mu)}\mu_j$ for the congruence subgroup $\Gamma(2)$.
	 
	 For the case of $l(\mu)=1$, 
	 we first recall that
	 \begin{align*}
	 	\sum\limits_{\ell=0}^\infty\beta_\ell x^\ell=&\frac{x/2}{\sinh(x/2)}=\sum\limits_{n=0}^\infty\frac{(1/2^{2n-1}-1)B_{2n}x^{2n}}{(2n)!},
	 \end{align*}
	then 
	 \begin{align*}
	 	(2n-1)!2^{2n-2}\beta_{2n}=\frac{(1-2^{2n-1})B_{2n}}{2\cdot2n}=-(1-2^{2n-1})\zeta(1-2n)/2.
	 \end{align*}
	 Thus,
	 from the equation \eqref{logQgen},
	 the $M_\mu$ for $\mu=(\mu_1)$ is given by
	 \begin{align}\label{eqn:eisenl=1}
	 	\begin{split}
	 	M_\mu=&(2\mu_1-1)!\beta_{2\mu_1}
	 	-2^{2-2\mu_1}\sum\limits_{n=1}^{\infty}\sum\limits_{\substack{d|n,2\nmid d\\d>0}}
	 	(-1)^{n}d^{2\mu_1-1}e^{\pi i n\tau}\\
	 	=&-2^{2-2\mu_1}\mathbb{G}_{2\mu_1}(\tau),
	 	\end{split}
	 \end{align}
	 where $\mathbb{G}_{2\mu_1}(\tau)$ is defined in equation \eqref{eqn:def bbG} and is a quasimodular form of weight $2\mu_1$ for $\Gamma(2)$.
	 
	 For the case of $l(\mu)>1$,
	 we denote $|\text{Aut}(\mu)|=\prod_{i\geq1}m_i(\mu)!$,
	 where $m_i(\mu)=\#\{j|\mu_j=i\}$.
	 Then still from the equation \eqref{logQgen},
	 we have
	\begin{align}\label{eqn:eisen}
		\begin{split}
			M_\mu
			=&\sum_{j=0}^\infty\sum\limits_{n=1}^{\infty}
			(-1)^{n-1}\frac{q^{(2j+1)n}}{n}
			\cdot\frac{1}{|\text{Aut}(\mu)|}2^{l(\mu)}n^{l(\mu)}(j+\frac{1}{2})^{2|\mu|-l(\mu)}\\
			=&-\frac{2^{2l(\mu)-2|\mu|}}{|\text{Aut}(\mu)|}
			\sum\limits_{n=1}^{\infty}\sum\limits_{d|n,2\nmid d\atop d>0}
			(-1)^{n}n^{l(\mu)-1}d^{2|\mu|-2l(\mu)+1}e^{\pi i n\tau}\\
			=&-\frac{2^{2l(\mu)-2|\mu|}}{|\text{Aut}(\mu)|}\bigg(\frac{1}{\pi i}\frac{\partial}{\partial \tau}\bigg)^{l(\mu)-1}\mathbb{G}_{2|\mu|-2l(\mu)+2}(\tau),
		\end{split}
	\end{align}
	which is a quasimodular form of weight
	\begin{align*}
		2|\mu|-2l(\mu)+2+2\big(l(\mu)-1\big)
		=2|\mu|
	\end{align*}
	since the operator $\frac{\partial}{\partial \tau}$ preserves the space of quasi-modular forms and increases the weight by $2$.
	
	In conclusion, for any partition $\mu$,
	$M_\mu$ is a quasimodular form of weight $2|\mu|$ for $\Gamma(2)$.
	As a consequence,
	by taking exponentiation of equation \eqref{logQgen},
	the coefficient of $\prod_{i=1}^{l(\mu)}s_{2\mu_i}$ in $M(\mathbf{s})$ is also a quasimodular form of weight $2|\mu|$ for $\Gamma(2)$,
	since the space of quasimodular forms for $\Gamma(2)$ is a graded ring.
	Thus, the proof of this theorem is finished.
\end{proof}

\begin{rmk}\label{rek:onetwo}
	For the case of $n=1$ and $n=2$, the quasimodularity can also be directly derived from the explicit formulas for the one-point and two-point functions exhibited in Corollary \ref{cor:main one}. More precisely, by considering the expansion of equation \eqref{eqn:G as Gl}, $\langle Q_k\rangle ^s_q$ is an element in the ring of quasimodular forms. 
	For the case of $n=2$,
	one can show (similar to the method used in proving Proposition \ref{1dmodular})
	\begin{align}\label{equ:dtheta1}
		\begin{split}
			\frac{\Theta_1^{'}(t;q)}{\Theta_1(t;q)}=t\frac{\partial}{\partial t}\log\Theta_1(t;q)
			&=\frac{1}{2\pi iz}-\sum\limits_{\ell\in 2\Z_+} 2G_\ell(\tau)\frac{(2\pi iz)^{\ell-1}}{(\ell-1)!}
		\end{split}
	\end{align}
	for $\Theta_1(t;q)$, and
	\begin{align}\label{equ:dtheta2}
		\begin{split}
			\frac{\Theta_3^{'}(t;q)}{\Theta_3(t;q)}=t\frac{\partial}{\partial t}\log\Theta_3(t;q)
			&=-\sum\limits_{\ell\in 2\Z_+} 2G_\ell^{(1,1)}(\tau)\frac{(2\pi iz)^{\ell-1}}{(\ell-1)!}
		\end{split}
	\end{align}
	for $\Theta_3(t;q)$.
	Then the quasimodularity of this case directly follows from the formula \eqref{eqn:G2 main} for the two-point function
	\begin{align*}
		G(t_1,t_2)=G(t_1t_2)\cdot 
		\frac{\Theta_3^{'}(t_1t_2;q)}{\Theta_3(t_1t_2;q)}
		-G(t_1/t_2)\cdot \frac{\Theta_1^{'}(t_1;q)}{\Theta_1(t_1;q)}
		-G(t_2/t_1)\cdot \frac{\Theta_1^{'}(t_2;q)}{\Theta_1(t_2;q)},
	\end{align*}
	together with the equations \eqref{eqn:G as Gl}, \eqref{equ:dtheta1} and \eqref{equ:dtheta2}.
\end{rmk}

\section{Limit shape of the self-conjugate partitions under Gibbs uniform measure}
\label{sec:limit shape}
In this section,
we derive the limit shape of the self-conjugate partitions under the measure $\mathfrak{M}_q(\cdot)$ when $q\rightarrow1^-$ and verify its compatibility with the leading asymptotics of the one-point function $G(t)$.

\subsection{Limit shape of the self-conjugate partitions under the measure $\mathfrak{M}_q(\cdot)$ when $q\rightarrow1^-$}
In this subsection,
we study the limit shape of the self-conjugate partitions under the measure $\mathfrak{M}_q(\cdot)$ when $q\rightarrow1^-$ and prove Proposition \ref{prop:limit shape}. We mainly follow the method in \cite{FVY}.

We first derive the typical size of self-conjugate partitions,
which indicates how to rescale the limit Young diagrams.
Throughout this section, we shall apply the substitution $q=e^{-2\pi r}$.
\begin{lem}\label{lem:typical size}
	The typical size of the self-conjugate partitions under the measure $\mathfrak{M}_q(\cdot)$ when $q\rightarrow1^-$,
	or equivalently $r\rightarrow 0^+$, is given by
	\begin{align*}
		\lim_{q\rightarrow1^-}
		r^2 \cdot \mathbb{E}_q(|\cdot|)
		=\frac{1}{96},
	\end{align*}
	where $|\cdot|$ represents the size function on the set of self-conjugate partitions.
\end{lem}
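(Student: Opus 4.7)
The plan is to reduce the problem to a small-$r$ asymptotic of the partition function $Z_s(q)$. Since
\[
\mathbb{E}_q(|\cdot|) \;=\; \langle |\lambda|\rangle_q^s \;=\; q\,\frac{d}{dq}\log Z_s(q) \;=\; -\frac{1}{2\pi}\,\frac{d}{dr}\log Z_s(q),
\]
the task is to determine the leading singular behavior of $\log Z_s(q)$ (or, equivalently, of its $r$-derivative) as $r\to 0^+$.

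Starting from the product formula $Z_s(q)=\prod_{i\geq 1}(1+q^{2i-1})$ recorded in equation \eqref{eqn:Z factor}, I will expand $\log(1+x)=\sum_{n\geq 1}(-1)^{n-1}x^n/n$, interchange the two sums (legitimate for $0<q<1$), and collapse the geometric series in $i$ by means of $\sum_{i\geq 1}q^{(2i-1)n}=1/(q^{-n}-q^{n})=1/(2\sinh(2\pi rn))$. This produces the exact identity
\[
\log Z_s(q) \;=\; \sum_{n=1}^{\infty}\frac{(-1)^{n-1}}{2n\,\sinh(2\pi rn)},
\]
which after differentiating in $r$ and multiplying by $-1/(2\pi)$ gives the companion exact formula
\[
\mathbb{E}_q(|\cdot|) \;=\; \sum_{n=1}^{\infty}\frac{(-1)^{n-1}\cosh(2\pi rn)}{2\,\sinh^{2}(2\pi rn)}.
\]

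To conclude, I will multiply this identity by $r^2$ and pass to the limit $r\to 0^+$ term by term. For each fixed $n$ the pointwise limit of $r^2(-1)^{n-1}\cosh(2\pi rn)/(2\sinh^{2}(2\pi rn))$ is $(-1)^{n-1}/(8\pi^2 n^2)$, and summing over $n$ produces $(8\pi^2)^{-1}\cdot(\pi^2/12)=1/96$ by means of the elementary evaluation $\sum_{n\geq 1}(-1)^{n-1}/n^2=\pi^2/12$, which is exactly the claim.

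The only genuine technical point is justifying the interchange of the limit and the infinite sum. I would handle this by a dominated-convergence argument, splitting at $n\sim 1/r$: for $2\pi rn\leq 1$ the Taylor estimate $\cosh u/\sinh^{2}u\leq C/u^{2}$ bounds the $n$-th contribution by $C/n^{2}$ uniformly in $r$, while for $2\pi rn>1$ the hyperbolic factors decay exponentially, making the tail summable. This dominated-convergence step is the only non-routine ingredient; once it is in place, the rest of the argument is an algebraic manipulation together with the standard evaluation of $\sum_{n\geq 1}(-1)^{n-1}/n^{2}$.
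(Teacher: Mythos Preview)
Your proof is correct, but it follows a genuinely different route from the paper's. The paper differentiates the product $Z_s(q)=\prod_{k\ge 0}(1+q^{2k+1})$ directly to get
\[
\mathbb{E}_q(|\cdot|)=\sum_{k=0}^\infty\frac{(2k+1)q^{2k+1}}{1+q^{2k+1}},
\]
substitutes $q=e^{-2\pi r}$, and recognizes $r$ times this as a Riemann sum for $\tfrac12\int_0^\infty \frac{xe^{-2\pi x}}{1+e^{-2\pi x}}\,dx=\tfrac{1}{48}$. You instead expand each factor $\log(1+q^{2i-1})$ as a power series and swap the order of summation, obtaining the dual Lambert-type representation $\log Z_s(q)=\sum_{n\ge 1}\frac{(-1)^{n-1}}{2n\sinh(2\pi rn)}$; after differentiating and taking termwise limits you land directly on $\tfrac{1}{8\pi^2}\sum_{n\ge 1}(-1)^{n-1}/n^2=\tfrac{1}{96}$. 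Both arguments are short and ultimately rest on $\sum_{n\ge1}(-1)^{n-1}/n^2=\pi^2/12$ (in the paper's version this is hidden inside the integral evaluation). Your approach avoids the Riemann-sum bookkeeping, at the cost of the dominated-convergence justification; that step is fine as written, and in fact one can observe that $r^2\cosh u/\sinh^2 u\le C/n^2$ holds uniformly for all $r>0$ and all $n$ (using $\sinh u\ge u$ for small $u$ and $\sup_{r>0}r^2e^{-2\pi rn}=e^{-2}/(\pi n)^2$ for large $u$), so no splitting at $n\sim 1/r$ is even needed.
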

\begin{proof}
Recall that the generating function of self-conjugate partitions is of the following form,
\begin{align*}
	Z_s(q)
	=\sum\limits_{\lambda\in\mathscr{P}^s}q^{|\lambda|}
	=\prod_{k=0}^\infty
	\big(1+q^{2k+1}\big).
\end{align*}
Then,
the expectation value of the size of the self-conjugate partitions is given by
\begin{align*}
	\mathbb{E}_q(|\cdot|)
	=&q\frac{\partial}{\partial q}
	Z_s(q) \big/Z_s(q)
	=\sum_{k=0}^\infty
	\frac{(2k+1)q^{2k+1}}{1+q^{2k+1}}.
\end{align*}
By using the substitution $q=e^{-2\pi r}$,
we have
\begin{align}
	\mathbb{E}_q(|\cdot|)
	=&\frac{1}{r}
	\sum_{k=0}^\infty
	\frac{r(2k+1)e^{-2\pi(2k+1)r}}{1+e^{-2\pi(2k+1)r}}\\
	=&\frac{1}{r}
	\sum_{k=0}^{\lfloor\frac{M}{r}\rfloor}
	\frac{r(2k+1)e^{-2\pi(2k+1)r}}{1+e^{-2\pi(2k+1)r}}
	+\frac{1}{r}
	\sum_{k=\lfloor\frac{M}{r}\rfloor+1}^{\infty}
	\frac{r(2k+1)e^{-2\pi(2k+1)r}}{1+e^{-2\pi(2k+1)r}} \label{eqn:Eq||2}
\end{align}
for a fixed sufficient large number $M$. 
For instance, one can take $M=1/r$.
It is obvious that the first part of the equation \eqref{eqn:Eq||2} goes to a Riemann sum and the second part goes to $0$ when $r \rightarrow 0^+$.
Thus,
as $q\rightarrow1^-$,
\begin{align*}
	r^2 \cdot \mathbb{E}_q(|\cdot|)
	\rightarrow&
	\frac{1}{2}
	\int_0^{2(\lfloor1/r^2 \rfloor+1)r} \frac{xe^{-2\pi x}}{1+e^{-2\pi x}} dx\\
	\rightarrow&
	\frac{1}{2}\cdot\int_0^{\infty} \frac{xe^{-2\pi x}}{1+e^{-2\pi x}} dx
	=\frac{1}{96}.
\end{align*}
\end{proof}

For a partition $\lambda\in\mathscr{P}^s$,
define
$$m_k(\lambda):=\#\{i|\ \lambda_i=k\},\qquad 0\leq k \leq \lambda_1,$$
which is the number of $k$ appearing in $\lambda$.
To describe the limit shape of self-conjugate partitions,
we introduce the following function
\begin{align*}
	f_\lambda(x)
	:=-\sum_{k\geq x} m_{k}(\lambda).
\end{align*}
Intuitively,
the graph of this function $f_\lambda(x)$ is exactly the lower boundary of the Young diagram corresponding to the partition $\lambda$.
See the Figure \ref{fig:Yd 53211 and 41} as an example of $f_\lambda(x)$ for $\lambda=(5,3,2,1,1)$.

\begin{figure}[htbp]
	\begin{tikzpicture}[scale=0.7]
		\draw[help lines, color=gray!30] (-1,-6) grid (6,1);
		\draw[thick,->](-1,0)--(6,0);
		\draw[thick,->](0,-6)--(0,1);
		\draw[thick](0,-5)--(1,-5);
		\draw[thick](1,-3)--(1,-5);
		\draw[thick](1,-3)--(2,-3);
		\draw[thick](2,-2)--(2,-3);
		\draw[thick](2,-2)--(3,-2);
		\draw[thick](3,-1)--(3,-2);
		\draw[thick](3,-1)--(5,-1);
		\draw[thick](5,0)--(5,-1);
		\draw[fill,thick] (3,-2.5) node {$f_\lambda(x)$};
	\end{tikzpicture}
	\caption{The graph of $f_\lambda(x)$ for $\lambda=(5,3,2,1,1)$}
	\label{fig:Yd 53211 and 41}
\end{figure}
\noindent For convenience, we can regard $f_{.}(x)$ as a function on $\mathscr{P}^s$ depending on $x$.

Indicated by Lemma \ref{lem:typical size},
we introduce the rescaled function $\tilde{f}_{\lambda}(x)$ by
\begin{align*}
	\tilde{f}_\lambda(x):=
	4\sqrt{6}r \cdot f_\lambda(x/4\sqrt{6}r),
\end{align*}
where $r=-\frac{1}{2\pi}\cdot \log q$.
The goal of this subsection is to study the limit behavior of $\tilde{f}_\lambda(x)$ under the measure $\mathfrak{M}_q(\cdot)$ when $q\rightarrow1^-$ and prove the Proposition \ref{prop:limit shape}.
For convenience,
we restate it as follows.
\begin{prop}[=Proposition \ref{prop:limit shape}]
	For any fixed $x>0$ and $\epsilon>0$,
	we have the following limit
	\begin{align*}
		\lim_{q\rightarrow1^-}
		\mathfrak{M}_q\big(\big\{\lambda\big|\ |\tilde{f}_{\lambda}(x)-f(x)|<\epsilon\big\}\big)=1,
	\end{align*}
	where $f(x)=\frac{\sqrt{6}}{\pi}\log\big(1-\exp(-\pi x/\sqrt{6})\big)$
	is already introduced in equation \eqref{eqn:limit shape main}.
\end{prop}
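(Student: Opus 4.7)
Following the Fristedt--Vershik method indicated by the paper (citing \cite{FVY}), I would exploit the Frobenius bijection between $\mathscr{P}^s$ and finite subsets $T \subset \mathbb{Z}_{\geq 0}$: the self-conjugate $\lambda$ with diagonal Frobenius coordinates $m_1 > m_2 > \cdots > m_r \geq 0$ corresponds to $T = \{m_1, \ldots, m_r\}$, and the identity $|\lambda| = \sum_{t \in T}(2t+1)$ factorizes $\mathfrak{M}_q$ into an independent product: the indicators $\xi_k := \mathbf{1}[k \in T]$, $k \geq 0$, are independent Bernoullis with $\mathbb{P}(\xi_k = 1) = q^{2k+1}/(1+q^{2k+1})$. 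This independence is the engine of the argument and is the self-conjugate analogue of Fristedt's classical reduction for the uniform measure on all partitions.

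Next I would express the column heights $\lambda_j = -f_\lambda(j)$ in terms of the tail counts $N_a := \sum_{k \geq a}\xi_k = \#\{t\in T : t\geq a\}$, noting $r(\lambda) = N_0$. A short case analysis using self-conjugacy $\lambda = \lambda^t$ yields
\[
\lambda_j \;=\; \begin{cases} j + m_j, & j \leq r(\lambda),\\ j - j_0(j), & j > r(\lambda),\end{cases}
\]
where $m_j := \max\{a \geq 0 : N_a \geq j\}$ (the $j$-th largest element of $T$) and $j_0(j) := \min\{a \geq 0 : N_a + a \geq j\}$. Since $\tilde{f}_\lambda(x) = -4\sqrt{6}\,r \cdot \lambda_{\lceil x/(4\sqrt{6}r)\rceil}$, the problem reduces to locating $\lambda_j$ for $j \asymp 1/r$.

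The core estimate is then a law of large numbers for $N_a$. A Riemann-sum computation parallel to Lemma~\ref{lem:typical size} shows that, under the scaling $q = e^{-2\pi r}$ and $a = y/(4\sqrt{6}r)$ for fixed $y > 0$,
\[
4\sqrt{6}\,r \cdot \mathbb{E}_q[N_a] \;\longrightarrow\; \frac{\sqrt{6}}{\pi}\log\!\bigl(1+e^{-\pi y/\sqrt{6}}\bigr) \qquad (r \to 0^+),
\]
while $\mathrm{Var}(N_a) = \sum_{k\geq a} p_k(1-p_k) \leq \mathbb{E}[N_a] = O(1/r)$, so by Chebyshev's inequality the relative error is $O(\sqrt{r})$. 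In the regime $j > r(\lambda)$ (which, since $r(\lambda) = N_0$ concentrates around $\tfrac{\log 2}{4\pi r}$, corresponds to $x > \sqrt{6}\log 2/\pi$), this forces $4\sqrt{6}\,r \cdot j_0(j)$ to concentrate at the unique $y \geq 0$ solving $y + \tfrac{\sqrt{6}}{\pi}\log(1+e^{-\pi y/\sqrt{6}}) = x$; a direct algebraic manipulation then verifies $x - y = -f(x)$, so that $\tilde{f}_\lambda(x) = -4\sqrt{6}\,r\,\lambda_j \to f(x)$ in probability. The complementary regime $x < \sqrt{6}\log 2/\pi$ is handled by the parallel computation using $m_j$ in place of $j_0(j)$, which after the same algebraic manipulation yields the same limit $f(x)$; continuity at the crossover $x = \sqrt{6}\log 2/\pi$ is automatic.

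The main obstacle is technical rather than conceptual: one must make the Chebyshev concentration uniform in $a$ over a growing window so that $j_0(j)$ (or $m_j$) can be localized, treat the crossover $j \approx r(\lambda)$ cleanly, and justify the Riemann-sum approximation of $\mathbb{E}_q[N_a]$ in the regime $a \asymp 1/r$ where the summand is rapidly varying. None of these is deep, but uniformity in $a$ (and ultimately in $x$ on compact subsets of $(0,\infty)$) requires attention.
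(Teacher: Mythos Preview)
Your proposal is correct and follows essentially the same route as the paper: both arguments rest on the Frobenius bijection turning $\mathfrak{M}_q$ into a product of independent Bernoullis $\xi_k$, then compute mean and variance of the tail counts $N_a=-g_\lambda(a)$ as Riemann sums and apply Chebyshev. The only cosmetic difference is in how the passage from $g_\lambda$ to $f_\lambda$ is packaged: the paper argues geometrically, rotating the graphs $90^\circ$ to obtain $\overline{f}_\lambda(x)=\overline{g}_\lambda(x)+\lfloor x\rfloor+1$ on $0\le x\le r(\lambda)$ and then invoking the $y=-x$ symmetry of self-conjugate Young diagrams to cover the remaining range, whereas you write down the two explicit formulas $\lambda_j=j+m_j$ and $\lambda_j=j-j_0(j)$ and solve for the limit algebraically in each regime; these are the same relation in different language. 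Your flagged technical points (uniformity of the Chebyshev bound in $a$, the crossover at $j\approx r(\lambda)$) are genuine, and in fact the paper is somewhat breezier about them than you are.
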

\begin{proof}
When studying the self-conjugate partitions,
it is much easier to consider the following
$$\alpha_k(\lambda):=\#\{i|\ \lambda_i-i=k\},\ \ \ \  0\leq k \leq r(\lambda),$$
and the function
\begin{align*}
	g_\lambda(x)
	:=-\sum_{k\geq x} \alpha_{k}(\lambda)
\end{align*}
instead of $m_k(\lambda)$ and $f_{\lambda}(x)$.
Hence we can regard $g_{.}(x)$ as a function on $\mathscr{P}^s$ depending on $x$.
See the Figure \ref{fig:Yd 41} as an example of $g_\lambda(x)$ for $\lambda=(5,3,2,1,1)$.
\begin{figure}[htbp]
	\begin{tikzpicture}[scale=0.7]
		\draw[help lines, color=gray!30] (-1,-4) grid (6,1);
		\draw[thick,->](-1,0)--(6,0);
		\draw[thick,->](0,-4)--(0,1);
		\draw[thick](0,-2)--(1,-2);
		\draw[thick](1,-1)--(1,-2);
		\draw[thick](1,-1)--(4,-1);
		\draw[thick](4,0)--(4,-1);
		\draw[fill,thick] (2.5,-1.5) node {$g_\lambda(x)$};
	\end{tikzpicture}
	\caption{The graph of $g_{\lambda}(x) $ for $\lambda=(5,3,2,1,1)$}
	\label{fig:Yd 41}
\end{figure}
Similar to the $\tilde{f}_\lambda(x)$,
when studying the limit $q\rightarrow1^-$,
we should consider the rescaled $g_{\lambda}(x)$ defined as
\begin{align}\label{eqn:glambda}
	\tilde{g}_\lambda(x):=
	4\sqrt{6}r \cdot g_\lambda(x/4\sqrt{6}r).
\end{align}

Recall that $r(\lambda)$ is the Frobenius length of the partition $\lambda$.
For $1\leq k \leq r(\lambda)$,
these two quantities $\alpha_k(\lambda)$ and $m_k(\lambda)$ can transform themselves into each other.
Thus we can actually use $g_{\lambda}(x)$ to study $f_{\lambda}(x)$.
Denote by $\overline{g}_{\lambda}(x)$ and $\overline{f}_{\lambda}(x)$ the functions obtained by rotating $g_{\lambda}(x)$ and $f_{\lambda}(x)$ 90 degree counterclockwise, respectively.
Then one has
\begin{align}\label{eqn:ff and gg}
	\overline{f}_{\lambda}(x)
	=\overline{g}_{\lambda}(x)+\lfloor x \rfloor+1
\end{align}
for $0\leq x \leq r(\lambda)$.
See Figure \ref{fig:Yd 53211 and 41} as an example for the partition $(5,3,2,1,1)$.
Moreover,
using the self-conjugate property of $\lambda$,
we can recover the whole $\overline{f}_{\lambda}(x)$ by rotating the graph of $\overline{f}(x), 0\leq x \leq r(\lambda)$ over the line $y=x$. 
As a consequence, we shall derive the limit behavior for $\tilde{g}_\lambda(x)$ first and recover the result of $\tilde{f}_\lambda(x)$  through $\tilde{g}_\lambda(x)$.

We first calculate the limit Frobenius length of self-conjugate partitions under the measure $\mathfrak{M}_q(\cdot)$ when $q\rightarrow1^-$,
which enables us to locate the interval where we can use $\tilde{g}_{\lambda}(x)$ to study $\tilde{f}_{\lambda}(x)$ directly.
By virtue of the generating function
\begin{align*}
	Z_s(b,q):=\sum_{\lambda\in\mathcal{P}^s}
	b^{r(\lambda)} q^{|\lambda|}
	=\prod_{i=0}^\infty \big(1+b q^{2i+1}\big),
\end{align*}
the expectation value of $r(\cdot)$ is given by
\begin{align*}
	\mathbb{E}_q\big(r(\cdot)\big)
	=\bigg(b\frac{\partial}{\partial b}Z_s(b,q) \big/Z_s(b,q)\bigg)\Big|_{b=1}
	=\sum_{i=0}^\infty \frac{q^{2i+1}}{1+q^{2i+1}}.
\end{align*}
Moreover, by the substitution $q=e^{-2\pi r}$,
\begin{align}
	4\sqrt{6} r \cdot \mathbb{E}_q\big(r(\cdot)\big)
	=&4\sqrt{6}r
	\sum_{i=0}^\infty
	\frac{e^{-2\pi(2i+1)r}}{1+e^{-2\pi(2i+1)r}}\\
	=&4\sqrt{6}r
	\sum_{i=0}^{\lfloor M/r \rfloor}
	\frac{e^{-2\pi(2i+1)r}}{1+e^{-2\pi(2i+1)r}}
	+4\sqrt{6}r
	\sum_{i=\lfloor M/r \rfloor+1}^\infty
	\frac{e^{-2\pi(2i+1)r}}{1+e^{-2\pi(2i+1)r}} \label{eqn:E of r ar two parts}
\end{align}
for any positive number $M$.
Here,
we take $M=1/r$.
Then,
for the second part in equation \eqref{eqn:E of r ar two parts},
we have
\begin{align*}
	4\sqrt{6}r
	\sum_{i=\lfloor M/r \rfloor+1}^\infty
	\frac{e^{-2\pi(2i+1)r}}{1+e^{-2\pi(2i+1)r}}
	\leq
	\frac{4\sqrt{6}r \cdot e^{-2\pi(2/r^2+1)r}}{1-e^{-4\pi r}}
	\leq \frac{2\sqrt{6} \cdot e^{-4\pi/r}}{\pi}
\end{align*}
as $r \rightarrow 0^+$. 
About the first part in equation \eqref{eqn:E of r ar two parts},
it is a Riemann sum for the following integral
\begin{align*}
	2\sqrt{6}
	\cdot \int_{0}^{ 2(\lfloor1/r^2 \rfloor+1)r}
	\frac{e^{-2\pi s}}{1+e^{-2\pi s}} ds
	=&2\sqrt{6}
	\cdot\frac{1}{2\pi}\int_{0}^{1}
	\frac{1}{1+s} ds
	+O(e^{-4\pi/r})\\
	=&\frac{\sqrt{6} \log 2}{\pi}
	+O(e^{-4\pi/r}).
\end{align*}
As a consequence,
we have
\begin{align}\label{eqn:typical F length}
	4\sqrt{6} r \cdot \mathbb{E}_q\big(r(\cdot)\big)
	=\frac{\sqrt{6} \log 2}{\pi}
	+O(r),
\end{align}
as $r\rightarrow0^+$.

On the other hand,
about the variance of the Frobenius length $r(\cdot)$,
we need to compute
\begin{align*}
	\mathbb{E}_q\big(r(\cdot)^2\big)
	=&\Bigg(\bigg(b\frac{\partial}{\partial b}\bigg)^2 Z_s(b,q) \big/Z_s(b,q)\Bigg)\Big|_{b=1}\\
	=&\mathbb{E}_q\big(r(\cdot)\big)^2
	+\sum_{i=0}^\infty \frac{q^{2i+1}}{(1+q^{2i+1})^2}.
\end{align*}
It is obvious that $\sum\limits_{i=0}^\infty \frac{q^{2i+1}}{(1+q^{2i+1})^2}=O(1/r)$.
Thus,
\begin{align*}
	&\mathfrak{M}_q\big(\big\{\lambda\big|\ |4\sqrt{6} r \cdot r(\lambda)
	-\sqrt{6}\log2/\pi|>\epsilon\big\}\big)\\
	&\quad\quad
	\leq 
	\mathbb{E}_q\big((4\sqrt{6} r \cdot r(\lambda)
	-\sqrt{6}\log2/\pi)^2\big) \cdot 1/\epsilon^2
	\leq 1/\epsilon^2 \cdot O(r),
\end{align*}
which goes to $0$ as $r\rightarrow 0^+$,
i.e., $q\rightarrow1^-$.
That is to say,
the limit rescaled Frobenius length $4\sqrt{6} r \cdot r(\cdot)$ of self-conjugate partitions, under the measure $\mathfrak{M}_q(\cdot)$ when $q\rightarrow1^-$, is $\sqrt{6} \log 2/\pi$.

From now on,
we study the function $\tilde{g}_{\lambda}(x)$ defined in the equation \eqref{eqn:glambda}.
The following probabilities are needed in our computations,
\begin{align*}
	\mathfrak{M}_q\big(\alpha_k(\cdot)=1\big)
	=\frac{q^{2k+1}}{1+q^{2k+1}},\qquad
	\mathfrak{M}_q\big(\alpha_k(\cdot)=0\big)
	=\frac{1}{1+q^{2k+1}}.
\end{align*}
Thus,
from the definition of $g_\lambda(x)$,
\begin{align*}
	\mathbb{E}_q \big( g_\cdot(x)\big)
	=-\sum_{k\geq x} \mathbb{E} \big(\alpha_{k}(\cdot)\big)
	=-\sum_{k\geq x} \frac{q^{2k+1}}{1+q^{2k+1}}.
\end{align*}
Then, the expectation value of $\tilde{g}_\cdot(x)$ is given by
\begin{align*}
	\mathbb{E}_q \big(\tilde{g}_\cdot(x)\big)
	=&-4\sqrt{6}r
	\sum_{k\geq x/4\sqrt{6}r}
	\frac{e^{-2\pi(2k+1)r}}{1+e^{-2\pi(2k+1)r}},
	\end{align*}
	which is a Riemann sum.
	Thus, the equation above is equal to
	\begin{align*}
	-2\sqrt{6}\cdot
	\int_{ x/2\sqrt{6}}^\infty
	\frac{e^{-2\pi t}}{1+e^{-2\pi t}} dt
	+O(r)
	=&-2\sqrt{6}
	\cdot\frac{1}{2\pi}\int_{0}^{e^{-\pi x/\sqrt{6}}}
	\frac{1}{1+s} ds
	+O(r)\\
	=&-\frac{\sqrt{6}}{\pi}
	\log(1+e^{-\pi x/\sqrt{6}})
	+O(r).
\end{align*}
By the similar method in analyzing the limit rescaled Frobenius length,
we can obtain the limit behavior of $\tilde{g}_{\lambda}(t)$.
More precisely,
denote
\begin{align}\label{eqn:def g}
	g(x):=\mathbb{E}_q \big(\tilde{g}_\cdot(x)\big)=-\frac{\sqrt{6}}{\pi}
	\log(1+e^{-2\pi x/2\sqrt{6}}).
\end{align}
We have,
for any fixed $x>0$ and $\epsilon>0$,
\begin{align*}
	\lim_{q\rightarrow1^-}
	\mathfrak{M}_q\big(\big\{\lambda\big|\ |\tilde{g}_{\lambda}(x)-g(x)|<\epsilon\big\}\big)=1.
\end{align*}

Now,
we use the relation between $\tilde{f}_\lambda(x)$ and $\tilde{g}_{\lambda}(x)$ to derive the limit behavior of $\tilde{f}_\lambda(x)$.
First,
recall that $\overline{f}_{\lambda}(x)$ and $\overline{g}_{\lambda}(x)$ are obtained from $f_{\lambda}(x)$ and $g_{\lambda}(x)$ by rotating 90 degree counterclockwise, respectively.
We denote $\overline{f}(x)$ and $\overline{g}(x)$ as the limit rescaled $\overline{f}_{\cdot}(x)$ and $\overline{g}_{\cdot}(x)$ under the measure $\mathfrak{M}_q(\cdot)$ when $q\rightarrow1^-$.
then the relation \eqref{eqn:ff and gg} between $\overline{f}_{\lambda}(x)$ and $\overline{g}_{\lambda}(x)$ is reduced to the following
\begin{align}\label{eqn:ff and gg2}
	\overline{f}(x)
	=\overline{g}(x)+x.
\end{align}
After rotating $90$ degree counterclockwise,
the function $g(x)$ defined in the equation \eqref{eqn:def g} becomes
\begin{align*}
	\overline{g}(x)=-\frac{\sqrt{6}}{\pi}\log\big(-1+\exp(\pi x/\sqrt{6})\big).
\end{align*}
The region, in which $\overline{g}(x)$ is related to $\overline{f}(x)$ in terms of equation \eqref{eqn:ff and gg2}, is given by $0<x\leq\frac{\sqrt{6}\log2}{\pi}$.
The upper bound $\frac{\sqrt{6}\log2}{\pi}$ is the limit rescaled Frobenius length of self-conjugate partitions given in equation \eqref{eqn:typical F length}.
One can also verify that the zero of the function $\overline{g}(x)$ is exactly $x_0=\frac{\sqrt{6}\log2}{\pi}$,
which is compatible with the limit rescaled Frobenius length.

Second,
by the relation \eqref{eqn:ff and gg2},
the function $\overline{f}(x)$ is then obtained as
\begin{align}\label{eqn:def ff}
	\overline{f}(x)=x+\overline{g}(x)
	=x-\frac{\sqrt{6}}{\pi}\log\big(-1+\exp(\pi x/\sqrt{6})\big)
\end{align}
for $0<x\leq\frac{\sqrt{6}\log2}{\pi}$.

Last, after rotating the graph of $\overline{f}(x)$ in equation \eqref{eqn:def ff} $90$ degree clockwise,
we obtain the limit rescaled graph $\tilde{f}_{\cdot}(x)$ of self-conjugate partitions under the measure $\mathfrak{M}_q(\cdot)$ when $q\rightarrow1^-$,
\begin{align}\label{eqn:f final}
	f(x)=\frac{\sqrt{6}}{\pi}\log\big(1-\exp(-\pi x/\sqrt{6})\big),
\end{align}
which is valid in the region $-\frac{\sqrt{6}\log2}{\pi}\leq f(x) <0$.
Notice that,
the graph of the function $f(x)$ in equation \eqref{eqn:f final} is invariant under rotation around the line $y=-x$.
Thus,
the graph of $f(t)$ is exactly the limit rescaled Young diagram of self-conjugate partitions in the whole region and this proposition is then proved.
\end{proof}

\begin{rmk}
	The limit shape given in equation \eqref{eqn:limit shape main} is equivalent to the limit shape of large integer partitions under the uniform measure derived in \cite{V96} (see also \cite{O01}) after rotation,
	even though the set of self-conjugate partitions is a very small part of the set of all integer partitions.
\end{rmk}

\subsection{Comparison of the leading asymptotics of the one-point function and the limit shape}
In this subsection,
we show that the leading asymptotics of the one-point function $G(t)$ matches the limit shape derived in the last subsection.

We use the notation $\Lambda$ to denote a partition depending on $q$,
which has the limit shape given by $f(x)$ in equation \eqref{eqn:limit shape main}.
It is indeed a typical partition, without scaling, under the measure  $\mathfrak{M}_q(\cdot)$ when $q\rightarrow1^-$.
More precisely,
for any fixed $q=e^{-2\pi r}$,
we pick the partition $\Lambda$ given by
\begin{align}\label{eqn:def Lambda}
	\begin{split}
	\Lambda_i
	:=&-\Big\lfloor\frac{1}{4\sqrt{6}r} f\big(4\sqrt{6}r x\big)|_{x=i}
	\Big\rfloor\\
	=&-\Big\lfloor\frac{1}{4\pi r} \log\big(1-\exp(-4\pi ri)\big)
	\Big\rfloor,
	\quad i=1,2,....
	\end{split}
\end{align}
When considering the one-point function $G(t)$,
we mainly study the function $\mathcal{T}(\cdot)$ on the set of self-conjugate partitions,
which is defined by
\begin{align*}
	\mathcal{T}(\lambda):=\sum_{i=1}^\infty t^{\lambda_i-i+1/2},
\end{align*}
for a partition $\lambda \in \mathscr{P}^s$. Thus,
we can use the notation $\mathcal{T}(\Lambda)$ to denote the action of this function $\mathcal{T}(\cdot)$ on the typical partition $\Lambda$.
The main result in this subsection is stated as follows.
\begin{cor}\label{cor:aysm}
The leading asymptotics of the $\mathcal{T}(\Lambda)$ and the one-point function $G(t)$ are the same,
i.e.,
\begin{align*}
	\lim_{q\rightarrow1^-}
	\tau \cdot G(t)|_{z\rightarrow \tau z}
	=\lim_{q\rightarrow 1^-} \tau\cdot \mathcal{T}(\Lambda)|_{z\rightarrow \tau z},
\end{align*}
where $q=e^{\pi i \tau}$ and $t=e^{2\pi i z}$.
\end{cor}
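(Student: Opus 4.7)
The plan is to evaluate the two limits in the statement independently and to show that each equals $\frac{1}{e^{\pi i z}-e^{-\pi i z}}$. Throughout, I take $q=e^{-2\pi r}$ so that $\tau=2ir$ and $r\to 0^+$ corresponds to $\tau$ tending to $0$ along the positive imaginary axis, and I restrict to $\mathrm{Re}(z)<0$ (to ensure $|t|>1$ after the substitution $z\mapsto\tau z$); the general case follows by analytic continuation.

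For the typical partition $\Lambda$ defined in \eqref{eqn:def Lambda}, the idea is to recognize $\tau\cdot\mathcal{T}(\Lambda)|_{z\to\tau z}$ as a Riemann sum. Since $\Lambda_i$ differs from $-\frac{1}{4\pi r}\log(1-e^{-4\pi ri})$ only by the floor function, the rounding contributes a factor $1+O(r)$ to $e^{-4\pi rz\Lambda_i}$ and is negligible in the limit. A direct substitution yields
\begin{align*}
t^{\Lambda_i-i+\frac{1}{2}}\big|_{z\to\tau z}\;=\;(e^{4\pi ri}-1)^{z}\,e^{-2\pi rz}\cdot\bigl(1+O(r)\bigr),
\end{align*}
and taking $4\pi r$ as the Riemann mesh-size (with the tail controlled by the rapid decay of $(e^u-1)^z$ at $\mathrm{Re}(z)<0$ and integrability at $u=0$ provided $\mathrm{Re}(z)>-1$) gives
\begin{align*}
\lim_{r\to 0^+}\tau\cdot\mathcal{T}(\Lambda)\big|_{z\to\tau z}\;=\;\frac{i}{2\pi}\int_{0}^{\infty}(e^u-1)^{z}\,du.
\end{align*}
The substitution $v=e^{-u}$ turns this into the Beta integral $B(-z,z+1)=-\pi/\sin(\pi z)$ (by the reflection formula), and a short calculation with exponentials converts $-\frac{i}{2\sin(\pi z)}$ into $\frac{1}{e^{\pi iz}-e^{-\pi iz}}$.

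For the one-point side, I would substitute $z\mapsto\tau z$ into \eqref{eqn:G as Gl} to get
\begin{align*}
\tau\cdot G(t)\big|_{z\to\tau z}\;=\;\frac{1}{2\pi iz}\exp\!\left(\sum_{\ell\in 2\mathbb{Z}_+}2\tau^\ell\bigl(G_\ell(\tau)-G_\ell^{(1,1)}(\tau)\bigr)\frac{(2\pi iz)^\ell}{\ell!}\right),
\end{align*}
and analyze each Eisenstein coefficient as $\tau\to 0^+i$. The modular transformation $G_\ell(-1/\tau)=\tau^\ell G_\ell(\tau)$ for $\ell\geq 4$ (together with its quasimodular $\ell=2$ variant, whose correction is subleading in $\tau$), combined with $-1/\tau=i/(2r)\to i\infty$ and the known constant Fourier coefficient $-B_\ell/(2\ell)$ of $G_\ell$ at the cusp $\infty$, gives $\tau^\ell G_\ell(\tau)\to -B_\ell/(2\ell)$. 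For $G_\ell^{(1,1)}$, the index $(1,1)\in(\mathbb{Z}/2)^2$ is fixed by the $S$-action on the non-trivial indices, so $G_\ell^{(1,1)}(-1/\tau)=\pm\tau^\ell G_\ell^{(1,1)}(\tau)$ up to a $\tau$-linear quasimodular correction when $\ell=2$; since $G_\ell^{(1,1)}$ has vanishing constant Fourier term, its value at $-1/\tau=i/(2r)$ decays like $e^{-\pi/(2r)}$, which forces $\tau^\ell G_\ell^{(1,1)}(\tau)\to 0$. Substituting $2\bigl(\tau^\ell G_\ell-\tau^\ell G_\ell^{(1,1)}\bigr)\to -B_\ell/\ell$ into the exponential and using the generating identity \eqref{eqn:Bernoulli} in the reciprocal form $\exp\!\bigl(-\sum_{\ell\in 2\mathbb{Z}_+}\frac{B_\ell}{\ell}\frac{(2\pi iz)^\ell}{\ell!}\bigr)=\frac{2\pi iz}{e^{\pi iz}-e^{-\pi iz}}$ yields $\tau G(t)|_{z\to\tau z}\to\frac{1}{e^{\pi iz}-e^{-\pi iz}}$, matching the other side.

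The main technical difficulty is justifying the term-by-term passage to the limit inside the exponential in the formula for $G(t)$: I need the asymptotics $\tau^\ell G_\ell(\tau)\to -B_\ell/(2\ell)$ and $\tau^\ell G_\ell^{(1,1)}(\tau)\to 0$ to be uniform enough in $\ell$ to control the full Taylor expansion in $z$. The cleanest route is to use the exponential decay of $G_\ell(-1/\tau)-(-B_\ell/(2\ell))$ and of $G_\ell^{(1,1)}(-1/\tau)$ at $-1/\tau=i/(2r)$ uniformly in $\ell$ on a fixed disc in $z$. This is essentially the input used in Appendix A.2 of \cite{EO01} which the paper mirrors; the delicate point within it is the $\ell=2$ quasimodular correction, which must be shown to vanish in the limit rather than contribute to the leading asymptotic.
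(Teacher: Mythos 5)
Your proposal is correct and follows the same overall strategy as the paper: evaluate the two limits independently and show that each equals $\frac{1}{e^{\pi i z}-e^{-\pi i z}}$. On the $\mathcal{T}(\Lambda)$ side your argument is essentially the paper's verbatim (Riemann sum with mesh $4\pi r$, floor-function error absorbed into a uniform $1+O(r)$ factor); you go one step further by evaluating $\int_0^\infty(e^u-1)^z\,du$ explicitly as the Beta integral $B(-z,z+1)=-\pi/\sin(\pi z)$, where the paper simply asserts the value $\frac{1}{2\sinh(\pi i z)}$, and you are more careful about the strip $-1<\mathrm{Re}(z)<0$ needed for convergence of the integral, with analytic continuation afterwards. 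The only genuine divergence is in how the Eisenstein asymptotics are obtained. You apply the $S$-transformation $\tau\mapsto -1/\tau$ together with the Fourier expansions at the cusp $i\infty$ (constant term $-B_\ell/(2\ell)$ for $G_\ell$, none for $G^{(1,1)}_\ell$ since the index $(1,1)$ is $S$-fixed), whereas the paper substitutes the lattice-sum representations $G_\ell(\tau)=\frac{(\ell-1)!}{(2\pi i)^\ell}\sum (c\tau+d)^{-\ell}$ and $G^{(1,1)}_\ell(\tau)=\frac{(\ell-1)!}{(\pi i)^\ell}\sum_{(c,d)\equiv(1,1)}(c\tau+d)^{-\ell}$ and passes to the limit termwise, so that only the $d=0$ terms (none of which satisfy $d\equiv 1 \bmod 2$) survive after multiplying by $\tau^\ell$. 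Both routes yield $\tau^\ell G_\ell(\tau)\to -B_\ell/(2\ell)$ and $\tau^\ell G^{(1,1)}_\ell(\tau)\to 0$; the paper's is marginally more self-contained (no transformation law needed, though the $\ell=2$ lattice sum is only conditionally convergent and deserves the same caveat as your quasimodular correction), while yours makes the error terms exponentially small in a way that directly feeds the uniformity-in-$\ell$ estimate needed to pass the limit through the infinite sum in the exponential --- a point you correctly flag as the real technical content and which the paper passes over in silence.
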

\begin{proof}
This corollary should immediately follow from Proposition \ref{prop:limit shape} since $\Lambda$ is the typical partition of the measure $\mathfrak{M}_q(\cdot)$ when $q\rightarrow1^-$.
Here, we provide a direct proof of this result.

From the definition \eqref{eqn:def Lambda} of the typical partition $\Lambda$,
the value $\mathcal{T}(\Lambda)$ is given by
\begin{align*}
	\mathcal{T}(\Lambda)
	=\sum_{i=1}^\infty t^{-\big\lfloor\frac{1}{4\pi r} \log\big(1-\exp(-4\pi ri)\big)
		\big\rfloor -i+1/2},
\end{align*}
where we use the notation $q=e^{\pi i \tau}=e^{-2\pi r}$. Since $\tau=\frac{1}{\pi i}\log q=2ir$ and $t=e^{2\pi i z}$,
\begin{align*}
	\tau\cdot \mathcal{T}(\Lambda)|_{z\rightarrow \tau z}
	=2i\cdot r
	\cdot \sum_{i=1}^\infty e^{-4\pi z \cdot r\big(-\big\lfloor\frac{1}{4\pi r} \log\big(1-\exp(-4\pi ri)\big)
		\big\rfloor -i+1/2\big)}.
\end{align*}
As $r\rightarrow0^+$,
the summation above is convergent to a Riemann sum.
So we have
\begin{align*}
	\lim_{q\rightarrow 1^-} \tau\cdot \mathcal{T}(\Lambda)|_{z\rightarrow \tau z}
	=&2i
	\cdot \int_{0}^\infty e^{z\log(1-\exp(-4\pi x))+4\pi z x}dx\\
	=&2i
	\cdot \int_{0}^\infty \big(\exp(4\pi x)-1\big)^zdx
	=\frac{1}{2\text{sinh}(\pi i z)}.
\end{align*}

On the other hand, 
to obtain the leading asymptotics of the one-point function $G(t)$,
we need to know the asymptotic behaviors of the Eisenstein series $G_\ell(\tau)$ and $G^{(1,1)}_\ell(\tau)$.
Actually,
from the following equivalent definitions of $G_\ell(\tau)$ and $G^{(1,1)}_\ell(\tau)$ (see, for example, \cite{DS}),
\begin{align*}
	&G_\ell(\tau)=\frac{(\ell-1)!}{(2\pi i)^\ell}\sum_{(c,d)\in \N^2\atop(c,d)\neq (0,0)}\frac{1}{(c\tau+d)^\ell},\\
	&G^{(1,1)}_\ell(\tau)=\frac{(\ell-1)!}{(\pi i)^\ell}\sum_{(c,d)\in \N^2\atop(c,d)\equiv (1,1)\mod 2}\frac{1}{(c\tau+d)^\ell},
\end{align*}
we have
\begin{align*}
	&\lim_{\tau\rightarrow0} \tau^\ell G_\ell(\tau) 
	=\frac{(\ell-1)!}{(2\pi i)^\ell}\zeta(\ell)=-\frac{B_\ell}{2\ell},\\
	&\lim_{\tau\rightarrow0} \tau^\ell G^{(1,1)}_\ell(\tau) 
	=0.
\end{align*}
Thus,
from the explicit formula \eqref{eqn:G as Gl} for the one-point function $G(t)$,
we have
\begin{align}
	\lim_{q\rightarrow1^-}
	\tau \cdot G(t)|_{z\rightarrow \tau z}
	&=\frac{1}{2\pi i z}\exp\bigg(-\sum\limits_{\ell\in2\Z_+}
	\frac{B_\ell}{\ell}
	\frac{(2\pi i z)^\ell}{\ell!}\bigg)\\
	&=\frac{1}{e^{\pi i z}-e^{-\pi i z}}=\frac{1}{2\text{sinh}(\pi i z)},
\end{align}
where in the second equal sign,
we have applied the identity \eqref{eqn:Bernoulli}.
This finishes the proof.
\end{proof}

\begin{ex}
	We list a few of terms of the leading asymptotics of the one-point function $G(t)$ as,
\begin{align*}
	&\lim_{q\rightarrow1^-}
	\tau \cdot G(t)|_{z\rightarrow \tau z}
	=\lim_{q\rightarrow 1^-} \tau\cdot \mathcal{T}(\Lambda)|_{z\rightarrow \tau z}\\                                 
	=&-\frac{i}{2 \pi  z}-\frac{i \pi }{12}  z-\frac{7i \pi ^3}{720}  z^3-\frac{31 i \pi ^5 }{30240}z^5
	-\frac{127 i \pi ^7 }{1209600}z^7
	-\frac{73 i \pi ^9 }{6842880}z^9\\
	&-\frac{1414477 i \pi ^{11} }{1307674368000}z^{11}
	-\frac{8191 i \pi ^{13} }{74724249600}z^{13}
	-\frac{16931177 i \pi ^{15} }{1524374691840000}z^{15}\\
	&-\frac{5749691557 i \pi ^{17} }{5109094217170944000}z^{17}
	-\frac{91546277357 i \pi ^{19} }{802857662698291200000}z^{19}
	+O\left(z^{21}\right).
\end{align*} 
\end{ex}

\section{Conflict of interest and data availability statement}
The authors state that there is no conflict of interest, and
no datasets were generated or analysed during the current study.

\setcounter{section}{0}
\setcounter{tocdepth}{2}
	

\vspace{.2in}
{\em Acknowledgements}.
The authors are grateful to the anonymous referee for many excellent suggestions.
The authors would like to thank Professor Shuai Guo for his encouragement.
The second-named author would like to thank Professor Di Yang for helpful discussions.
The second-named author is supported by the NSFC grant (No. 12401079).
\vspace{.2in}

\renewcommand{\refname}{Reference}
\bibliographystyle{plain}
\bibliography{reference}

@book {And98,
	AUTHOR = {Andrews, G. E.},
	TITLE = {The theory of partitions},
	SERIES = {Cambridge Mathematical Library},
	NOTE = {Reprint of the 1976 original},
	PUBLISHER = {Cambridge University Press, Cambridge},
	YEAR = {1998},
	PAGES = {xvi+255},
	ISBN = {0-521-63766-X},
	MRCLASS = {11P81 (05A17 11P82 11P83)},
	MRNUMBER = {1634067},
}

@article {BDJ,
	AUTHOR = {Baik, J. and Deift, P. and Johansson, K.},
	TITLE = {On the distribution of the length of the longest increasing
	subsequence of random permutations},
	JOURNAL = {J. Amer. Math. Soc.},
	FJOURNAL = {Journal of the American Mathematical Society},
	VOLUME = {12},
	YEAR = {1999},
	NUMBER = {4},
	PAGES = {1119--1178},
	ISSN = {0894-0347,1088-6834},
	MRCLASS = {05A05 (33D45 45E05 60C05)},
	MRNUMBER = {1682248},
	MRREVIEWER = {David\ J.\ Aldous},
	DOI = {10.1090/S0894-0347-99-00307-0},
	URL = {https://doi.org/10.1090/S0894-0347-99-00307-0},
}

@article {BO,
	AUTHOR = {Bloch, S. and Okounkov, A.},
	TITLE = {The character of the infinite wedge representation},
	JOURNAL = {Adv. Math.},
	FJOURNAL = {Advances in Mathematics},
	VOLUME = {149},
	YEAR = {2000},
	NUMBER = {1},
	PAGES = {1--60},
	ISSN = {0001-8708,1090-2082},
	MRCLASS = {11F22 (17B66)},
	MRNUMBER = {1742353},
	MRREVIEWER = {Mirko\ Primc},
	DOI = {10.1006/aima.1999.1845},
	URL = {https://doi.org/10.1006/aima.1999.1845},
}

@article {BOO,
	AUTHOR = {Borodin, A. and Okounkov, A. and Olshanski, G.},
	TITLE = {Asymptotics of {P}lancherel measures for symmetric groups},
	JOURNAL = {J. Amer. Math. Soc.},
	FJOURNAL = {Journal of the American Mathematical Society},
	VOLUME = {13},
	YEAR = {2000},
	NUMBER = {3},
	PAGES = {481--515},
	ISSN = {0894-0347,1088-6834},
	MRCLASS = {05E10 (33C10 60C05)},
	MRNUMBER = {1758751},
	MRREVIEWER = {A.\ N.\ Philippou},
	DOI = {10.1090/S0894-0347-00-00337-4},
	URL = {https://doi.org/10.1090/S0894-0347-00-00337-4},
}

@article{CMZ18,
	title={Quasimodularity and large genus limits of {Siegel-Veech} constants},
	author={Chen, D. and M{\"o}ller, M. and Zagier, D.},
	journal={Journal of the American Mathematical Society},
	volume={31},
	number={4},
	pages={1059--1163},
	year={2018}
}

@article {DJM,
	AUTHOR={Date, E. and Jimbo, M. and Miwa, T.},
	TITLE={Differential Equations, Symmetries and Infinite Dimensional Algebras},
	JOURNAL={Cambridge University Press},
	YEAR={2000},}

@book {DS,
	AUTHOR = {Diamond, F. and Shurman, J.},
	TITLE = {A first course in modular forms},
	SERIES = {Graduate Texts in Mathematics},
	VOLUME = {228},
	PUBLISHER = {Springer-Verlag, New York},
	YEAR = {2005},
	PAGES = {xvi+436},
	ISBN = {0-387-23229-X},
	MRCLASS = {11Fxx},
	MRNUMBER = {2112196},
	MRREVIEWER = {Henri\ Darmon},
}

@incollection{Dij95,
	title={Mirror symmetry and elliptic curves},
	author={Dijkgraaf, R.},
	booktitle={The moduli space of curves},
	pages={149--163},
	year={1995},
	publisher={Springer}
}

@article{Eng21,
	title={Hurwitz theory of elliptic orbifolds, {I}},
	author={Engel, P.},
	journal={Geometry \& Topology},
	volume={25},
	number={1},
	pages={229--274},
	year={2021},
	publisher={Mathematical Sciences Publishers}
}

@article{EL,
	title={The distribution of the number of summands in the partitions of a positive integer},
	author={P. Erd{\"o}s and J. Lehner},
	journal={Duke Mathematical Journal},
	year={1941},
	volume={8},
	pages={335-345},
	url={https://api.semanticscholar.org/CorpusID:11329746}
}

@article {EO01,
	AUTHOR = {Eskin, A. and Okounkov, A.},
	TITLE = {Asymptotics of numbers of branched coverings of a torus and
	volumes of moduli spaces of holomorphic differentials},
	JOURNAL = {Invent. Math.},
	FJOURNAL = {Inventiones Mathematicae},
	VOLUME = {145},
	YEAR = {2001},
	NUMBER = {1},
	PAGES = {59--103},
	ISSN = {0020-9910,1432-1297},
	MRCLASS = {32G15 (05A17 11F23 37A25 57M12)},
	MRNUMBER = {1839286},
	MRREVIEWER = {Christopher\ M.\ Judge},
	DOI = {10.1007/s002220100142},
	URL = {https://doi.org/10.1007/s002220100142},
}

@article {EOP,
	AUTHOR = {Eskin, A. and Okounkov, A. and Pandharipande, R.},
	TITLE = {The theta characteristic of a branched covering},
	JOURNAL = {Adv. Math.},
	FJOURNAL = {Advances in Mathematics},
	VOLUME = {217},
	YEAR = {2008},
	NUMBER = {3},
	PAGES = {873--888},
	ISSN = {0001-8708},
	MRCLASS = {14H30 (11F23 14N10 57M12)},
	MRNUMBER = {2383889},
	MRREVIEWER = {John B. Little},
	DOI = {10.1016/j.aim.2006.08.001},
	URL = {https://doi.org/10.1016/j.aim.2006.08.001},
}

@article {FVY,
	AUTHOR = {Freiman, G. A. and Vershik, A. M. and Yakubovich, Yu. V.},
	TITLE = {A local limit theorem for random partitions of natural
	numbers},
	JOURNAL = {Teor. Veroyatnost. i Primenen.},
	FJOURNAL = {Rossi\u iskaya Akademiya Nauk. Teoriya Veroyatnoste\u i{} i ee
	Primeneniya},
	VOLUME = {44},
	YEAR = {1999},
	NUMBER = {3},
	PAGES = {506--525},
	ISSN = {0040-361X},
	MRCLASS = {11P82 (05A17 11K99 60C05 60F05)},
	MRNUMBER = {1805818},
	MRREVIEWER = {Eugenijus\ Manstavi\v cius},
	DOI = {10.1137/S0040585X97977719},
	URL = {https://doi.org/10.1137/S0040585X97977719},
}

@article{F93,
	title={The structure of random partitions of large integers},
	author={Fristedt, B.},
	journal={Transactions of the American Mathematical Society},
	volume={337},
	number={2},
	pages={703--735},
	year={1993}
}

@article {GM20,
	AUTHOR = {Goujard, E. and M\"{o}ller, M.},
	TITLE = {Counting {F}eynman-like graphs: quasimodularity and
	{S}iegel-{V}eech weight},
	JOURNAL = {J. Eur. Math. Soc. (JEMS)},
	FJOURNAL = {Journal of the European Mathematical Society (JEMS)},
	VOLUME = {22},
	YEAR = {2020},
	NUMBER = {2},
	PAGES = {365--412},
	ISSN = {1435-9855},
	MRCLASS = {05E10 (11F11 14J33 14T15 32G15)},
	MRNUMBER = {4049220},
	MRREVIEWER = {Ayberk Zeytin},
	DOI = {10.4171/jems/924},
	URL = {https://doi.org/10.4171/jems/924},
}

@article {HIL,
	AUTHOR = {Hahn, M.A. and van Ittersum, J.-W. M. and Leid,
	F.},
	TITLE = {Triply mixed coverings of arbitrary base curves:
	quasimodularity, quantum curves and a mysterious topological
	recursion},
	JOURNAL = {Ann. Inst. Henri Poincar\'{e} D},
	FJOURNAL = {Annales de l'Institut Henri Poincar\'{e} D. Combinatorics, Physics
	and their Interactions},
	VOLUME = {9},
	YEAR = {2022},
	NUMBER = {2},
	PAGES = {239--296},
	ISSN = {2308-5827},
	MRCLASS = {14N10 (11F11 14T20 81S10)},
	MRNUMBER = {4450015},
	MRREVIEWER = {Alessandro Giacchetto},
	DOI = {10.4171/aihpd/118},
	URL = {https://doi.org/10.4171/aihpd/118},
}

@article{I21,
	title={A symmetric {Bloch--Okounkov} theorem},
	author={Ittersum, J.-W. M. van},
	journal={Research in the Mathematical Sciences},
	volume={8},
	number={2},
	pages={19},
	year={2021},
	publisher={Springer}
}

@incollection{KZ95,
	title={A generalized {J}acobi theta function and quasimodular forms},
	author={Kaneko, M. and Zagier, D.},
	booktitle={The moduli space of curves},
	pages={165--172},
	year={1995},
	publisher={Springer}
}

@article{LQW,
	title={Hilbert schemes, integrable hierarchies, and {Gromov-Witten} theory},
	author={W.-P. Li and Z. Qin and W. Wang},
	journal={International Mathematics Research Notices},
	year={2003},
	volume={2004},
	pages={2085-2104},
	url={https://api.semanticscholar.org/CorpusID:8358958}
}

@article{Mac,
	title={Symmetric Functions and {Hall} Polynomials},
	author={Macdonald, I. G.},
	journal={Clarendon Press},
	volume={},
	number={},
	year={1995}}

@article {NO06,
	AUTHOR = {Nekrasov, N. A. and Okounkov, A.},
	TITLE = {Seiberg-{W}itten theory and random partitions},
	Journal = {The unity of mathematics, Progr. Math.},
	SERIES = {Progr. Math.},
	VOLUME = {244},
	PAGES = {525--596},
	PUBLISHER = {Birkh\"{a}user Boston, Boston, MA},
	YEAR = {2006},}

@article {O01,
	AUTHOR = {Okounkov, A.},
	TITLE = {Infinite wedge and random partitions},
	JOURNAL = {Selecta Math. (N.S.)},
	FJOURNAL = {Selecta Mathematica. New Series},
	VOLUME = {7},
	YEAR = {2001},
	NUMBER = {1},
	PAGES = {57--81},
	ISSN = {1022-1824}}

@incollection {O05,
	AUTHOR = {Okounkov, A.},
	TITLE = {The uses of random partitions},
	BOOKTITLE = {X{IV}th {I}nternational {C}ongress on {M}athematical
	{P}hysics},
	PAGES = {379--403},
	PUBLISHER = {World Sci. Publ., Hackensack, NJ},
	YEAR = {2005},
	ISBN = {981-256-201-X},
	MRCLASS = {05A17 (14N35 53D20 81T45 82-02 82B31 82B41)},
	MRNUMBER = {2227852},
	MRREVIEWER = {Richard\ Kenyon},
}

@article{OP06,
	title={{Gromov-Witten} theory, {Hurwitz} theory, and completed cycles},
	author={Okounkov, A. and Pandharipande, R.},
	journal={Annals of mathematics},
	pages={517--560},
	year={2006},
	publisher={JSTOR}
}

@article {TW07,
	AUTHOR = {Taylor, D. G. and Wang, W.},
	TITLE = {The {B}loch-{O}kounkov correlation functions of classical
	type},
	JOURNAL = {Comm. Math. Phys.},
	FJOURNAL = {Communications in Mathematical Physics},
	VOLUME = {276},
	YEAR = {2007},
	NUMBER = {2},
	PAGES = {473--508},
	ISSN = {0010-3616},
	MRCLASS = {17B65 (17B10 81R10)},
	MRNUMBER = {2346397},
	MRREVIEWER = {Dra\v{z}en Adamovi\'{c}},
	DOI = {10.1007/s00220-007-0344-x},
	URL = {https://doi.org/10.1007/s00220-007-0344-x},
}

@article {TW,
	AUTHOR = {Tracy, C. A. and Widom, H.},
	TITLE = {Level-spacing distributions and the {A}iry kernel},
	JOURNAL = {Comm. Math. Phys.},
	FJOURNAL = {Communications in Mathematical Physics},
	VOLUME = {159},
	YEAR = {1994},
	NUMBER = {1},
	PAGES = {151--174},
	ISSN = {0010-3616,1432-0916},
	MRCLASS = {82B05 (33C90 47A75 47G10 47N55 82B10)},
	MRNUMBER = {1257246},
	MRREVIEWER = {Estelle\ L.\ Basor},
	URL = {http://projecteuclid.org/euclid.cmp/1104254495},
}

@article {V96,
	AUTHOR = {Vershik, A. M.},
	TITLE = {Statistical mechanics of combinatorial partitions, and their
	limit configurations},
	JOURNAL = {Funktsional. Anal. i Prilozhen.},
	FJOURNAL = {Funktsional\cprime ny\u i{} Analiz i ego Prilozheniya},
	VOLUME = {30},
	YEAR = {1996},
	NUMBER = {2},
	PAGES = {19--39, 96},
	ISSN = {0374-1990,2305-2899},
	MRCLASS = {82B05 (05A17 60C05)},
	MRNUMBER = {1402079},
	DOI = {10.1007/BF02509449},
	URL = {https://doi.org/10.1007/BF02509449},
}

@article {W04,
	AUTHOR = {Wang, W.},
	TITLE = {Correlation functions of strict partitions and twisted {F}ock
	spaces},
	JOURNAL = {Transform. Groups},
	FJOURNAL = {Transformation Groups},
	VOLUME = {9},
	YEAR = {2004},
	NUMBER = {1},
	PAGES = {89--101},
	ISSN = {1083-4362,1531-586X},
	MRCLASS = {17B69 (14N35 17B37 17B68 81R10)},
	MRNUMBER = {2130605},
	MRREVIEWER = {Hai\ Sheng\ Li},
	DOI = {10.1007/s00031-004-7006-2},
	URL = {https://doi.org/10.1007/s00031-004-7006-2},
}

@article {Y23,
	AUTHOR = {Yang, C.},
	TITLE = {On two families of {N}ekrasov-{O}kounkov type formulas},
	JOURNAL = {To appear in Izv. Math},
	YEAR = {2023},
	NUMBER = {},
	PAGES = {},
	ISSN = {}}

@article {Y24,
	AUTHOR = {Yang, C.},
	TITLE = {Hook-lengths, symplectic/orthogonal contents and
	{A}mdeberhan's conjectures},
	JOURNAL = {Proc. Lond. Math. Soc. (3)},
	FJOURNAL = {Proceedings of the London Mathematical Society. Third Series},
	VOLUME = {130},
	YEAR = {2025},
	NUMBER = {2},
	PAGES = {Paper No. e70026, 20}}

@article{Z16,
	title={Partitions, quasimodular forms, and the {Bloch--Okounkov} theorem},
	author={Zagier, D.},
	journal={The Ramanujan Journal},
	volume={41},
	pages={345--368},
	year={2016},
	publisher={Springer}
}

@article {Zhu96,
	AUTHOR = {Zhu, Y.},
	TITLE = {Modular invariance of characters of vertex operator algebras},
	JOURNAL = {J. Amer. Math. Soc.},
	FJOURNAL = {Journal of the American Mathematical Society},
	VOLUME = {9},
	YEAR = {1996},
	NUMBER = {1},
	PAGES = {237--302},
	ISSN = {0894-0347},
	MRCLASS = {17B69 (11F22 17B68)},
	MRNUMBER = {1317233},
	MRREVIEWER = {Chong Ying Dong},
	DOI = {10.1090/S0894-0347-96-00182-8},
	URL = {https://doi.org/10.1090/S0894-0347-96-00182-8},
}
\vspace{30pt} \noindent
\end{document}